\newcommand{\cmark}{\ding{51}}%
\newcommand{\xmark}{\ding{55}}%
\newtheorem{theorem}{Theorem}
\newtheorem{corollary}{Corollary}
\newtheorem{lemma}{Lemma}
\newtheorem{definition}{Definition}
\newtheorem{claim}{Claim}
\newcommand{\fixed}{\texttt{Fixed}\xspace}
\newcommand{\fixedp}{\ensuremath{\texttt{Fixed}^+}\xspace}
\newcommand{\fixedzo}{\ensuremath{\texttt{Fixed}^{\{0,1\}}}\xspace}
\newcommand{\fixedzm}{\ensuremath{\texttt{Fixed}^{\{0,-1\}}}\xspace}
\newcommand{\rand}{\texttt{Random}\xspace}
\newcommand{\randp}{\ensuremath{\texttt{Random}^+}\xspace}
\newcommand{\calth}{\ensuremath{\mathcal{T}_h}\xspace}
\newcommand{\caltl}{\ensuremath{\mathcal{T}_l}\xspace}
\newcommand{\reals}{\ensuremath{\mathbb{R}}\xspace}
\newcommand{\ocal}{\ensuremath{\mathcal{O}}\xspace}
\newcommand{\dist}{\ensuremath{\texttt{dist}}\xspace}
\newcommand{\ybf}{\ensuremath{\mathbf{y}}\xspace}
\newcommand{\ybfs}{\ensuremath{\mathbf{y}^*}\xspace}
\newcommand{\ybfp}{\ensuremath{\mathbf{y}'}\xspace}
\newcommand{\opt}{\textsc{OPT}\xspace}
\newcommand{\util}{\textsc{Egalitarian}\xspace}
\newcommand{\welfare}{\textsc{Utilitarian}\xspace}
\newcommand{\happy}{\textsc{Happiness}\xspace}
\newcommand{\yj}{\ensuremath{\ybf_j}\xspace}
\newcommand{\tij}{\ensuremath{t_{ij}}\xspace}
\newcommand{\ti}{\ensuremath{t_{i}}\xspace}
\newcounter{alg}
\newcommand{\alg}[1]{\refstepcounter{alg}\label{#1}}
\definecolor{darkgreen}{rgb}{0,0.6,0}
\newcommand{\kibitz}[2]{\ifnum\Comments=1{\color{#1}{#2}}\fi}
\newcommand{\lefteris}[1]{\kibitz{blue}{[LEFTERIS: #1]}}
\newcounter{NoTableEntry}
\renewcommand*{\theNoTableEntry}{NTE-\the\value{NoTableEntry}}
\renewcommand\hl[1]{#1} 
\begin{document}

\title{\vspace{-0.5cm} Heterogeneous Facility Location Games}
\author{\vspace{-1.0cm}Eleftherios Anastasiadis 
\thanks{Imperial College London. Email: 
\texttt{e.anastasiadis@imperial.ac.uk}}
\and Argyrios Deligkas
\thanks{Royal Holloway University of London. Email: 
\texttt{argyrios.deligkas@rhul.ac.uk}}}
\date{}
\maketitle

\begin{abstract}
We study heterogeneous $k$-facility location games. In this model, there are $k$ 
facilities where each facility serves a different purpose. 
Thus, the preferences of the agents over the facilities can vary arbitrarily. 
Our goal is to design strategy-proof mechanisms that place the facilities in a
way to maximize the minimum utility among the agents.
For $k=1$, if the agents' locations are known, we prove that the mechanism that 
places the facility on an optimal location is strategy-proof.
For $k \geq 2$, we prove that there is no optimal strategy-proof mechanism, 
deterministic or randomized, even when $k=2$ there are only two agents with 
known locations and the facilities have to be placed on a line segment. 
We derive inapproximability bounds for deterministic and randomized strategy 
proof mechanisms. 
Finally, we focus on the line segment and provide strategy-proof mechanisms that achieve 
constant approximation. All of our mechanisms are simple and communication efficient.
As a byproduct, we show that some of our mechanisms can be used to achieve constant 
factor approximations for other objectives as the social welfare and the happiness. 
\end{abstract}

\section{Introduction}

Facility location games lie in the intersection of AI, game theory, and social choice theory. The
basic version of the problem has been widely studied in the literature \cite{Mo80, barbera1994characterization, schummer2002strategy}. 
In this setting, a central planner has to locate a facility on a real line based 
on the \emph{reported} locations of selfish agents who want to be as close as possible
to the facility. The goal of the planner is to locate the facility in a way that the 
sum of the utilities of the agents is maximized.~\footnote{In~\cite{PT09} the objective 
was to minimize the social cost.} 
However, the agents can \emph{misreport} their locations in order to manipulate 
the planner and increase their utility. 
One main objective of the planner is to design procedures to locate the facility, 
called \emph{mechanisms}, that incentivize the agents to report their true 
locations, i.e., the mechanisms are \emph{strategy-proof}. 

When monetary payments are not allowed, that is the planner cannot pay the agents 
or demand payments from them; it is not always possible to design mechanisms 
that implement an optimal solution and remain strategy-proof. 
Thus, the goal is to design mechanisms that \emph{approximately} maximize an 
objective function under the constraint that they are strategy-proof. 
The term \emph{approximate mechanism design without money}, introduced 
by~\citeauthor{PT09}, is usually deployed for problems like the one described above.
\citeauthor{PT09} studied \emph{homogeneous} facility location games, where 
one, or two, \emph{identical} facilities had to be placed on a real line and every
agent wanted to be as close as possible to any of them. In this setting, the 
agents were reporting to the planner a point on the line and the objectives 
studied were the maximization of the \emph{social welfare} or the 
\emph{minimum utility} among the agents.

In many real-life scenarios, though, both facilities and the preferences of the agents 
are \emph{heterogeneous}; every facility serves a different need and every agent 
has potentially different needs from the others. 
Consider, for example, the case where the government is planning to build a school 
and a factory. Citizens' preferences for these facilities might significantly 
differentiate. Those who work at the factory and also have children that go to school
wish both facilities to be built close to their homes. Citizens without children might 
want the school to be built far because of the noise. Finally, those who do not 
work at the factory prefer its location to be far from their home to avoid  
emitted pollution.

The example above shows that an agent might want to be \emph{close} to a facility, 
be \emph{away} from a facility, or be \emph{indifferent} about its presence. 
\citeauthor{FJ15}~\cite{FJ15} studied 1-facility heterogeneous 
games where each agent reported his preferred location on the line, while it was 
known to the planner, whether he wanted to be close to, or away from, the facility.
\citeauthor{ZL15} \cite{ZL15} extended the model of~\cite{FJ15}
for heterogeneous 2-facility games and studied the social utility objective for 
several different scenarios of the information the planner knows.
%
%
\citeauthor{SV16}~\cite{SV16} studied heterogeneous 2-facility games
on discrete networks. In their setting, each agent is located on a node of a 
graph and either is indifferent or wants to be close to each facility and the
planner knows the location of every agent but not their preferences for the 
facilities.
 
In this paper, we extend the aforementioned models and study heterogeneous 
$k$-facility location games; simply $k$-facility games. 
Our main focus is to maximize the minimum utility among all the agents, termed
\util. As a byproduct, we derive results for the social welfare, termed
\welfare, and the recently proposed minimum \emph{happiness} objective,
termed \happy.
\happy, which is reminiscent of the proportionality notion in resource allocation problems,  
is a fairness criterion for facility location problems introduced in~\cite{MLYZ}. 
The happiness of an agent is the ratio between the utility he gets under the 
locations of the facilities over the maximum utility the agent could get under any location.
To the best of our knowledge, there is no prior work on this model. We note that 
while our model is a natural extension of the aforementioned models 
almost none of those results apply in our case.

\subsection{Our contributions}
We study several questions regarding heterogeneous $k$-facility games; our results are summarized in Table \ref{tab:table-of-results}. 
Firstly, we focus on the case where there is only one facility to be located.
Feigenbaum and Sethuraman~\cite{FJ15} have proven that there is no deterministic 
strategy-proof mechanism with bounded approximation for \util for this case where
the preferences of the agents are known, and their locations are unknown. We study 
the complementary case where the locations are known and the preferences are not
known to the planner. We prove that in this case, the mechanism that places the 
facility on an optimal location for the reported preferences of the agents is 
strategy-proof. In fact, our result is much stronger since it holds for any 
combination of the following relaxations. 
\begin{itemize}
    \item The utility function of every agent can be any function that it is monotone
    with respect the distance between the location of the agent and the location of the
    facility. Thus, if an agent wants to be close to the facility, his utility decreases
    with the distance, and if he wants to be away from the facility, his utility increases.
    \item The domain $D_i$ of every agent's possible locations and the domain $S$ of allowed 
    locations for the facility can be any subset of $\reals^d$. In addition, it can be the case 
    that $S \cap D_i = \emptyset$ for every agent $i$.
\end{itemize}

Next, we focus on the \util objective. We prove that there is no optimal deterministic 
strategy-proof or strategy-proof in expectation mechanism for $k$-facility games 
even for instances with $k=2$, two agents, and known locations for the agents. 
We complement these results by deriving inapproximability bounds for deterministic 
and randomized strategy-proof mechanisms. 
The techniques we use are fundamentally different from~\cite{SV16},
since in our model, the facilities can be located anywhere on the segment without 
any constraint, making the analysis more complex.

Then, we focus on $2$-facility games and we propose strategy-proof mechanisms 
that achieve constant approximation ratio for the \util objective even both locations and preferences are private information.
All of our mechanisms are \emph{simple} and require \emph{limited communication}. By limited communication, we mean that our mechanisms require only a constant number of bit-information from every agent.
To the best of our knowledge, this is the first paper to study the communication complexity 
on facility location problems and how communication affects approximation.
We propose two deterministic and two randomized mechanisms. The first deterministic
mechanism, called \fixed, requires zero communication between the planner and the 
agents. On any instance, \fixed locates the facilities symmetrically away from the middle 
of the segment without requiring any information from the agents. \hl{Although this mechanism 
might seem naive and probably not useful in practice, it achieves constant approximation hence it can be seen as the absolute benchmark for any mechanism.} Furthermore, we prove that \fixed 
is \emph{optimal} when no communication is allowed. No communication means that the agents do not transmit any bits to the planner before the locations for the facilities are decided, or equivalently that the facilities have to be located without getting any information from the agents. The second mechanism, termed 
\fixedp, utilizes the intuition gained from \fixed and chooses between five different 
location-combinations for the facilities and locates the facilities in one of them by using 
the information it got from the agents. Furthermore, every agent has to communicate 
only 5 bits of information to the agent.
Our first randomized mechanism, termed \rand, places with half probability both facilities 
on the beginning of the segment and with half probability both facilities on the end of the segment.
\rand seems naive, but it achieves $\frac{1}{2}$-approximation, it is universally
strategy-proof and requires 
zero communication. \hl{Again, this result can be seen as the benchmark for any randomized mechanism.} The second randomized mechanism, \randp, combines the ideas of 
\rand and \fixedp, it is strategy-proof in expectation and improves upon \rand by requiring again only 5 bits of information per agent.

For the special case where agents' locations are known to the mechanism and all the 
agents are indifferent or want to be close to the facilities, we show how we can 
utilize the optimal mechanism for the 1-facility game and get a 
$\frac{3}{4}$-approximate strategy-proof mechanism for \util when $k=2$.


As a byproduct, we show that \fixed and \rand  achieve the same approximation
guarantee for \happy and \welfare. Thus, we establish lower bounds
that were not known before and complement the results of~\cite{ZL15}. 

\begin{table}[h!]
    \centering
     \begin{tabular}{|c||l|l|c|c|c||c|c|}
        \hline
        \textbf{\#Facilities} & \textbf{Bound} & \textbf{Mechanism} & \textbf{Bits} & \textbf{Preferences} & \textbf{Theorem} & {\bf Loc.} & {\bf Prefs.} \\ \hline \hline
        1 & 1 & OPT-1 & - & $\{-1,0,1\}$ & \ref{thm:one-true} & \cmark & \xmark \\ \hline \hline
        2  & $0.851^{\bf{*}}$ & - & - & $\{-1,0,1\}$ & \ref{alg-rand-inapprox} & \xmark & \cmark \\ \hline \hline
        2 & $0.292$ & \fixed  & 0 & $\{-1,0,1\}$ & \ref{thm:mech2} & \xmark & \xmark \\ \hline
        2 & $0.366$ & \fixedp  & 5 & $\{-1,0,1\}$ & \ref{thm:fixedp-appx}, \ref{thm:fixedp-cc} & \xmark & \xmark \\ \hline
        2 & $0.5$   & \rand & 0  & $\{-1,0,1\}$ & \ref{thm:rand} & \xmark & \xmark \\ \hline
        2 & $0.538$ & \randp & 5 & $\{-1,0,1\}$ & \ref{thm:randp-apx}, \ref{thm:rand-cc} & \xmark & \xmark \\ \hline
        2 & $0.75$ & $OPT^2$ & 0 & $\{0,1\}$ & \ref{thm:opt2} & \cmark & \xmark \\ \hline \hline
        k & $0.5$ & \fixedzo & 0 & $\{0,1\}$ & \ref{thm:fzo} & \xmark & \xmark \\ \hline
        k & $\frac{\lfloor \frac{k}{2}\rfloor}{k}$ & \fixedzm & 0 & $\{-1,0\}$ & \ref{thm:fzm} & \xmark & \xmark \\ \hline
    \end{tabular}
    \caption{Snapshot of our results. The bound $0.851^*$ is an inapproximability result. The column ``bits'' corresponds to the bits per agent each of our mechanism needs. The preferences show the allowed preferences of the agents. An agent has preference $-1$ if he wants to be away from a facility; 1 if he wants to be close to a facility; and 0 if he is indifferent about the facility. The last two columns correspond to the information that is publicly available: ``Loc.'' corresponds to the locations of the agents while ``Prefs.'' corresponds to the preferences of the agents. Signs \cmark and \xmark~indicate whether this information is public or private respectively. \label{tab:table-of-results}}
\end{table}


\subsection{Further related work}
There is a long line of work on homogeneous facility location 
games~\cite{AFPT10,DFMN12,FT10,FT14,LM+19,Lu10,Lu09,M19,ZL14}. 
Different objectives and different utility functions have been 
studied as well. In~\cite{FSY} the objective was the sum of 
$L_p$ norms of agent's utilities, while in~\cite{FW} it was the sum of least 
squares. \cite{FLZZ} introduced double-peaked utility functions. 
The obnoxious facility game on the line, where every agent wants to be 
away from the facilities, was introduced in~\cite{CWZ11} and later the model was 
extended for trees and cycles in~\cite{CWZ13}. In~\cite{YMZ}, the objective of least 
squares for obnoxious agents, was studied. The maximum envy was recently introduced 
as an objective for facility location games in \cite{CFT16}. In
that paper as well as in \cite{GT17}, the authors studied the approximation of
mechanisms according to additive errors. False-name proof mechanisms 
for the location of two identical facilities were studied in \cite{STY16} while
\cite{Th10} gave a characterization of strategy-proof and group strategy-proof 
mechanisms in metric networks for 1-facility games with private locations of the agents.
Since the conference version of this paper~\cite{AD18}, other papers on 
heterogeneous facility location games have appeared. In~\cite{DLLX}, the authors 
studied heterogeneous 2-facility games on a line segment, under the extra constraint
where the locations between the two facilities have to be at least a certain distance.
In~\cite{KVZ}, the authors studied heterogeneous facility location games where the agents
were located on a line but the facility could be placed in a region on the plane.
\hl{Finally,} \cite{li2019strategyproof} \hl{studies a closely related model for 2-facilities 
heterogeneous games under the social welfare objective. In their model, there are two 
facilities, $f_1$ and $f_2$, to be located on the line. Every agent has as private information
his location, and a subset of ``acceptable'' facilities. The} {\em cost} \hl{an agent has, is the minimum distance between his location and the closest acceptable facility. There the objective is to choose locations for the facilities such that the sum of the costs of the agents is minimized.}

Simple mechanisms received a lot of attention lately; see \cite{GN17} for example and 
the references therein for simple auctions. Informally, a simple mechanism is
easy to implement and allows the agents to ``easily'' deduce the strategy-proofness of 
the mechanism. One way to capture simplicity is to use \emph{verifiably truthful} 
mechanisms \cite{BraP15}, where agents can check whether a mechanism is strategy-proof 
by using some, possibly exponential, algorithm. Simple mechanisms were formalized in 
\cite{LiOSP} by introducing \emph{obviously} strategy-proof mechanisms.
\cite{FV17} analysed this type of mechanisms for homogeneous 1-facility games. 

After a long history in theoretical computer science~\cite{KN}, communication complexity problems
have been studied in auction settings~\cite{BNS07} and in facility location games~\cite{feldman2016voting} but with ordinal preferences of the agents as input to the mechanisms.
Communication complexity has also been studied in other more general mechanism design problems
~\cite{MT14,Zandt}. To the best of our knowledge, no one studied the communication complexity
of facility location games on the line with cardinal utilities.

\section{Model}

In a \emph{$k$-facility game}, there is a set $N = \{1, \ldots, n\}$ 
of agents located in $\reals^d$ and a set of $k$ distinct facilities 
$F = \{1, \ldots, k\}$ that need to be placed in $S \subseteq \reals^d$. 
Each agent $i$ is associated with a location $x_i \in \reals^d$ and a vector 
$t_i \in \{-1,0,1\}^k$ that represents his preferences for the facilities. 

If agent $i$ wants to be \emph{far} from facility $j$, then $\tij=-1$; if he
is \emph{indifferent}, then $\tij = 0$; if he wants to be \emph{close} to 
$j$, then $\tij = 1$.
We will use $\ybf = (\ybf_1, \ldots, \ybf_k)$ to denote the locations of
the facilities and $s=(s_1,\ldots, s_n)$ to denote the profile of the agents, 
i.e. their declared tuples $s_i=(x_i,t_i), \forall i \in N$. 
A vector $s_{-i}=(s_1, \ldots, s_{i-1}, s_{i+1}, \ldots, s_n)$ is the vector of 
tuples excluding $s_i$, thus we can denote a profile as $(s_i,s_{-i})$. 

The utility that agent $i$ gets from facility $j$, denoted as $u_{ij}(x_i,t_i,\ybf)$, depends 
on the distance $\dist(x_i, \ybf_j)$ between the location of the agent and the
location of the facility $j$, and on the agent's preference $t_{ij}$ for that facility. 
We assume that $u_{ij}$ follows the rules below:
\begin{itemize}
\item If $t_{ij} = -1$, then $u_{ij}(x_i,t_i,\ybf)$ is strictly increasing with $\dist(x_i,\ybf_j)$.
\item If $t_{ij} = 0$, then $u_{ij}(x_i,t_i,\ybf)$ is a constant independent of $\dist(x_i,\ybf_j)$.
\item If $t_{ij} = 1$, then $u_{ij}(x_i,t_i,\ybf)$ is strictly decreasing with $\dist(x_i,\ybf_j)$.
\end{itemize}
The total utility agent $i$ gets under \ybf is defined as the sum of the utilities 
he gets for each of the facilities, i.e. 
$u_i(x_i, t_i,\ybf)=\sum_{j \in [k]} u_{ij}(x_i,t_i,\ybf_j)$.
We consider three different objective functions:
\util, defined as $\max_\ybf \min_i u_i(x_i, \ti, \ybf)$; \welfare defined
as $\max_\ybf \sum_i u_i(x_i,t_i,\ybf)$; and \happy defined as 
$\max_\ybf \min_i\frac{u_i(x_i,t_i,\ybf)}{u_i^*(x_i,t_i)}$
where $u_i^*(x_i,t_i)=\max_{\ybf}u_i(x_i,t_i,\ybf)$.

A \emph{mechanism} $M$ is an algorithm that takes as input a profile $s$ and 
outputs the locations of the facilities, $\ybf$. 
A mechanism is \emph{deterministic} if it chooses \ybf deterministically and 
it is \emph{randomized} if \ybf is chosen according to a probability distribution.
Let $\opt(s)$ and $M(s)$ denote the optimal value and the value of mechanism $M$
for an objective function under the profile $s$ respectively. A mechanism $M$ achieves
an approximation ratio $\alpha\leq 1$, or it is $\alpha$-approximate, if for any type 
profile $s$, it holds that  $M(s) \geq \alpha \cdot \opt(s)$.
A mechanism is called strategy-proof if no agent can benefit by misreporting \hl{his
location} or his preferences. Formally, a mechanism $M$ is strategy-proof if for any true profile
$(s_i,s_{-i})$ it returns locations \ybf and any misreported profile 
$(s_i',s_{-i})$ it returns $\ybf'$, we have that 
$u_i(x_i, \ti, \ybf) \geq u_i(x_i, \ti, \ybf')$. A randomized mechanism is 
universally strategy-proof if it is a probability distribution over deterministic 
strategy-proof mechanisms and strategy-proof in expectation if no agent can increase 
his \emph{expected} utility by misreporting his type. Furthermore, a mechanism is 
called {\em false-name proof} if no agent can benefit by using multiple and different 
identities in the game. 
The strongest notion of strategy-proofness for a mechanism is to be {\em group
strategy-proof}. \hl{For any subset $Z \subseteq N$ of agents, let $(s_Z, s_{-Z})$ denote 
a profile of the agents' declarations. Furthermore, let $\ybf$ be the output of the
a mechanism under the true types $(s_Z, s_{-Z})$ and let $\ybfp$ be the output of the
the mechanism under $(s'_Z, s_{-Z})$ where agents in $Z$ had coordinated their 
declarations.
A mechanism $M$ is group strategy-proof, if for any $Z \subseteq N$, any $i \in Z$, and any $s'_Z \neq s_Z$ it holds that $u_i(x_i, \ti, \ybf) \geq u_i(x_i, \ti, \ybfp)$.}

\paragraph{\bf Communication Complexity.}\hl{ The communication complexity of a mechanism is the number of bits each agent has to send to the mechanism in order to compute the output. We say that a mechanism has zero communication complexity if it requires 0 bits from every agent.}

\subsection{Facility location on a line segment}
A special case of $k$-facility games is when all the agents are located on the line 
segment $[0, \ell]$, where $\ell > 0$. 
This case is studied in the literature~\cite{PT09,SV16} since the definitions
above are greatly simplified. For normalization purposes, we assume that the 
maximum utility an agent can get from any facility is $\ell$, and we define
the utility function of agent $i$ as follows.
\noindent
\begin{align}
\label{eq:util}
u_{ij}(x_i,t_i ,\yj)=\left\{ \begin{array}{rl}
|x_i - \ybf_j|, & \quad \text{if $\tij=-1$}\\
\ell, & \quad \text{if $\tij=0$}\\
\ell-|x_i - \ybf_j|, & \quad \text{if $\tij=1$.} \end{array}\right.
\end{align}



\section{1-facility games with known locations}
\label{sec:onef}
We first study the case where the locations of the agents are publicly 
known and only one facility has to be placed. 
We will show that the mechanism which places the facility on an optimal location 
using any {\em declaration-independent} tie-breaking rule is strategy-proof
for \util, \welfare, \happy objectives.

\begin{definition}
A mechanism $M$ has a declaration-independent tie-breaking rule if it outputs 
the same $\ybf$ for any two profiles $s \neq s'$  with $M(s) = M(s')$.
\end{definition}

Hence, a mechanism has a declaration-independent tie-breaking rule if it outputs 
the same location for the facility for all profiles that yield the same value 
for the objective we are trying to optimise. An example of such a rule is the 
lexicographic minimum.


\alg{alg:onef}
\begin{tcolorbox}[title=OPT-1 Mechanism]
\begin{itemize}
\item[{\bf In:}] For every agent $i$: public location $x_i \in \reals^d$, 
private preference $t_i  \in \{-1,0,1\}$; region $S \subseteq \reals^d$; 
objective \ocal; declaration-independent tie-breaking rule $T$.
\item[{\bf Out:}] Location $\ybfs \in S$ for the facility.
\end{itemize}
\begin{enumerate}
\item Let $Y \subseteq S$ such that every $y \in Y$ optimizes \ocal for the given locations 
and preferences, excluding the agents with preference 0.
\item Choose $\ybfs \in Y$ according to the tie-breaking rule $T$.
\end{enumerate}
\end{tcolorbox}

Mechanism~\ref{alg:onef} does not make any assumptions about the dimensions of the 
agents locations and the region $S$. So, the actual locations of the agents can be 
in $\reals^{d_1}$ and the region $S \subseteq \reals^{d_2}$, where $d_1 \neq d_2$. 
In addition, $S$ can be of an arbitrary form, i.e. it can be the union of several
disjoint regions of $\reals^d$. 

\subsection{Analysis for \util objective}
\label{sec:onef-util}
In this section we focus on the \util objective, i.e. $\ocal = \max_y \min_i u_i(x_i,t_i,y)$. 
In order to prove that Mechanism~\ref{alg:onef} is strategy-proof for \util, 
we partition the agents into two sets \caltl and \calth.
\caltl contains the agents with the minimum utility when the facility is placed
on \ybfs and $\calth = N \setminus \caltl$.
Since agents with preference type 0 have constant utility independently of $\ybf$ and 
they are excluded from the computation of $\ybf$, in our analysis we will assume that
there is no agent $i$ with $t_i=0$. We first prove that no agent from the set \calth has an incentive to lie.
\begin{lemma}
\label{lem:one-high}
No agent from \calth can increase his utility by lying.
\end{lemma}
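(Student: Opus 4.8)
The plan is to argue that an agent $i \in \calth$ cannot profit from a deviation by analyzing what the reported value of the objective can be after the deviation and invoking the declaration-independence of the tie-breaking rule. Fix the true profile $s=(s_i,s_{-i})$, let $\ybfs$ be the location output by OPT-1, and write $v = \min_j u_j(x_j,t_j,\ybfs) = \opt(s)$ for the optimal \util value. Since $i \in \calth$, by definition $u_i(x_i,t_i,\ybfs) > v$ (or $= v$, in which case $i$ has the minimum utility and there is nothing to gain beyond what \caltl-agents face; but the clean case is strict inequality, and I would treat the boundary case by noting $i$ can equivalently be placed in \caltl). Now suppose $i$ misreports $t_i' \neq t_i$ (his location is public, so only the preference can be faked), and let $\ybfp$ be the resulting output, with reported optimal value $v' = \min_j u_j(x_j, s_j', \ybfp)$ computed by the mechanism on the fake profile.

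The key step is a case split on whether $v' > v$, $v' = v$, or $v' < v$. If $v' < v$, then in particular the true agents $j \neq i$ all still satisfy $u_j(x_j,t_j,\ybfp) \ge v' $ is not immediately what I want; instead I use that $\ybfs$ itself is a feasible location achieving value $v$ for the *fake* profile on all agents except possibly $i$, and since $i$'s faked utility at $\ybfs$ is at least $0$ — more carefully, I compare against the location $\ybfs$ as a candidate for the fake instance and show the fake optimum $v'$ cannot drop the true utility of $i$: the honest point $\ybfs$ gives every other agent utility $\ge v$ and the mechanism on the fake profile optimizes $\min$ of utilities, so the fake optimal value over the non-$i$ agents is $\ge v$. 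Hence $u_i(x_i,t_i,\ybfp)$ — agent $i$'s *true* utility at the new point — I need to bound below by $v$. The cleanest route: because $i \in \calth$, for the mechanism's chosen $\ybfp$ we have $u_i$ evaluated with the *reported* $t_i'$ is $\ge v' \ge$ (value achievable), and then a monotonicity/relabeling argument transfers this to the true $t_i$. The main obstacle is exactly this transfer — showing that a point good for the faked preference is also good (utility $\ge v$) for the true preference of an agent who was not among the minimizers.

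To overcome it, I would use the structure of \caltl/\calth together with declaration-independence. Specifically: on the fake profile, the mechanism maximizes $\min_j u_j$; one feasible choice is $\ybfs$, which yields value $\ge v$ on all agents $j\ne i$, so $v' \ge$ (the value $\ybfs$ gives including fake-$i$), and regardless the chosen $\ybfp$ satisfies $u_j(x_j,t_j,\ybfp)\ge v'$ for all true $j \ne i$. If $v' \ge v$ then every true non-$i$ agent still has utility $\ge v$ at $\ybfp$, so $\min_j u_j(x_j,t_j,\ybfp)$ restricted to $j\ne i$ is $\ge v$; but then $\ybfp$ is an optimal (or near-optimal) point for the *true* profile too once we check agent $i$ — and here is where I argue $i$ cannot have been made worse off, since if $u_i(x_i,t_i,\ybfp) < v$ then $\ybfp$ would be suboptimal for the true instance while $\ybfs$ is optimal, which is allowed, so I instead directly bound $u_i(x_i,t_i,\ybfp)$: the agent controls only his declared preference, and for *any* facility location, truthfully reporting can only have made $\ybfs$ at least as good for him as $\ybfp$ is, because $\ybfs$ was selected optimally for a profile where $i$'s true high utility was already counted. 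Finally, if $v' < v$, I derive a contradiction: $\ybfs$ is available in the fake instance and gives all $j \ne i$ utility $\ge v$, and gives fake-$i$ some nonnegative utility, so the fake optimum is $\ge \min(v, u_i(x_i,t_i',\ybfs)) $; combined with declaration-independence of $T$ (so that equal objective values force equal outputs), this pins down $\ybfp$ relative to $\ybfs$ and yields $u_i(x_i,t_i,\ybfp) \ge u_i(x_i,t_i,\ybfs)$, completing the argument that no deviation helps.
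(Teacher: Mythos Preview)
Your plan does not close the main gap you yourself identify, namely the ``transfer'' from fake utility to true utility. The case split on $v'$ versus $v$ never reaches a clean conclusion: in the $v'\ge v$ branch you write that ``truthfully reporting can only have made $\ybfs$ at least as good for him as $\ybfp$ is, because $\ybfs$ was selected optimally for a profile where $i$'s true high utility was already counted,'' but that is exactly the statement to be proved, so the step is circular. In the $v'<v$ branch you claim a contradiction from $v' \ge \min(v, u_i(x_i,t_i',\ybfs))$, yet $u_i(x_i,t_i',\ybfs)$ can certainly be below $v$ (indeed that is precisely how the deviation can move the facility at all), so there is no contradiction. Finally, the last line asserts $u_i(x_i,t_i,\ybfp) \ge u_i(x_i,t_i,\ybfs)$, which is the wrong inequality for ``no deviation helps.''

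The idea you are missing is short and structural. First split off the case $t_i'=0$: then the mechanism simply ignores $i$, and if $\ybfp\neq\ybfs$ one gets $\min_{j\neq i}u_j(\cdot,\ybfp) > \min_{j\neq i}u_j(\cdot,\ybfs)$ by declaration-independence, which together with $u_i(x_i,t_i,\ybfp)>u_i(x_i,t_i,\ybfs)$ contradicts optimality of $\ybfs$ on the true profile. For $t_i'\neq 0$, note $t_i' = -t_i$. The output changes only if fake-$i$ becomes the minimizer at $\ybfs$, so the mechanism moves to increase $u_i(x_i,t_i',\cdot)$, i.e.\ it strictly changes $\dist(x_i,\cdot)$ in the direction favored by $t_i'$. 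Since $t_i=-t_i'$ and utilities are monotone in distance, that same move strictly \emph{decreases} $u_i(x_i,t_i,\cdot)$. This monotonicity step is the whole content of the lemma; your $v'$-based decomposition never invokes it and cannot substitute for it.
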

\begin{proof}
For the sake of contradiction suppose that an agent $i \in \calth$ with 
preference $\ti$ declares preference $\ti'$ and increases his utility.
Let \ybfp be the optimal location of the facility in this case.
Since we have assumed that agent $i$ increases his payoff, we have that
$u_i(x_i, \ti, \ybfs) < u_i(x_i, \ti, \ybfp)$.
We will consider two cases depending on the declaration $\ti'$.
\begin{itemize}
\item $\ti' = 0$. Recall, in this case, Mechanism~\ref{alg:onef} excludes 
agent $i$ from the computation of $\ybfp$. 
Since $u_i(x_i, \ti, \ybfs) < u_i(x_i, \ti, \ybfp)$, we get that $\ybfs \neq \ybfp$.
In addition, we get that 
$\min_{j \neq i} u_j(x_j, t_j, \ybfp) > \min_{j \neq i} u_j(x_j, t_j, \ybfs)$; if
this was not the case, the mechanism could return \ybfs and increase the value of
the objective. Hence, we get that 
$\min_{j} u_j(x_j, t_j, \ybfp) > \min_{j} u_j(x_j, t_j, \ybfs)$.
This means that $\ybfp$ is a better solution than $\ybfs$ for the \util objective, which
contradicts the assumption that \ybfs is an optimal solution.
\item $t_i' \neq 0$. The utility of agent $i$ will change only if the location of 
the facility changes; this is due to the declaration-independent tie-breaking rule $T$. 
This will happen only if $u_i(x_i, \ti', \ybfs) < \min_{j \neq i} u_j(x_j, t_j, \ybfs)$.
This means that $u_i(x_i, \ti', \ybfs) < u_i(x_i, \ti', \ybfp)$.
Without loss of generality let $\ti' = 1$.
In this case $\dist(x_i,\ybfs) > \dist(x_i,\ybfp)$, i.e. the new optimal location is 
closer to $x_i$. But this means that $u_i(x_i, \ti, \ybfs) < u_i(x_i, \ti, \ybfp)$ 
and since $\ti = -1$ this contradicts the assumption that agent $i$ can increase his
utility by misreporting his preference.
\end{itemize}
\end{proof}
Next, we prove that no agent from \caltl has an incentive to lie about his
preferences.
%
\begin{lemma}
\label{lem:one-low}
No agent from \caltl can increase his utility by lying.
\end{lemma}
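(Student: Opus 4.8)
The plan is to argue by contradiction, using that $\ybfs$ is globally optimal for the \emph{true} profile and that Mechanism~\ref{alg:onef} breaks ties in a declaration-independent way (exactly as in the proof of Lemma~\ref{lem:one-high}). Fix $i \in \caltl$, so $u_i(x_i,\ti,\ybfs) = v$ where $v := \min_j u_j(x_j,t_j,\ybfs) = \opt(s)$; recall we may assume every agent has a nonzero preference. Suppose $i$ reports $\ti' \neq \ti$, the mechanism outputs $\ybfp$, and $u_i(x_i,\ti,\ybfp) > v$. The first step is the following lever: since $v = \max_{\ybf}\min_j u_j(x_j,t_j,\ybf)$, we have $\min_j u_j(x_j,t_j,\ybfp) \le v$, and because agent $i$'s true utility at $\ybfp$ strictly exceeds $v$, the minimum is witnessed by some $j^\star \neq i$, whose reported preference is the same in both profiles. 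Hence on the misreported profile the mechanism's value at $\ybfp$ is at most $u_{j^\star}(x_{j^\star},t_{j^\star},\ybfp) \le v$.

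Next I would split on $\ti'$. If $\ti' = 0$, Mechanism~\ref{alg:onef} drops agent $i$ and maximizes $h(\ybf):=\min_{j\neq i}u_j(x_j,t_j,\ybf)$. Its optimum satisfies $h(\ybfp)\ge h(\ybfs)\ge v$; on the other hand $h(\ybfp)\le v$ by the lever, so $h(\ybfp)=h(\ybfs)=v$. Thus $\ybfs$ is itself an optimal location on the misreported profile with the same objective value $v$, and the declaration-independent tie-breaking rule forces the mechanism to output $\ybfs$ there, i.e. $\ybfp=\ybfs$ — contradicting $u_i(x_i,\ti,\ybfp)>v$. If $\ti'\neq 0$, then since $\ti\neq 0$ and preferences lie in $\{-1,0,1\}$, the only nontrivial misreport is $\ti'=-\ti$; assume without loss of generality $\ti=1$, $\ti'=-1$ (the case $\ti=-1$ is symmetric).

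In this last case write $g(\ybf):=\min\{u_i(x_i,-1,\ybf),\,h(\ybf)\}$ for the objective the mechanism maximizes on the misreported profile, with maximizer $\ybfp$. The lever gives $g(\ybfp)\le v$, hence $g(\ybfs)\le g(\ybfp)\le v$. If $g(\ybfs)=v$, then $g(\ybfp)=v$ as well, $\ybfs$ is again optimal on the misreported profile with value $v$, and declaration-independence forces $\ybfp=\ybfs$, a contradiction. If $g(\ybfs)<v$, then since $h(\ybfs)\ge v$ the bottleneck at $\ybfs$ is agent $i$ himself, so $u_i(x_i,-1,\ybfs)=g(\ybfs)<v$. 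But $u_i(x_i,1,\ybfp)>v=u_i(x_i,1,\ybfs)$ together with strict monotonicity for preference $1$ yields $\dist(x_i,\ybfp)<\dist(x_i,\ybfs)$, and then strict monotonicity for preference $-1$ yields $u_i(x_i,-1,\ybfp)<u_i(x_i,-1,\ybfs)$. Therefore $g(\ybfp)\le u_i(x_i,-1,\ybfp)<u_i(x_i,-1,\ybfs)=g(\ybfs)$, contradicting that $\ybfp$ maximizes $g$.

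The main obstacle, I expect, is the bookkeeping around the tie-breaking rule in the two ``equality'' subcases: one must check that $\ybfs$ remains an \emph{optimal} location of the perturbed instance and attains exactly the same objective value, so that the declaration-independent rule of Mechanism~\ref{alg:onef} is entitled to return $\ybfs$ — precisely the argument already invoked in the proof of Lemma~\ref{lem:one-high}. The only other delicate point is the monotonicity step, which is the sole place the hypothesis ``agent $i$'s true utility strictly increased'' is turned into the geometric fact that $\ybfp$ is strictly closer to $x_i$; one needs to apply the two monotonicity directions (for preferences $+1$ and $-1$) consistently.
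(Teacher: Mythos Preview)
Your proof is correct and follows essentially the same approach as the paper: a contradiction argument that splits on the misreport $\ti'$, invokes the declaration-independent tie-breaking rule in the ``equality'' subcases, and uses the opposite monotonicity of the $+1$ and $-1$ utility functions in the remaining subcase. Your case split $g(\ybfs)=v$ versus $g(\ybfs)<v$ is equivalent to the paper's split on whether $u_i(x_i,\ti',\ybfs)\ge u_i(x_i,\ti,\ybfs)$, and your auxiliary functions $h$ and $g$ together with the ``lever'' make the bookkeeping (in particular the tie-breaking step) somewhat cleaner than the paper's more informal wording.
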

\begin{proof}
We will prove the claim by contradiction. Suppose 
that an agent $i \in \caltl$ with preference $\ti$ can increase his utility by 
declaring $\ti'$.
Using exactly the same arguments as in Lemma~\ref{lem:one-high} we can see that
$\ti' \neq 0$.
Let $\ybfp \neq \ybfs$ be the optimal location for the facility when agent $i$
declares $\ti'$. Clearly, if $\ybfp = \ybfs$ agent $i$ has no reason to lie.
We now consider the following two cases:
\begin{itemize}
\item $u_i(x_i, \ti', \ybfs) \geq u_i(x_i, \ti, \ybfs)$.
Since we have assumed that agent $i$ increases his utility by declaring $\ti'$, 
we have that $u_i(x_i, \ti, \ybfp) > u_i(x_i, \ti, \ybfs)$. Hence,
$\min_{j \neq i} u_j(x_j, t_j, \ybfp) > \min_{j \neq i} u_j(x_j, t_j, \ybfs)$
since an agent $j \ne i$ who now has the minimum utility is the one who determines the 
new outcome $\ybfp$. We note that $\min_{j \in N}$ $u_j(x_j, t_j, \ybfp)$ should 
be strictly larger than $\min_{j \in N} u_j(x_j, t_j, \ybfs)$ since  
Mechanism~\ref{alg:onef} uses a declaration-independent tie-breaking rule.
So the location should not change if the value of the objective remains the same.
But then we have that $\min_{j \in N} u_j(x_j, t_j, \ybfp) >
\min_{j \in N} u_j(x_j, t_j, \ybfs)$
which contradicts the fact that \ybfs is an optimal location for the facility.

\item $u_i(x_i, \ti', \ybfs) < u_i(x_i, \ti, \ybfs)$.
This means that agent $i$ under the declaration $\ti'$ has the smallest
utility over all the agents.
Hence, one of the following cases must be true since we assumed that $\ti' \neq 0$.
The first one is when $\ti = -1$ and $\ti' = 1$. Since the utility of agent $i$
under the declaration $\ti'$ increased, it means that $\dist(x_i,\ybfp) < \dist(x_i,\ybfs)$,
i.e. the facility must be placed \emph{closer} to his location $x_i$. But this means that 
his utility under the true preference $\ti$ decreased because the agent wants
to be away from the facility.
Similarly when $\ti = 1$ and $\ti' = -1$ the facility must be placed further away from the position of the
agent, while the agent wants to be close to the facility. Hence, in both cases the utility
of agent $i$ decreases.
\end{itemize} 
As a result, in every case agent $i$ cannot increase his
utility by lying, which contradicts our assumption.
\end{proof}
Notice that Mechanism~\ref{alg:onef} places the facility on the location that
maximizes our objective, i.e. it is optimal. Furthermore, the combination of
Lemmas~\ref{lem:one-high} and~\ref{lem:one-low} shows that no agent can increase
his utility by lying. The next theorem follows:

\begin{theorem}
\label{thm:one-true}
OPT-1 is an optimal strategy-proof mechanism for the \util objective.
\end{theorem}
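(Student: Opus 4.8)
The plan is to derive Theorem~\ref{thm:one-true} directly from the structural facts we have already assembled. By construction, OPT-1 selects a location in the set $Y$ of optimizers of the objective $\ocal$ (here $\ocal = \util$) computed over the agents with nonzero preference, so the mechanism is optimal for \util essentially by definition; this handles the ``optimal'' half of the statement and needs only the observation that excluding the indifferent agents does not change the optimal value, since those agents have the constant utility $\ell$ (or more generally a constant) regardless of $\ybf$ and therefore never attain the minimum in a way that constrains the optimizer --- or if they do, they do so for every $\ybf$, so their presence only caps the objective uniformly and does not alter the argmax. I would state this caveat carefully but briefly.

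For the strategy-proofness half, the plan is simply to invoke the partition of the (nonzero-preference) agents into \caltl, the set of agents achieving the minimum utility at the chosen optimal location \ybfs, and $\calth = N \setminus \caltl$, and then cite Lemma~\ref{lem:one-high} and Lemma~\ref{lem:one-low}. Lemma~\ref{lem:one-high} shows no agent in \calth gains by deviating, and Lemma~\ref{lem:one-low} shows no agent in \caltl gains by deviating; since every agent lies in exactly one of these two sets, no agent can profitably misreport either location (which here is public, so only preference misreports are available) or preference. The only small gap to close is that the lemmas are phrased for misreporting preferences while the theorem speaks of strategy-proofness in general: since locations are publicly known in this section, the only manipulation available is a preference misreport, so the two notions coincide here, and I would note that explicitly.

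The step I expect to require the most care is the bookkeeping around indifferent agents and the declaration-independent tie-breaking rule, rather than any genuinely hard mathematics. Specifically, I want to be sure that ``optimal for \util'' is meant with respect to the true type profile and that the tie-breaking rule $T$ does not accidentally create an incentive: because $T$ depends only on which location is chosen, not on the declared profile, an agent who cannot change the set $Y$ of optimal locations cannot change the outcome at all, and an agent who can change $Y$ can only do so by making the mechanism pick a location that is suboptimal for the true profile --- which by the two lemmas cannot help her. So the proof is short:

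\begin{proof}
Since indifferent agents contribute a constant to their own utility and are excluded from the computation of $Y$, the optimal value of \util over all agents equals the optimal value over the agents with nonzero preference (an indifferent agent either never determines the minimum, or determines it at every location and hence only scales the objective uniformly). Therefore the location $\ybfs$ returned by OPT-1 maximizes $\min_i u_i(x_i,t_i,\ybf)$, i.e.\ OPT-1 is optimal for \util.

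For strategy-proofness, observe that in this setting the locations of the agents are public, so the only possible manipulation of an agent is to misreport her preference. Partition the agents with nonzero preference into \caltl and $\calth = N \setminus \caltl$ as above. By Lemma~\ref{lem:one-high}, no agent in \calth can increase her utility by misreporting, and by Lemma~\ref{lem:one-low}, no agent in \caltl can. An indifferent agent always receives her constant utility and so cannot benefit either. Since every agent belongs to exactly one of these classes, no agent can increase her utility by lying, and OPT-1 is strategy-proof.
\end{proof}
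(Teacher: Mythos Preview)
Your proposal is correct and follows essentially the same approach as the paper: optimality holds by construction, and strategy-proofness follows directly from combining Lemma~\ref{lem:one-high} and Lemma~\ref{lem:one-low}. You add some extra bookkeeping (indifferent agents, the fact that only preference misreports are possible when locations are public) that the paper handles by a blanket assumption before stating the lemmas, but this does not constitute a different route.
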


Theorem~\ref{thm:one-true} complements in a sense the result of~\cite{FJ15}, where it was
proven that there is no deterministic strategy-proof mechanism with 
bounded approximation for the \util objective for 1-facility games even on a line segment with 
known preferences but unknown locations. 


\subsection{Analysis for the \welfare objective}
\label{sec:onef-welfare}
In this section we focus on the \welfare objective, i.e. $\ocal = \max_y \sum_i u_i(x_i,t_i,y)$.
Again, since agents with preference type 0 have constant utility independently of $y$,
we will assume that there is no agent $i$ with $t_i=0$.

\begin{theorem}
\label{thm:one-true-welfare}
Mechanism~\ref{alg:onef} is an optimal strategy-proof for the \welfare objective.
\end{theorem}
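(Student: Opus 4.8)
The plan is to mirror the structure of the \util analysis but exploit the additivity of the \welfare objective, which makes the argument cleaner. First I would again discard all agents with preference $0$, since their utility is the constant $\ell$ regardless of $\ybf$ and the mechanism already ignores them; any deviation to or from preference $0$ can be handled exactly as in Lemma~\ref{lem:one-high}, using that \welfare strictly increases if a non-reported agent is ignored yet a better-for-everyone-relevant location exists, contradicting optimality of $\ybfs$. So it suffices to consider an agent $i$ with true preference $\ti \in \{-1,1\}$ who reports $\ti' \in \{-1,1\}$, $\ti' \neq \ti$, and changes the outcome from $\ybfs$ to $\ybfp \neq \ybfs$ (by the declaration-independent tie-breaking rule, if the output does not change neither does $i$'s utility, so $i$ has no incentive).

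The key step is the following exchange/optimality argument. Write $W(t_i,y) = \sum_{j} u_j(x_j,t_j,y)$ for the social welfare when agent $i$ holds preference $t_i$ and all others are fixed, and let $W_{-i}(y) = \sum_{j \neq i} u_j(x_j,t_j,y)$. Since $\ybfs$ is optimal for the true profile, $W(\ti,\ybfs) \ge W(\ti,\ybfp)$, i.e. $u_i(x_i,\ti,\ybfs) + W_{-i}(\ybfs) \ge u_i(x_i,\ti,\ybfp) + W_{-i}(\ybfp)$. Since $\ybfp$ is optimal for the reported profile, $u_i(x_i,\ti',\ybfp) + W_{-i}(\ybfp) \ge u_i(x_i,\ti',\ybfs) + W_{-i}(\ybfs)$. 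Adding these two inequalities and cancelling $W_{-i}$ gives
\[
u_i(x_i,\ti,\ybfs) + u_i(x_i,\ti',\ybfp) \ge u_i(x_i,\ti,\ybfp) + u_i(x_i,\ti',\ybfs).
\]
Rearranging, $u_i(x_i,\ti,\ybfs) - u_i(x_i,\ti,\ybfp) \ge u_i(x_i,\ti',\ybfs) - u_i(x_i,\ti',\ybfp)$. Now plug in the explicit utilities from~\eqref{eq:util}: with $\ti = 1$, $\ti' = -1$ (the case $\ti=-1,\ti'=1$ is symmetric), the left side is $(\ell - \dist(x_i,\ybfs)) - (\ell - \dist(x_i,\ybfp)) = \dist(x_i,\ybfp) - \dist(x_i,\ybfs)$ and the right side is $\dist(x_i,\ybfs) - \dist(x_i,\ybfp)$, so the inequality becomes $\dist(x_i,\ybfp) - \dist(x_i,\ybfs) \ge \dist(x_i,\ybfs) - \dist(x_i,\ybfp)$, i.e. $\dist(x_i,\ybfp) \ge \dist(x_i,\ybfs)$. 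Since $\ti = 1$, agent $i$'s true utility $\ell - \dist(x_i,\cdot)$ is then no larger at $\ybfp$ than at $\ybfs$, so the deviation is not profitable.

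With the case analysis closed, the theorem follows just as Theorem~\ref{thm:one-true} did: OPT-1 by construction returns a welfare-maximizing location, so it is optimal, and the argument above shows no agent (of any preference type) gains by misreporting, hence it is strategy-proof. The main obstacle is making sure the $\ti' = 0$ deviation and the ``outcome unchanged'' situations are handled rigorously — these need the declaration-independent tie-breaking property and the same ``ignored agent would otherwise worsen the objective, contradicting optimality'' reasoning used in Lemma~\ref{lem:one-high} — but once those are dispatched, the additive exchange inequality does all the remaining work in one stroke, which is in fact simpler than the $\min$-based case analysis needed for \util.
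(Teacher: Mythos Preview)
Your proposal is correct, and the core idea is genuinely different from the paper's route. The paper proves the $t_i' \neq 0$ case by splitting on whether $\sum_{j\neq i} u_j(x_j,t_j,\ybfs)$ is smaller or at least as large as $\sum_{j\neq i} u_j(x_j,t_j,\ybfp)$: in the first subcase it contradicts optimality of $\ybfs$ directly, and in the second it extracts $u_i(x_i,t_i',\ybfs) < u_i(x_i,t_i',\ybfp)$ from optimality of $\ybfp$ and then uses monotonicity in distance to flip the sign under the true preference. Your argument instead adds the two optimality inequalities to obtain the exchange inequality $u_i(x_i,\ti,\ybfs) - u_i(x_i,\ti,\ybfp) \ge u_i(x_i,\ti',\ybfs) - u_i(x_i,\ti',\ybfp)$ in one line, then reads off the distance comparison. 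This is the cleaner revealed-preference/single-crossing style proof and avoids the subcase split entirely; both approaches handle $t_i' = 0$ identically.

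One small caveat: in your final step you plug in the specific linear utilities from~\eqref{eq:util}, whereas OPT-1 and Theorem~\ref{thm:one-true-welfare} are stated at the generality of arbitrary monotone-in-distance utilities on $\reals^d$. Your exchange inequality still finishes the job there without~\eqref{eq:util}: since $\ti$ and $\ti'$ are opposite nonzero preferences, the two differences $u_i(x_i,\ti,\ybfs)-u_i(x_i,\ti,\ybfp)$ and $u_i(x_i,\ti',\ybfs)-u_i(x_i,\ti',\ybfp)$ have opposite signs (both are determined by the sign of $\dist(x_i,\ybfp)-\dist(x_i,\ybfs)$, in opposite directions), so the exchange inequality forces the left-hand side to be nonnegative. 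Stating it this way matches the paper's level of generality.
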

\begin{proof}
We will prove the theorem by contradiction. So, assume that
there exists an agent $i$ who can increase his utility by declaring $t_i' \neq t_i$.
Let $\ybfs$ be the optimal location of the facility when $i$ declares $t_i$ and 
$\ybfp \neq \ybfs$ be the location of the facility when he declares $t'_i$. 
So, by assumption, we have that $u_i(x_i,t_i,\ybfs) < u_i(x_i,t_i,\ybfp)$.

Firstly, assume that $t_i' = 0$. Then, Mechanism~\ref{alg:onef} excludes agent $i$ from 
the computation of $\ybfp$. In addition, since the mechanism uses a 
declaration-independent tie-breaking rule and $\ybfp \neq \ybfs$, it must be true that 
$$\sum_{j \neq i}u_j(x_j,t_j,\ybfs) < \sum_{j \neq i}u_j(x_j,t_j,\ybfp).$$
If this was not the case, we could increase the value of the objective by choosing
$\ybfs$ instead.
Thus, since we assumed that $u_i(x_i,t_i,\ybfs) < u_i(x_i,t_i,\ybfp)$, we get that 
$\sum_{j}u_j(x_j,t_j,\ybfs) < \sum_{j}u_j(x_j,t_j,\ybfp)$ which contradicts the 
assumption that \ybfs maximizes the social welfare. 

Having established that $t_i' \neq 0$, we consider the following two cases depending on the 
utilities of the rest of the agents under \ybfs and \ybfp. In what follows we will assume
that $t_i = 1$ and $t_i'=-1$; the arguments for $t_i = -1$ and $t_i'=1$ are similar.
\begin{itemize}
\item $\sum_{j \neq i}u_j(x_j,t_j,\ybfs) < \sum_{j \neq i}u_j(x_j,t_j,\ybfp)$. Then,
as above, we get that \ybfs does not maximize the welfare objective since we have 
assumed that $u_i(x_i,t_i,\ybfs) < u_i(x_i,t_i,\ybfp)$.
\item $\sum_{j \neq i}u_j(x_j,t_j,\ybfs) \geq \sum_{j \neq i}u_j(x_j,t_j,\ybfp)$. 
Since $\ybfs \neq \ybfp$ and since Mechanism~\ref{alg:onef} has a declaration-independent
tie-breaking rule, it should be true that 
$$\sum_{j \neq i}u_j(x_j,t_j,\ybfs) +  u_i(x_i,t'_i,\ybfs) < \sum_{j \neq i}u_j(x_j,t_j,\ybfp)
+ u_i(x_i,t'_i,\ybfp).$$
So, we get that $u_i(x_i,t'_i,\ybfs) < u_i(x_i,t'_i,\ybfp)$ and since we have assumed that
$t_i' = -1$ we get that $\dist(x_i, \ybfs) < \dist(x_i, \ybfp)$. This, in turn means that
$u_i(x_i,t_i,\ybfs) > u_i(x_i,t_i,\ybfp)$ which is a contradiction.
\end{itemize}
Hence, we have shown that for any declaration $t_i'$ the utility of the agent cannot increase.
Thus, the theorem follows.
\end{proof}

\subsection{Analysis for the other objectives}
Observe that in the analyses in Sections~\ref{sec:onef-util} and~\ref{sec:onef-welfare},
the only assumption about the utility functions of the agents is that they are monotone
with respect to the distance between the location of the agent and the location of the 
facility. Thus, every agent can have his own type of utility function, completely different
than the types of the other agents.

Recall, for \happy  we have that 
$\ocal = \max_\ybf \min_i\frac{u_i(x_i,t_i,\ybf)}{u_i^*(x_i,t_i)}$ where
$u_i^*(x_i,t_i)=\max_{\ybf}u_i(x_i,t_i,\ybf)$. Observe though, $u_i^*(x_i,t_i)$ is
a constant, hence the analysis of Section~\ref{sec:onef-util} applies here as well. 
So, we can get the following as a corollary of Theorem~\ref{thm:one-true}.
\begin{corollary}
\label{cor:one-true-happy}
Mechanism~\ref{alg:onef} is an optimal strategy-proof for the \happy objective.
\end{corollary}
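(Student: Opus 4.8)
The plan is to obtain Corollary~\ref{cor:one-true-happy} as an immediate consequence of Theorem~\ref{thm:one-true}, by observing that optimizing \happy is, from the point of view of each individual agent's incentives, structurally identical to optimizing \util. The key observation is the one already flagged in the text: for each agent $i$, the normalizing quantity $u_i^*(x_i,t_i)=\max_\ybf u_i(x_i,t_i,\ybf)$ depends only on $x_i$ and $t_i$, not on the chosen location $\ybf$ of the facility. Thus, once the profile is fixed, $\frac{u_i(x_i,t_i,\ybf)}{u_i^*(x_i,t_i)}$ is just a positive rescaling of $u_i(x_i,t_i,\ybf)$, so the function $\ybf \mapsto \frac{u_i(x_i,t_i,\ybf)}{u_i^*(x_i,t_i)}$ is monotone in $\dist(x_i,\ybf)$ in exactly the same direction as $u_i$ itself.

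First I would note that OPT-1 run with objective $\ocal = \max_\ybf \min_i \frac{u_i(x_i,t_i,\ybf)}{u_i^*(x_i,t_i)}$ is by construction optimal for \happy, since step 1 collects precisely the set $Y$ of maximizers of this objective and step 2 selects one of them. Second, for strategy-proofness, I would replace each agent's utility function $u_i$ throughout the analysis of Section~\ref{sec:onef-util} by the rescaled function $\tilde u_i(x_i,t_i,\ybf) := u_i(x_i,t_i,\ybf)/u_i^*(x_i,t_i)$. Because the scaling factor is a fixed positive constant (independent of $\ybf$ and, crucially, the same whether agent $i$ reports truthfully or not, since it is computed from the true type $t_i$ when we evaluate the agent's real payoff), every inequality of the form $u_i(x_i,t_i,\ybf) < u_i(x_i,t_i,\ybfp)$ used in Lemmas~\ref{lem:one-high} and~\ref{lem:one-low} is equivalent to the corresponding inequality for $\tilde u_i$, and likewise the comparisons $\min_j \tilde u_j(x_j,t_j,\cdot)$ driving the outcome are order-isomorphic copies of the $\util$ comparisons up to per-agent rescaling. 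Since $\tilde u_i$ is still strictly increasing (resp.\ constant, resp.\ strictly decreasing) in $\dist(x_i,\ybf_j)$ when $t_{ij}=-1$ (resp.\ $0$, resp.\ $1$), the monotonicity-based arguments about which direction the facility moves when an agent flips $-1 \leftrightarrow 1$ go through verbatim. Agents with $t_i = 0$ are excluded from the computation exactly as before and have constant happiness as well as constant utility.

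The only point that needs a word of care — and the step I expect to be the mild obstacle — is that the objective is now a $\min$ of ratios with \emph{different} denominators across agents, so the set $Y$ of optimizers is in general not the same as the $\util$-optimizers; one cannot literally cite Theorem~\ref{thm:one-true} as a black box. The resolution is that the proofs of Lemmas~\ref{lem:one-high} and~\ref{lem:one-low} never actually use that all agents share the same utility scale: they only use (i) monotonicity of each $u_i$ in the distance, (ii) the declaration-independent tie-breaking rule, and (iii) that the mechanism outputs a genuine maximizer of the stated objective. All three hold for \happy with the $\tilde u_i$'s. Hence I would phrase the corollary's proof as: ``Apply the analysis of Section~\ref{sec:onef-util} with each $u_i$ replaced by $\tilde u_i := u_i/u_i^*$; since each $\tilde u_i$ is monotone in $\dist(x_i,\ybf)$ in the same sense as $u_i$ and $u_i^*$ is a constant, Lemmas~\ref{lem:one-high} and~\ref{lem:one-low} hold verbatim, and OPT-1 with objective \happy is optimal by construction.'' This keeps the proof to a few lines while making explicit why the reduction to the \util argument is legitimate despite the heterogeneous normalizers.
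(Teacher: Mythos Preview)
Your proposal is correct and takes essentially the same approach as the paper: the paper's entire proof is the observation that $u_i^*(x_i,t_i)$ is a constant, so the analysis of Section~\ref{sec:onef-util} applies unchanged. You have in fact been more careful than the paper, explicitly flagging that the heterogeneous denominators change the optimizer set but not the structure of the arguments in Lemmas~\ref{lem:one-high} and~\ref{lem:one-low}; the paper leaves this implicit.
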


\section{Inapproximability results}
\label{sec:inapprox}

In the remainder of the paper, unless specified otherwise, we study the \util objective.
In this section, we provide inapproximability results for strategy-proof 
mechanisms for 2-facility games. We show that the second facility changes dramatically 
the landscape of strategy-proofness. We prove that the extension of the optimal mechanism 
for two facilities, i.e. placing the facilities on the locations that maximize the objective 
under the declared preferences of the agents, is not strategy-proof even in the setting
of a line segment with two agents and known locations. 
Furthermore, we provide inapproximability results for strategy-proof mechanisms. 


We first prove that there is no 0.851-approximate deterministic strategy-proof 
and then extend it to strategy-proof in expectation mechanisms.

\begin{theorem}
\label{thm:2fub}
There is no $\alpha$-approximate deterministic strategy-proof mechanism for the
2-facility game with $\alpha \geq 0.851$.
\end{theorem}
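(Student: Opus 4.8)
The plan is to construct a small family of instances on the line segment $[0,\ell]$ with two agents whose locations are fixed and public, and argue that any strategy-proof mechanism must, on at least one of them, place the facilities far from optimal. The key degrees of freedom we exploit are the four preference profiles $(t_1,t_2) \in \{-1,1\}^2$ (restricting attention to these already suffices), together with the fact that the mechanism's output on each of the four profiles is constrained by strategy-proofness: switching agent $i$'s preference between $-1$ and $1$ must not help agent $i$ under his true preference. First I would pin down the two agent locations — the natural symmetric choice is $x_1 = 0$ and $x_2 = \ell$ — and compute, for each of the four profiles, what the optimal \util value is and which facility placements achieve it. For instance, when $t_1=t_2=1$ both agents want both facilities close, and the optimum places both facilities at $\ell/2$ giving value $\ell/2$ each; when $t_1=1,t_2=-1$ agent 1 wants facilities near $0$ and agent 2 wants them near $\ell$, and the egalitarian optimum balances these.

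Next I would set up the strategy-proofness constraints as inequalities relating the outputs on adjacent profiles. Write $\mathbf{y}^{(t_1,t_2)}$ for the mechanism's output on profile $(t_1,t_2)$. The constraint from agent 1 deviating between the profiles $(1,t_2)$ and $(-1,t_2)$ says: agent 1 with true preference $1$ weakly prefers $\mathbf{y}^{(1,t_2)}$ to $\mathbf{y}^{(-1,t_2)}$, and agent 1 with true preference $-1$ weakly prefers $\mathbf{y}^{(-1,t_2)}$ to $\mathbf{y}^{(1,t_2)}$. In terms of the utility function \eqref{eq:util}, the first says $\sum_j (\ell - |0 - y^{(1,t_2)}_j|) \ge \sum_j (\ell - |0 - y^{(-1,t_2)}_j|)$, i.e. $\sum_j y^{(1,t_2)}_j \le \sum_j y^{(-1,t_2)}_j$, while the second gives the reverse with the roles of the profiles swapped on the sum $\sum_j y_j$ — wait, more carefully, the $-1$-agent's utility is $\sum_j |0 - y_j| = \sum_j y_j$, so he weakly prefers the profile with the larger $\sum_j y_j$. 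Combining, $\sum_j y^{(1,t_2)}_j = \sum_j y^{(-1,t_2)}_j$ (the sum of coordinates seen from $x_1=0$ is the same), and symmetrically from $x_2=\ell$ the quantity $\sum_j (\ell - y_j)$ is preserved when agent 2 flips, i.e. $\sum_j y^{(t_1,1)}_j = \sum_j y^{(t_1,-1)}_j$. So the scalar $Y := \sum_j y_j$ is the \emph{same across all four profiles}.

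The heart of the argument is then: a single common value $Y \in [0,2\ell]$ (with the pair $(y_1,y_2)$ lying in $[0,\ell]^2$ subject to $y_1+y_2 = Y$) cannot simultaneously be near-optimal for all four profiles. I would compute, for each profile, the best achievable \util value over all facility pairs with $y_1+y_2 = Y$, as a function of $Y$; each is a piecewise-linear ``tent'' function of $Y$ maximized at a different location (e.g. the $(1,1)$ profile wants $Y = \ell$, i.e. both at $\ell/2$; the $(-1,-1)$ profile wants the facilities at the two endpoints, which forces one coordinate to $0$ and one to $\ell$, again $Y=\ell$ but a different pair — so I may also need the finer constraint on the individual coordinates, not just the sum, which comes from tie-breaking-free SP or from a more careful deviation analysis). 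Then the worst-case approximation ratio is $\min_{t_1,t_2} (\text{best value at } Y)/\opt(t_1,t_2)$, and I optimize over $Y$ (and over the agent locations, if $0,\ell$ turns out suboptimal) to show this is bounded below $0.851$ — i.e., $\max_Y \min_{\text{profiles}} (\cdot) < 0.851$.

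\textbf{The main obstacle} I anticipate is that the sum constraint $Y = \text{const}$ alone is too weak — it is satisfied by $(y_1,y_2)=(0,\ell)$ and $(\ell/2,\ell/2)$ alike — so I will need a second layer of strategy-proofness constraints (or an argument that the mechanism may as well output a specific canonical pair for each $Y$) to actually force a bad ratio, and getting the constant to be exactly $\geq 0.851$ rather than some weaker bound requires choosing the instance parameters (the agent positions, or perhaps a third carefully-placed ``dummy'' configuration) to make the tent functions clash as sharply as possible. I would resolve this by also using deviations where the flipping agent is the one \emph{not} currently determining the egalitarian minimum, which constrains the individual coordinates, and then carrying out the (routine but fiddly) optimization of the min-of-ratios expression.
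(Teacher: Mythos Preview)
Your approach has a genuine error and, once the error is corrected, the construction is too weak to yield any bound at all.

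\textbf{The equality $Y^{(1,t_2)} = Y^{(-1,t_2)}$ is false.} Both strategy-proofness constraints for agent~1 point in the \emph{same} direction. With true type $t_1=1$ the agent's utility is $2\ell-Y$, so SP gives $2\ell-Y^{(1,t_2)}\ge 2\ell-Y^{(-1,t_2)}$, i.e.\ $Y^{(1,t_2)}\le Y^{(-1,t_2)}$. With true type $t_1=-1$ the utility is $Y$, so SP gives $Y^{(-1,t_2)}\ge Y^{(1,t_2)}$ --- the \emph{same} inequality. You therefore only obtain a monotonicity constraint, not a rigidity constraint.

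\textbf{With only monotonicity, your four instances admit an optimal SP mechanism.} Place both facilities at $0$ on profile $(1,-1)$, both at $\ell$ on $(-1,1)$, and both at $\ell/2$ on $(1,1)$ and $(-1,-1)$. All four SP inequalities $Y^{(1,\cdot)}\le Y^{(-1,\cdot)}$ and $Y^{(\cdot,-1)}\le Y^{(\cdot,1)}$ hold, and the mechanism is \emph{exactly optimal} on every profile. So your family --- agents at the two endpoints, each with the \emph{same} preference for both facilities --- cannot produce any inapproximability bound, let alone $0.851$. The ``second layer'' you hope for does not exist here: the sums $Y$ are all the information that matters for these utilities.

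\textbf{What the paper does instead.} The paper's construction is essentially orthogonal to yours. It places agent~1 at $0$ with \emph{facility-specific} preferences $t_1=(-1,1)$ (so agent~1 wants $f_1$ far and $f_2$ close --- the two facilities are not interchangeable), and agent~2 at an interior point $x$ with preferences $(0,1)$ in one instance and $(-1,1)$ in the other. The only deviation considered is agent~2 switching $t_{21}$ from $0$ to $-1$. Because agent~2 cares about $f_2$'s distance alone in the first instance, SP pins down an inequality on the \emph{location of $f_2$ individually}, not on a sum. The paper then case-splits on where $f_1$ sits relative to $x$ in the second instance, balances the resulting ratios, and optimizes over $x$ to obtain $x=(13-\sqrt{41})/8$ and the bound $0.851$. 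The asymmetry between the two facilities (via the $(-1,1)$ preference) and the use of a $0$ preference are what make the argument bite; your symmetric setup cannot reproduce this.
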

\begin{proof}
Let us consider the instances $I$ and $I'$ depicted in Figure~\ref{fig:fig1}. 
Each white circle corresponds to an agent. Agent $a_1$ is located on 0 and agent $a_2$ on 
$x > \frac{2\ell}{3}$, where the exact value of $x$ will be specified later in the proof.
Without loss of generality, we assume that $\ell = 1$. Firstly, we will prove that the
mechanism that places the facilities on their optimal locations is not strategy-proof
even when the locations of the agents are known. Then, we will use these instances to 
derive our inapproximability result.

On instance $I$ agents $a_1$ and $a_2$ have preferences $t_1=(-1,1)$ and
$t_2=(0, 1)$ respectively.
It is not hard to see that the optimal locations for the facilities are 
$\ybf_1 = 1$ and $\ybf_2=\frac{x}{2}$ where each agent gets utility 
$2-\frac{x}{2}$. The optimal locations of the facilities are depicted by black circles 
in the figure.

On instance $I'$ agent $a_1$ has the same preferences as on instance $I$ 
while the preferences of agent $a_2$ are $t'_2=(-1,1)$. The optimal locations 
for the facilities in this instance are $\ybf_1 = 1$ and $\ybf_2 = x$ where 
each agent gets utility $2 - x$. 
\noindent
\begin{figure}[h!]
\begin{center}
\subfigure[Instance I]{
\begin{tikzpicture}[thick, scale=0.5]
  \tikzstyle{every node}==[fill=white,minimum size=4pt,inner sep=0pt]

\draw (-4.4,-0.7) node(v)[label=below:$-1 1$]{};
\draw (-4,0.7) node(v1)[label=above:$0$]{};
\draw (-0.5,0.7) node(v2)[label=above:$\frac{x}{2}$]{};
\draw (3,0) node(v3)[draw, fill=white, circle]{};
\draw (-0.5,0) node(v3)[draw, fill=black, circle]{};
\draw (6,0) node(v3)[draw, fill=black, circle]{};
\draw (-4,0) node(v7)[draw, fill=white, circle]{};
\draw (-0.5,-1.9) node(u)[label=below:$\ybf_2$]{};
\draw (3,-0.7) node(u1)[label=below:$0 1$]{};
\draw (5.2,-1.9) node(l1)[label=below right:$\ybf_1$]{};
\draw (6,0.7) node(u1)[label=above :$\ell$]{};
\draw (3,0.7) node(v6)[label=above:$x$]{};
\draw (-3.88,0) -- (2.9,0);
\draw (3.15,0) -- (6,0);
\node[] at (7,-2) {};
\end{tikzpicture}
}
\subfigure[Instance $I'$]{
\begin{tikzpicture}[thick, scale=0.5]
  \tikzstyle{every node}==[fill=white,minimum size=4pt,inner sep=0pt]

\draw (-4.4,-0.7) node(v)[label=below:$-1 1$]{};
\draw (-4,0.7) node(v1)[label=above:$0$]{};
\draw (3,0) node(v3)[draw, fill=black, circle]{};
\draw (-4,0) node(v7)[draw, fill=white, circle]{};
\draw (6,0) node(v7)[draw, fill=black, circle]{};
\draw (3,-1.9) node(u)[label=below:$\ybf'_2$]{};
\draw (2.6,-0.7) node(u1)[label=below:$-1 1$]{};
\draw (5.6,-1.9) node(l1)[label=below right:$\ybf'_1$]{};
\draw (5.85,0.7) node(u1)[label=above :$\ell$]{};
\draw (3,0.7) node(v6)[label=above:$x$]{};
\draw (-3.83,0) -- (5.85,0);
\node[] at (7,-2) {};
\end{tikzpicture}
}
\caption{Example for preferences in $\{-1,0,1\}^2$.}
\label{fig:fig1}
\end{center}
\end{figure}
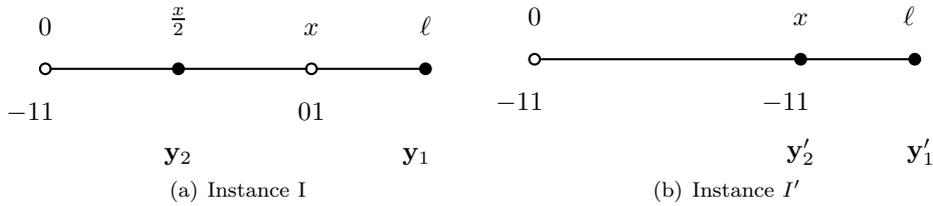
\noindent

Instances $I$ and $I'$ show that the mechanism which places the
facilities on the optimal locations is not strategy-proof. On instance $I$
agent $a_2$ can declare $t'_2=(-1,1)$ and increase his utility from 
$2-\frac{x}{2}$ to $2$. 

Next, we focus on the inapproximability result for any deterministic mechanism.
The high-level idea of the proof is as follows. We assume that we know a strategy-proof
mechanism $M$ that achieves the best possible approximation ratio for the problem. Firstly, 
we focus on instance $I$ where we show that $M$ always places the first facility on 1 and we 
derive the approximation guarantee of $M$ on $I$ as a function of the location $\ybf_2$ 
of the second facility. Then, we turn our attention to instance $I'$, and we observe that
for the location $\ybf'_2$ of the second facility in this case, it should be true that
$\ybf'_2 \leq \ybf_2$. Using this, we consider the two possible cases for the location
$\ybf_1$ of facility $f_1$ with respect to $x$ and we derive bounds on the approximation
ratio of $M$, in each case as a function of $x$. Then, we optimize the value of $x$ and
derive the claimed bound.

So, let $M$ be a strategy-proof mechanism that achieves the best possible approximation
for the \util objective.
We first argue that on instance $I$ mechanism $M$ should place facility $f_1$ on $z=1$.
If this was not the case, the utility of $a_1$ would strictly increase by the movement 
of $f_1$ to $1$, while the utility of $a_2$ would remain the same. 
Hence, the approximation ratio of $M$ would strictly improve by placing $f_1$ on 1, 
contradicting the assumption that $M$ achieves the best approximation guarantee.

Next, suppose that $M$ places facility $f_2$ on $\ybf_2 \leq x$ on instance $I$.
Since $M$ is strategy-proof, $f_2$ cannot be placed on any 
$\ybf'_2 > \ybf_2$ on instance $I'$. If $\ybf'_2 > \ybf_2$, then agent $a_2$ 
from $I$ could declare preferences $t'_2 = (-1,1)$ and increase his utility
(assuming that $ x > \frac{2\ell}{3}$). 
We consider the following two cases regarding the location $\ybf'_1$ in which 
$M$ places $f_1$ on $I'$:

\begin{itemize}
\item $\ybf'_1 \geq x$. Then, obviously $\ybf_1 = 1$ since otherwise the 
utility of both agents in $I'$ is decreasing and thus $M$ does not achieve the 
maximum approximation. So, under $M$ agent $a_2$  on instance $I'$ gets 
utility at most $u_2' = 2-2x + \ybf_2$ while $a_2$ gets utility $u_1' = 2 - \ybf_2 \geq u_2'$
(since $x \geq \ybf_2$).
Thus $M$ achieves an approximation of $\frac{2-2x + \ybf_2}{2-x}$. 
Furthermore, on instance $I$ agent $a_1$ gets utility $u_1 = 2 - \ybf_2$,
since as explained earlier, $M$ places $f_1$ on 1, while $a_2$
gets utility $u_2 = 2 -x + \ybf_2 \geq u_1$ when $\ybf_2 \geq \frac{x}{2}$. 
Clearly if $\ybf_2 < \frac{x}{2}$ the utility of $a_1$ gets worse. Thus, the 
approximation of $M$ on instance $I$ is $\frac{4 - 2\ybf_2}{4-x}$.
Observe that the approximation guarantee of $M$ on $I$ is decreasing with 
$\ybf_2$ while on $I'$ it is increasing with $\ybf_2$.
So, if we optimize the approximation guarantee and solve for $\ybf_2$ we get 
that $\ybf_2 =\frac{6x-2x^2}{8-3x}$. Thus, if $\ybf'_1 > x$, the approximation 
of $M$ is at most
\begin{align}
\label{eq:case1}
\frac{4-2\cdot \frac{6x-2x^2}{8-3x}}{4 -x} = \frac{4x^2-24x+32}{3x^2-20x+32}
\end{align}
\item
If $M$ on instance $I'$ places $f_1$ on $\ybf'_1 < x$, then observe that there 
is no location $\ybf'_2$ for $f_2$ such that both agents get utility strictly 
larger than 1.
Thus, in this case $M$ achieves approximation at most 
\begin{align}
\label{eq:case2}
\frac{1}{2 -x}
\end{align}
\end{itemize}
Observe that the approximation guarantee in~\eqref{eq:case1} increases with $x$ 
while in~\eqref{eq:case2} it decreases with $x$. So if we optimize on the approximation guarantee of $M$,
we have to solve for $x$ the equation $-4x^3+29x^2-60x+32=0$. The unique 
solution in $[0,1]$ is $x=\frac{13-\sqrt{41}}{8}$.
Using this value in~\eqref{eq:case1} and~\eqref{eq:case2}
we get that any deterministic strategy-proof mechanism on instances $I$ and $I'$ achieves approximation less than 0.851.
\end{proof}

The inapproximability bound 
can be extended to strategy-proof in expectation mechanisms.
\begin{theorem}\label{alg-rand-inapprox}
There is no $\alpha$-approximate strategy-proof in expectation mechanism for the
2-facility game with $\alpha \geq 0.851$.
\end{theorem}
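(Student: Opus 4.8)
The plan is to adapt the two-instance construction from the proof of Theorem~\ref{thm:2fub} to the randomized setting by reasoning about expected utilities and expected objective values. Let $M$ be any strategy-proof-in-expectation mechanism, and run it on the same instances $I$ and $I'$ from Figure~\ref{fig:fig1}, with $a_1$ on $0$ with preferences $(-1,1)$, $a_2$ on $x>\frac{2\ell}{3}$ with preferences $(0,1)$ on $I$ and $(-1,1)$ on $I'$, and $\ell=1$. The first step is to show that, in expectation, $M$ places $f_1$ at $1$ on instance $I$: since $a_1$ has preference $-1$ for $f_1$ and $a_2$ is indifferent to $f_1$, the expected utility of $a_1$ is $\mathbb{E}[\dist(0,\ybf_1)] = \mathbb{E}[\ybf_1]$ and $a_2$'s expected utility is unaffected by $\ybf_1$; hence an optimal-approximation mechanism must have $\mathbb{E}[\ybf_1]=1$, which on the segment $[0,1]$ forces $\ybf_1=1$ almost surely. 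So we may treat $f_1$ as deterministically at $1$ on $I$, and we let $q_2 := \mathbb{E}_I[\ybf_2]$ denote the expected location of $f_2$ on $I$.

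The second step is the strategy-proofness-in-expectation link between the instances. Agent $a_2$, whose true type on $I$ is $(0,1)$, can misreport $(-1,1)$ and face instance $I'$. Under the true type $(0,1)$ his expected utility on $I$ is $\mathbb{E}_I[u_2] = \ell + \ell - \mathbb{E}_I[\dist(x,\ybf_2)] = 2 - \mathbb{E}_I[\dist(x,\ybf_2)]$; I would further note $\mathbb{E}_I[\dist(x,\ybf_2)] \ge |x - q_2| \ge x - q_2$ by convexity, but more usefully, since $a_2$ under report $(0,1)$ is indifferent to $f_1$ and wants $f_2$ close, his honest expected utility is at least $2 - x + \mathbb{E}_I[\ybf_2]$ only if $\ybf_2 \le x$ a.s.; in general his true-type utility on the reported instance $I'$ is $2 - \mathbb{E}_{I'}[\dist(x,\ybf_2')]$, again ignoring $f_1$. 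Comparing, strategy-proofness in expectation gives $\mathbb{E}_I[\dist(x,\ybf_2)] \le \mathbb{E}_{I'}[\dist(x,\ybf_2')]$, the expected analogue of the inequality $\ybf_2' \le \ybf_2$ used in the deterministic proof. The plan is then to carry through the same case analysis on $I'$ in expectation: split on whether $\mathbb{E}_{I'}[\ybf_1'] \ge x$ or $< x$ (as before, the former forces $\ybf_1'=1$ a.s. by the approximation-optimality argument applied to $a_1$), express the expected egalitarian value $\mathbb{E}[\min_i u_i]$ on each instance as an affine function of the expected facility locations using $\mathbb{E}[\min_i u_i] \le \min_i \mathbb{E}[u_i]$, and recover exactly inequalities~\eqref{eq:case1} and~\eqref{eq:case2} with $\ybf_2,\ybf_2',\ybf_1,\ybf_1'$ replaced by their expectations. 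Optimizing over $x$ as in the deterministic proof then yields the same bound $\alpha < 0.851$.

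The main obstacle is that the egalitarian objective $\min_i u_i$ is concave in the utilities, so $\mathbb{E}[\min_i u_i] \le \min_i \mathbb{E}[u_i]$ rather than equality; one must check this inequality points the right way, i.e. that it only \emph{strengthens} the upper bound on $M$'s approximation ratio (which it does, since we are upper-bounding $M$'s performance). A second subtlety is that the ``place $f_1$ at $1$'' and ``$\ybf_1'=1$'' reductions must be justified via expectations: the argument is that if $\mathbb{E}[\dist(0,\ybf_1)]>0$ then shifting the whole distribution of $\ybf_1$ to the point $1$ weakly increases both agents' expected utilities and hence weakly increases $\mathbb{E}[\min_i u_i]$, so an approximation-optimal $M$ loses nothing by doing so; this needs the monotonicity of utilities in distance, which holds by~\eqref{eq:util}. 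Everything else is a mechanical rerun of the deterministic calculation with expectations inserted, so I would state the adaptation compactly rather than reproducing~\eqref{eq:case1}--\eqref{eq:case2} in full.
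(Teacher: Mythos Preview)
Your plan matches the paper's proof closely: same two instances, the reduction to $\ybf_1=1$ on $I$, the SP-in-expectation link $\mathbb{E}_I[\dist(x,\ybf_2)]\le\mathbb{E}_{I'}[\dist(x,\ybf_2')]$, and the bound $\mathbb{E}[\min_i u_i]\le\min_i\mathbb{E}[u_i]$ (which, as you note, points the right way). The genuine gap is in your case split on $I'$. You propose to split on whether $\mathbb{E}_{I'}[\ybf_1']\ge x$ and claim the former forces $\ybf_1'=1$ a.s.\ ``by the approximation-optimality argument applied to $a_1$''. That does not work: on $I'$ agent $a_2$ \emph{also} has preference $-1$ for $f_1$ (unlike on $I$, where $a_2$ was indifferent), and shifting mass of $\ybf_1'$ from $[0,x)$ toward $1$ can decrease $a_2$'s utility from $f_1$; so neither the threshold on the expectation nor the appeal to $a_1$ alone justifies the reduction. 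What does work is: (i) the SP constraint between $I$ and $I'$ depends only on $\ybf_2'$ (since $a_2$'s true type $(0,1)$ ignores $f_1$), so the distribution of $\ybf_1'$ on $I'$ may be reoptimized freely; (ii) conditional on the \emph{realization} $\ybf_1'\ge x$, pushing $\ybf_1'$ to $1$ increases both utilities (both agents sit at positions $\le x$ and want $f_1$ far); (iii) conditional on $\ybf_1'<x$ one checks $u_1+u_2\le 2$ pointwise, hence $\min(u_1,u_2)\le 1$. The paper runs (ii) and (iii) as the two cases (yielding~\eqref{eq:ra2} and the $\frac{1}{2-x}$ bound respectively), not a split on an expectation.

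A second, smaller gap: to turn $\mathbb{E}_I[\dist(x,\ybf_2)]$ into $x-\mathbb{E}_I[\ybf_2]$ (so that the affine formulas from the deterministic proof carry over verbatim) you need $\ybf_2\le x$ a.s.\ on $I$, and similarly on $I'$. The paper argues this WLOG explicitly---any $\ybf_2>x$ is dominated by $\ybf_2=x$ for both agents on both instances---whereas your proposal mentions the assumption (``only if $\ybf_2\le x$ a.s.'') but does not justify it.
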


\begin{proof}
We will use again the instances from Figure~\ref{fig:fig1} to prove the claim setting $x = \frac{13-\sqrt{41}}{8}$. Recall that the optimal utility on instance $I$ is $\frac{4-x}{2}$ and on $I'$ it is $2-x$.

So, let $M$ be a  strategy-proof in expectation mechanism. 
Observe that on instance $I$ the mechanism should place the facility $f_1$ on 1
for the same reason as the one mentioned in the proof of Theorem~\ref{thm:2fub}; every 
other location for $f_1$ decreases the approximation guarantee of $M$. Suppose now that $M$ places $f_2$ on $y \in [0,1]$ 
according to the probability distribution $p(y)$. 
Without loss of generality we can assume that $p(y) = 0$ for every 
$y > x$. This is because the approximation guarantee of $M$ can increase if 
we place the facility on $x$ instead of some $y>x$. 
Hence, on instance $I$ under $M$ agent $a_1$ gets utility 1 from $f_1$ and
utility $\int_0^x p(y)(1-y)dy = 1 - \int_0^x p(y)y dy$ from facility $f_2$, so $u_1 = 2-\int_0^x p(y)ydy$ in total.
Similarly, agent $a_2$ gets utility 1 from $f_1$ and utility 
$1-x + \int_0^x p(y)ydy$ from facility $f_2$, so $u_2 = 2 - x + \int_0^x p(y)ydy$ in total.
Then, since $u_2 < u_1$, the approximation guarantee of $M$ on $I$ is at most 
\begin{align}
\label{eq:ra1}
\frac{2}{4-x}\cdot\left(2-\int_0^x p(y)ydy\right)
\end{align}


We now consider two cases according to the location in which $M$ places facility $f_1$ on instance $I'$.
If $M$ places $f_1$ on $y'_1 \geq x$, then without loss of generality we can
assume that $f_1$ is placed on 1 since every other location decreases the utility
of both agents. 
So suppose that $M$ places $f_1$ on 1 with some probability.

Furthermore, suppose that $M$ places $f_2$ on $y$ according to the probability 
distribution $\pi(y)$ when $f_1$ is placed on 1. Observe that we can assume 
that $M$ does not place $f_2$ on $y > x$, since the utility of both
agents could increase by placing it on $x$ instead. Thus, on instance $I'$, agent $a_2$ gets utility $1-x$ from facility $f_1$
and utility $1-x + \int_0^x \pi(y)ydy$ from facility $f_2$, so $u_2' = 2 - 2x + \int_0^x \pi(y)ydy$ 
in total. Similarly, agent $a_1$ gets total utility $u_1' = 2 -\int_0^x \pi(y)ydy > u_2'$. Since $M$ is
strategy-proof it must hold that $\int_0^x \pi(y)ydy \leq \int_0^x p(y)ydy$.
If this was not the case, then agent $a_2$ from instance $I$ could declare
preferences $(-1,1)$ and increase its utility. As a result the approximation guarantee of
$M$ on $I'$ is at most 
\begin{align}
\label{eq:ra2}
\frac{1}{2-x}\cdot\left(2-2x+\int_0^x p(y)ydy\right)
\end{align}
$M$ achieves the best approximation on both instances when 
the quantities from~\eqref{eq:ra1} and~\eqref{eq:ra2} are equal. Hence, if we 
equalize them and solve for the integral we get that $\int_0^x p(y)ydy = 
\frac{6x-2x^2}{8-3x}$ and the approximation guarantee is less than 0.851 on both
instances for the chosen $x$.

If the mechanism places $f_1$ on $y'_1 < x$, then on any location for $f_2$ 
there will be an agent with utility at most 1 and the approximation guarantee
of the mechanism will be at most $\frac{1}{2-x} < 0.851$. 
Thus, in all possible cases the approximation of $M$ is upper bounded by 0.851.
\end{proof}


\section{Deterministic Mechanisms}
In this section, we propose deterministic strategy-proof mechanisms. 
An initial approach would be to consider each facility independently and place 
it to its optimal location. As we have already proved this mechanism is strategy-proof
when the locations of the agents are public information.
%
However, it achieves poor approximation if the agents 
want to be away from the facilities. Consider the case with
$n$ agents located on $0, \frac{2\ell}{n}, \frac{3\ell}{n},
\ldots, \frac{(n-1)\ell}{n}, \ell$ and each having preferences $(-1,-1)$. 
Observe that the optimal location for one facility is to be placed on 
$\frac{\ell}{n}$ since this location maximizes the minimum distance between any 
agent and the facility. Thus, both facilities will be placed on the same 
location $\frac{\ell}{n}$. 
Then the agent located in 0 has utility $\frac{2\ell}{n}$, the minimum over all 
the agents. It is not hard to see that an optimal solution is to place
facility $f_1$ on 0 and facility $f_2$ on $\ell$ resulting in a utility of $\ell$
for each agent. Hence, the mechanism that places the facilities independently 
to their optimal locations is $\frac{2}{n}$-approximate.

The example above provides evidence that a mechanism with good approximation 
ratio should not put both facilities on the same location if
there are agents who have preference -1 for both facilities; in the worst case 
there is an instance with an agent located in the exact same location where the
facilities are placed and with preferences $(-1,-1)$ resulting in a zero approximation. 
On the other hand, the facilities should not be placed far away from each other. This 
is because, in the worst-case again, an agent might have preference -1 for the 
facility that is close to his location and preference 1 for the facility that is 
far from him.

Using the intuition gained from the discussion above we propose a mechanism for 
the 2-facility game that comprises these ideas and places the 
facilities symmetrically away from the endpoints of the segment and it is strategy-proof even if the locations of the agents is private information. 
Mechanism \fixed depicts our approach. It does not use any information from the
agents, thus it is de facto strategy-proof. 

\begin{definition}[\fixed Mechanism]
Let $z_f = 1-\frac{\sqrt{2}}{2}$. \fixed mechanism sets $\ybf_1=z_f \cdot \ell$ 
and $\ybf_2=(1-z_f)\cdot \ell$.
\end{definition}

\begin{theorem}
\label{thm:mech2}
\fixed  is $z_f \simeq 0.292$-approximate.
\end{theorem}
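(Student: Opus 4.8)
The plan is to show that for every profile $s$, $\fixed(s) \ge z_f \cdot \opt(s)$, where $z_f = 1 - \frac{\sqrt 2}{2}$ and we normalise $\ell = 1$, so that $\ybf_1 = z_f$ and $\ybf_2 = 1 - z_f$. Since $\fixed$ ignores all reports, strategy-proofness is immediate; the whole content is the approximation bound. I would first record the trivial upper bound $\opt(s) \le 2$ (each facility contributes at most $1$ to any agent's utility), and then argue a per-agent lower bound: for \emph{every} agent $i$ and every preference vector $t_i \in \{-1,0,1\}^2$ and every location $x_i \in [0,1]$, the utility $u_i(x_i, t_i, \fixed(s)) \ge z_f$. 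Taking the minimum over agents then gives $\fixed(s) \ge z_f = z_f \cdot 2 / 2 \ge z_f \cdot \opt(s)/ \ldots$ — more precisely, combining $\fixed(s) \ge z_f$ with $\opt(s)\le 2$ does not immediately give the ratio, so the real claim must be sharper: I expect one needs $u_i \ge z_f \cdot \opt(s)$ directly, which since $\opt(s) \le 2$ reduces to showing $u_i \ge 2 z_f = 2 - \sqrt 2$ for every agent whenever $\opt(s) = 2$, and more generally scaling with $\opt(s)$. The cleanest route is the per-agent bound $u_i(x_i,t_i,\fixed(s)) \ge 2 z_f = 2 - \sqrt 2$ for all $x_i, t_i$; then $\fixed(s) \ge 2 z_f \ge z_f \cdot \opt(s)$ since $\opt(s) \le 2$.

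The key step is therefore a finite case analysis over the four preference patterns $t_i \in \{(1,1),(1,-1),(-1,1),(-1,-1)\}$ (patterns containing a $0$ only help, since a $0$-coordinate contributes exactly $1$). For $t_i = (1,1)$: $u_i = 2 - |x_i - z_f| - |x_i - (1-z_f)|$; by symmetry the worst case is $x_i = 0$ or $x_i = 1$, giving $u_i = 2 - z_f - (1 - z_f) = 1 \ge 2 - \sqrt 2$. For $t_i = (-1,-1)$: $u_i = |x_i - z_f| + |x_i - (1 - z_f)|$, minimised at $x_i \in [z_f, 1-z_f]$ where it equals $1 - 2 z_f = \sqrt 2 - 1$; hmm, $\sqrt 2 - 1 \approx 0.414 < 0.586 \approx 2 - \sqrt 2$, so this is \emph{not} $\ge 2 z_f$. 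This tells me the per-agent bound $2z_f$ is false and the argument must instead relate $u_i$ to $\opt(s)$ itself: when some agent is "squeezed" by a $(-1,-1)$ preference near the centre, $\opt(s)$ is correspondingly smaller. So the correct structure is a global case analysis on the instance, splitting on whether any agent has a $(-1,-1)$-type squeezed location, bounding $\opt(s)$ from above in each regime.

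Concretely I would proceed as follows. Fix an optimal placement $\ybf^{\mathrm{opt}}$ achieving $\opt(s)$. Case (a): no agent has preference $-1$ for a facility at a nearby point — then I show every agent's $\fixed$-utility is at least some constant times their optimal utility directly via the $1$-Lipschitz structure of the utilities and the fact that $\fixed$'s two points are at distance $1 - 2z_f = \sqrt 2 - 1$ apart and $z_f$ from the nearest endpoint. Case (b): there is an agent $i$ whose preferences force $\opt(s)$ to be small — for instance an agent with $t_i = (-1,-1)$ located centrally forces $\opt(s) \le$ (something like) $1 + (\sqrt2-1) = \sqrt 2$ or less, because that agent's two distance-terms sum to at most... — and then compare with $\fixed$'s guarantee for that same agent. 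The main obstacle, and where I expect to spend the most effort, is handling the "mixed" preference $(-1,1)$ (or $(1,-1)$): here an agent wants to be far from one facility and close to the other, and an adversarial $x_i$ can simultaneously be close to $\ybf_1 = z_f$ (hurting the $-1$ term if that is the facility he wants to avoid) and far from $\ybf_2 = 1-z_f$ (hurting the $1$ term). The choice $z_f = 1 - \frac{\sqrt 2}{2}$ is presumably exactly the value that balances the mixed-preference worst case against the $(-1,-1)$ worst case, so the final step is to verify that $z_f$ equalises these two bounds — i.e. solve the scalar optimisation that makes both binding constraints equal to $z_f \cdot \opt(s)$ — and conclude that no better constant is forced, giving the tight ratio $z_f \simeq 0.292$.
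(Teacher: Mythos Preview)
Your instinct that the $(-1,-1)$ case breaks the naive bound $u_i \geq 2z_f$ is correct, and your diagnosis that ``$\opt(s)$ is correspondingly smaller'' in that regime is exactly the right idea. But the repair is much simpler than the global case split you propose. You do not need to split on the \emph{instance} at all; the argument is purely per-agent.

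The missing observation is this: for \emph{every} agent $i$, the quantity $u_i^*(x_i,t_i) := \max_{\ybf} u_i(x_i,t_i,\ybf)$ satisfies $u_i^* \geq \opt(s)$, simply because $\opt(s) = \max_{\ybf}\min_j u_j(\ybf) \leq \max_{\ybf} u_i(\ybf) = u_i^*$. Hence it suffices to show the \emph{individual} ratio $u_i(x_i,t_i,\fixed)/u_i^*(x_i,t_i) \geq z_f$ for every possible $(x_i,t_i)$, with no reference to the other agents or to $\opt$ whatsoever. This is what the paper does: a short table over the four nontrivial preference patterns and the two location regimes $x_i \leq z_f\ell$ versus $z_f\ell < x_i \leq \ell/2$ (the rest by symmetry) gives the two binding ratios $z$ (from $t_i = (-1,1)$) and $\frac{1-2z}{2-2z}$ (from $t_i = (-1,-1)$). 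Equating them yields $z_f = 1 - \tfrac{\sqrt2}{2}$, and the proof is complete.

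So your ``Case (a)/(b)'' scaffolding and the worry about how one agent's location interacts with $\opt(s)$ through the other agents are both unnecessary. Replace the global upper bound $\opt(s) \leq 2$ with the agent-specific bound $\opt(s) \leq u_i^*(x_i,t_i)$ and the whole argument collapses to a one-variable case analysis.
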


\begin{proof}
Tables~\ref{tab:main-low} and~\ref{tab:main-high} show the utility the agent located on $x_i$ gets under 
$\ybf=(z\cdot\ell, (1-z)\cdot\ell)$ and the corresponding ratio. 
Our goal is to find a $z \in [0,\ell]$ that maximizes the minimum ratio.
Thus, the optimal guarantee for \fixed is achieved
when $\frac{z}{\ell} = \frac{\ell - 2z}{2\ell - 2z}$. If we solve for
$z$, the feasible solution is $z_f = (1 - \frac{\sqrt{2}}{2})\ell$ and the approximation
guarantee follows.

Finally, observe that if the number of facilities to be placed is at least two, then 
$\max_\ybf \min_i u_i(x_i, \ti, \ybf) \geq 
\max_\ybf \min_i \frac{u_i(x_i, \ti, \ybf)}{u^*_i(x_i, \ti)}$, since 
$u^*_i(x_i, \ti) \geq \ell$. Thus, \fixed can be used for both \util and \happy 
objectives and since it does not use any information from 
the agents, it possesses all the desirable properties like group strategy 
proofness and false name proofness.

\begin{table}[h!]
    \begin{minipage}{0.45\textwidth}
        \centering
        \begin{tabular}{|c|c|c|l|}
            \hline
            $t_i$ & $u_i(x_i,\ti,\ybf)$ & $u^*_i(x_i,\ti)$ & Ratio \\ \hline
            1, 1  & $\ell + 2x_i$  & $2\ell$ & $\geq 1/2$ \\ \hline
            -1, 1 & $2z\cdot\ell$ & $2\ell - x_i$ & $\geq z$ \\ \hline
            1, -1 & $(2 - 2z)\cdot\ell$ & $2\ell - x_i$  & $\geq 1/2$ \\ \hline
            -1, -1 & $\ell - 2x_i$ & $2\ell -2x_i$ & $\geq \frac{1 - 2z}{2 - 2z}$\\
            \hline
        \end{tabular}
        \caption{Case analysis when $x_i \leq z\cdot\ell$ or $x_i \geq (1-z)\cdot\ell$.}
        \label{tab:main-low}
    \end{minipage}
    \hfill
    \begin{minipage}{0.45\textwidth}
        \centering
        \begin{tabular}{|c|c|c|l|}
            \hline
            $t_i$ & $u_i(x_i,\ti,\ybf)$ & $u^*_i(x_i,\ti)$ & Ratio \\ \hline
            1, 1 & $(1 + 2z)\cdot\ell$  & $2\ell$ & $\geq 1/2$ \\ \hline
            -1, 1 & $2x_i$ & $2\ell - x_i$ & $\geq \frac{2z}{2-z}$ \\ \hline
            1, -1 & $2\ell - 2x_i$ & $2\ell - x_i$ & $\geq 2/3$ \\ \hline
            -1, -1 & $(1 - 2z)\cdot\ell$ & $2\ell - 2x_i$ & $\geq \frac{1 - 2z}{2 - 2z}$ \\ \hline
        \end{tabular}
        \caption{Case analysis when $z\cdot\ell<x_i<(1-z)\cdot\ell$.}
        \label{tab:main-high}
    \end{minipage}
\end{table}

\end{proof}

Theorem~\ref{thm:mech2} shows the sharp contrast between 1-facility and 2-facility
games where both locations and preferences are private. Recall that 
~\cite{FJ15} proved that for 1-facility games there is no deterministic
strategy-proof mechanism with bounded approximation guarantee.
Observe furthermore that \fixed does not require any information from the agents. 
Next, we prove that it is optimal when no communication is allowed.

\begin{theorem}
\label{thm:cc-lb}
\fixed is the optimal deterministic mechanism when no communication is allowed.
\end{theorem}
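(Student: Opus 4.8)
The plan is to show that no deterministic mechanism with zero communication can beat the approximation ratio $z_f = 1 - \frac{\sqrt{2}}{2} \simeq 0.292$ achieved by \fixed. A zero-communication deterministic mechanism receives no input from the agents, so it is simply a fixed pair of locations $\ybf = (\ybf_1, \ybf_2) \in [0,\ell]^2$; by symmetry and relabeling we may assume $\ybf_1 \le \ybf_2$. The mechanism is automatically strategy-proof, so the only question is: over all choices of the pair $(\ybf_1, \ybf_2)$, what is the best worst-case approximation ratio, and is it exactly $z_f$?

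**The adversary's instances.**

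First I would argue that without loss of generality $\ybf_1 \le \ybf_2$ and, by a left-right symmetry of the segment, we may assume $\ybf_1 \le \ell/2 \le \ybf_2$ (or handle the case where both are on the same side separately, which is even worse for the mechanism). Then, given the fixed locations, I would exhibit a small family of single-agent instances — each consisting of one agent at a carefully chosen location $x$ with a carefully chosen preference vector $t \in \{-1,0,1\}^2$ — that forces the ratio down. The natural candidates mirror the rows of Tables~\ref{tab:main-low} and~\ref{tab:main-high}: (i) an agent with preference $(-1,-1)$ placed at the point most distant from the nearer facility (this penalizes facilities that are pushed toward the center), and (ii) an agent with preference $(-1,1)$ placed so that the ``$-1$'' facility is as close as possible while the ``$1$'' facility is as far as possible (this penalizes facilities that are pushed toward the endpoints). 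Placing the $(-1,1)$ agent at $x = \ybf_1$ gives utility essentially $\dist(x,\ybf_2) = \ybf_2 - \ybf_1$ against an optimum of $\approx 2\ell$, so the ratio is at most $(\ybf_2-\ybf_1)/(2\ell)$ up to lower-order terms; placing the $(-1,-1)$ agent at $x=0$ gives utility $\ybf_1 + \ybf_2$ (if $0 \le \ybf_1$) — wait, more carefully, utility $\min$ is over the two distances summed, so $\dist(0,\ybf_1)+\dist(0,\ybf_2) = \ybf_1+\ybf_2$, against optimum $2\ell$ (put both facilities at $\ell$), giving ratio $\le (\ybf_1+\ybf_2)/(2\ell)$. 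I would also need the instance that caps things when $\ybf_1,\ybf_2$ are both far from the endpoints, using a $(-1,-1)$ agent placed between them.

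**The optimization and the main obstacle.**

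With these instances in hand, the mechanism's worst-case ratio is at most the minimum of a handful of explicit linear (or piecewise-linear) functions of $\ybf_1/\ell$ and $\ybf_2/\ell$. I would then show this minimum is maximized exactly when the binding constraints coincide — the same balancing equation $\frac{z}{\ell} = \frac{\ell - 2z}{2\ell - 2z}$ that appears in the proof of Theorem~\ref{thm:mech2} — forcing $\ybf_1 = z_f\ell$, $\ybf_2 = (1-z_f)\ell$ and ratio $z_f$. The main obstacle is making the case analysis genuinely exhaustive: one must check every position of the pair $(\ybf_1,\ybf_2)$ relative to the endpoints and the midpoint (both on the left half, both on the right half, straddling, and the degenerate case $\ybf_1 = \ybf_2$), and for each, verify that one of the adversary instances drives the ratio to at most $z_f$, with the bound tight only at the \fixed configuration. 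The degenerate case $\ybf_1=\ybf_2$ is the easiest (the $(-1,-1)$ agent co-located with the facilities gives ratio $0$), and the ``both on one side'' cases reduce to the straddling analysis by monotonicity, so the real work is the straddling case, which is precisely the two-table computation already carried out for \fixed read in reverse.
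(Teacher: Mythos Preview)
Your approach is essentially the paper's: a zero-communication deterministic mechanism is just a fixed pair $(\ybf_1,\ybf_2)$, and one exhibits single-agent adversary instances that force the worst-case ratio down to $z_f$, with equality only at the \fixed configuration. The paper uses exactly the two instance families you identify --- a $(-1,-1)$ agent sitting on (or between) the facilities to penalize small spread, and a $(-1,1)$ agent at an endpoint to penalize large spread --- then combines the resulting inequalities with their mirror images to pin down $\ybf_1<z_f\ell$ and $\ybf_2>(1-z_f)\ell$ and derive a contradiction.

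One concrete slip to fix: the utility of a $(-1,1)$ agent at $x=\ybf_1$ is $|\ybf_1-\ybf_1| + \bigl(\ell - |\ybf_1-\ybf_2|\bigr) = \ell + \ybf_1 - \ybf_2$, not $\ybf_2-\ybf_1$ as you wrote. The quantity $\ybf_2-\ybf_1$ is the utility of the \emph{$(-1,-1)$} agent placed at $\ybf_1$ (this is the instance that ``penalizes facilities pushed toward the center,'' and the paper compares it against the optimum $2(\ell-\ybf_1)$ rather than $2\ell$). Once you swap those two computations back to their correct instances, your two binding constraints become $\frac{\ybf_2-\ybf_1}{2(\ell-\ybf_1)}$ and $\frac{\ell+\ybf_1-\ybf_2}{2\ell}$, and equating them (together with the symmetric pair) yields exactly the \fixed placement and ratio $z_f$.
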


\begin{proof}
Let $M$ be any deterministic mechanism that places the facilities with no communication. 
Since $M$ is deterministic, it places them on the same locations for any instance.
So, let $\ybf_1\cdot \ell$ and $\ybf_2\cdot \ell$ be the locations of the first and the second 
facility respectively. Without loss of generality assume that $0\leq \ybf_1\leq \ybf_2\leq 1$. 
We will prove our claim by contradiction. So, for the sake of contradiction assume that the
approximation ratio of $M$ is strictly better than $z=(1-\frac{\sqrt{2}}{2})$.
Without loss of generality we assume that  $\ybf_1 \leq \frac{1}{2}$. Consider the following two instances.
On the first instance there is only one agent on $\ybf_1\cdot\ell$ with preferences $(-1,-1)$. 
The utility of the agent under $M$ is $(\ybf_2-\ybf_1)\cdot \ell$. The optimal solution places both 
facilities on $\ell$ and the agent gets utility $(2-2\ybf_1)\cdot \ell$. So, the approximation ratio 
of $M$ is $\frac{\ybf_2-\ybf_1}{2-2\ybf_1}$. Since the approximation of $M$ is strictly greater than 
$z$, we get that
\begin{equation}
\label{eq:cc-lb1}
\ybf_1 < \frac{\ybf_2-2z}{1-2z}
\end{equation}
Now, consider the instance where there is only one agent on 0 with preferences $(-1,1)$.
Under $M$, the agent gets utility $(1+\ybf_1-\ybf_2)\cdot\ell$. The optimal solution for this instance 
places the first facility on $\ell$, the second one on 0, and the agent gets utility $2\ell$.
Hence, the approximation guarantee of $M$ on this instance is $\frac{1+\ybf_1-\ybf_2}{2}$.
Again, since we assume that the approximation is strictly greater than $z$, we get that
\begin{equation}
\label{eq:cc-lb2}
\ybf_1 > 2z+\ybf_2-1
\end{equation}
The combination of Equations~\eqref{eq:cc-lb1} and~\eqref{eq:cc-lb2} dictates that
$\ybf_2 > 3-\frac{1}{2z}-2z > 1-z$.
Similarly using another two instances, we can prove that $\ybf_1< z$. More 
specifically, we use the instance where there is only one agent on $\ybf_2\cdot \ell$ with preferences $(-1,-1)$ and the instance where there is only one agent on $\ell$ with preferences $(1,-1)$. Finally, consider again the instance where there is only one agent on 0 with preferences $(-1,1)$. Recall that the approximation guarantee of the mechanism on this instance is $\frac{1+\ybf_1-\ybf_2}{2}$.
So, since $\ybf_1<z$ and $\ybf_2>1-z$, we get that the approximation guarantee is strictly smaller than $z$
which is a contradiction. Our claim follows.
\end{proof}

\subsection{\fixedp mechanism}
In order to  describe \fixedp, we need to introduce the following events:
\begin{itemize}
\item $L_j$: Every agent wants facility $j$ below $\ell/2$. Formally, for every 
agent $i$ with $x_i \leq \frac{\ell}{2}$ it holds that $t_{ij} \in \{0,1\}$ and
for every agent $i$ with $x_i > \frac{\ell}{2}$ it holds that $t_{ij} \in \{0,-1\}$.
\item $H_j$: Every agent wants facility $j$ above $\ell/2$. Formally, for every 
agent $i$ with $x_i \leq \frac{\ell}{2}$ it holds that $t_{ij} \in \{0,-1\}$ and
for every agent $i$ with $x_i > \frac{\ell}{2}$ it holds that $t_{ij} \in \{0,1\}$.
\end{itemize}

\begin{tcolorbox}[title=\fixedp mechanism]
\textbf{Input:} Locations $x_1, \ldots, x_n$ and preferences $p_1, \ldots, p_n$.\\
\textbf{Output:} Locations $\ybf_1$ and $\ybf_2$.\\
Set $z_d= \frac{7}{22} \approx 0.31$.
\begin{enumerate}
\item \label{step:one} 
If events $L_1$ and $L_2$ occur, then set $\ybf_1=\ybf_2=z_d\cdot \ell$.
\item \label{step:two}
Else if events $L_1$ and $H_2$ occur, then set $\ybf_1=z_d \cdot \ell$ and $\ybf_2=(1-z_d)\cdot \ell$.
\item \label{step:three} 
Else if  events $H_1$ and $H_2$ occur, then set $\ybf_1=\ybf_2=(1-z_d)\cdot \ell$.
\item \label{step:four}
Else if  events $H_1$ and $L_2$ occur, then set $\ybf_1=(1-z_d)\cdot \ell$ and $\ybf_2=z_d\cdot \ell$.
\item \label{step:five}
Else set $\ybf_1=z_d \cdot \ell$ and $\ybf_2=(1-z_d)\cdot \ell$.
\end{enumerate}
\end{tcolorbox}

\begin{lemma}
\label{lem:fixedp-sp}
\fixedp is strategy-proof even when both locations and preferences are private.
\end{lemma}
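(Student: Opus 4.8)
The plan is to show that no agent can profit by deviating, and the key observation is that \fixedp only ever uses two aggregate ``events'' $L_1,H_1$ and $L_2,H_2$, and for each facility $j$ these events depend on agent $i$'s declaration in a very restricted way. I would first record the structural fact that for each facility $j$, the pair of events $(L_j,H_j)$ is monotone in the following sense: if agent $i$ is on the left half ($x_i\le \ell/2$), then declaring $t_{ij}=0$ keeps both $L_j$ and $H_j$ ``as satisfiable as possible'', declaring $t_{ij}=1$ can only keep $L_j$ alive (and kills $H_j$ unless it was already dead), and declaring $t_{ij}=-1$ can only keep $H_j$ alive; symmetrically for the right half. Hence an agent, by changing $t_{ij}$, can only move the mechanism between the five cases in a controlled way, and crucially cannot create an event that was false under the truth.

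Next I would argue that misreporting the \emph{location} $x_i$ is useless: the only thing $x_i$ affects is which side of $\ell/2$ the agent is on, hence which ``direction'' of event its preference contributes to. Since the facilities are always placed at $z_d\ell$ or $(1-z_d)\ell$, and $z_d<1/2<1-z_d$, one checks case-by-case that for an agent who truly has $t_{ij}=1$ (wants to be close), being counted on its true side is (weakly) better than being counted on the wrong side, because reporting the wrong side can only push facility $j$ toward the far endpoint; and dually for $t_{ij}=-1$. So WLOG the agent reports his true side, and it remains to handle preference misreports only.

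For the preference misreport analysis I would fix an agent $i$ and consider the effect of changing $t_{i1}$ and $t_{i2}$ separately (the mechanism's placement of $\ybf_1$ depends, through the case split, on both event-pairs, so I'd actually need to walk the 5-case table). The core inequalities to verify are of the form: if the truthful declaration lands in case $c$ and a deviation lands in case $c'$, then agent $i$'s total utility $u_{i1}+u_{i2}$ under the truth (evaluated with his true preferences) is at least that under $c'$. Because only five target configurations exist --- $(z_d,z_d)\ell$, $(z_d,1-z_d)\ell$, $(1-z_d,1-z_d)\ell$, $(1-z_d,z_d)\ell$ --- there are a bounded number of ordered pairs $(c,c')$ to check, and for each the utility difference is an explicit linear expression in $x_i$ and $z_d=7/22$; the monotonicity fact from the first paragraph guarantees that the only reachable deviations are exactly those that \emph{preserve or weaken} the agent's own event, so the facility relevant to the agent can only move \emph{away} from the side the agent wanted, making the deviation non-profitable.

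The main obstacle I anticipate is the bookkeeping in the fifth (``Else'') case: this case is triggered whenever the profile is not ``globally consistent'' for one or both facilities, and an agent near $\ell/2$ might hope to deviate so as to leave case~5 and enter one of cases 1--4. I would handle this by noting that to move \emph{into} case 1--4 the agent must make some event $L_j$ or $H_j$ true that was false, which by the monotonicity fact requires the agent to have been the \emph{only} obstruction --- but then the agent's own true preference already agrees with the direction of that event, so the resulting placement is (weakly) in the agent's favor and the move is not a strict improvement; and moving \emph{within} the ``else'' branch changes nothing. Everything else reduces to checking finitely many linear inequalities at $z_d=7/22$, which I would relegate to a short case table rather than spelling out here.
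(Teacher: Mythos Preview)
Your overall framework---enumerate all ordered pairs (truth step, lie step) among the five cases and verify the agent cannot gain---is exactly the paper's approach. But your attempt to shortcut the tabulation with ``monotonicity'' arguments has concrete errors, and as written the sketch does not constitute a proof.

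First, the claim ``an agent \ldots\ cannot create an event that was false under the truth'' is simply false. If agent $i$ on the left half has $t_{ij}=1$, then $i$ himself blocks $H_j$; by declaring $t_{ij}=0$ (or $-1$) he \emph{unblocks} it, and if no one else blocks $H_j$ it becomes true. So the agent \emph{can} create events---precisely by removing his own obstruction. This matters for the crucial case~5 analysis.

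Second, your case~5 argument is stated backwards. You write that if the agent was the only obstruction to an event, then ``the agent's own true preference already agrees with the direction of that event, so the resulting placement is (weakly) in the agent's favor.'' The opposite is true: being the obstruction to $H_j$ means the agent's true preference points \emph{away} from $H_j$'s direction (e.g., left-half agent with $t_{ij}=1$ wants $f_j$ on the left, while $H_j$ would send it right). The correct conclusion is that lifting the obstruction moves $f_j$ in the direction the agent \emph{dislikes}. That is the right intuition, but even stated correctly it is not yet a proof: a single lie can simultaneously change the placement of \emph{both} facilities (e.g., case~5~$\to$~case~3 moves $f_1$ but not $f_2$; case~5~$\to$~case~1 moves $f_2$ but not $f_1$; case~2~$\to$~case~4 swaps both), and you have to verify that the loss on the facility that moved ``against'' the agent is not outweighed by a gain on the other one. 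Your sketch never addresses this coupling.

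Third, your treatment of location misreports is incomplete. You argue that for \emph{fixed} declared $t_{ij}$, reporting the wrong side pushes $f_j$ away---but the agent can misreport side and preference together. The clean observation (which would rescue this step) is that flipping the declared side while negating the declared $t_{ij}$ leaves the events $(L_j,H_j)$ unchanged, so every combined misreport is equivalent to a pure preference misreport at the true side; hence WLOG the side is truthful. Your stated reason does not establish this.

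The paper handles all of this without the monotonicity narrative: it sets $w_{ij}=|\ybf_j'-x_i|-|\ybf_j-x_i|$ and $\Delta=t_{i1}w_{i1}+t_{i2}w_{i2}$, observes that the step reached under truth constrains the \emph{signs} of $(t_{i1},t_{i2})$ (e.g., for $x_i\le\ell/2$, step~1 forces $t_{i1},t_{i2}\in\{0,1\}$), then tabulates the signs of $(w_{i1},w_{i2})$ for every one of the $5\times 5$ transitions and checks $\Delta\ge 0$ cell by cell---explicitly noting that it is irrelevant \emph{how} the agent reached the lie step (location, preference, or both). That table \emph{is} the proof; your monotonicity story, even if repaired, would ultimately have to reduce to it.
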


\begin{proof}
We will prove that there is no deviation that can yield strictly higher utility for any agent $i \in N$.
Fix an arbitrary declaration for all the agents except agent $i$. For every $j \in \{1,2\}$, let
$\ybf_j$, respectively $\ybf'_j$, denote the location where \fixedp places facility $j$ when 
agent $i$ declares truthfully, respectively non-truthfully, his preference for facility $j$.
Let us define $w_{ij}: = |\ybf'_j - x_i| - |\ybf_j - x_i|$. Then, observe that the difference 
$\Delta$ between the utility that agent $i$ gets by reporting truthfully and misreporting, can be written 
as $\Delta = t_{i1}\cdot w_{i1} + t_{i2}\cdot w_{i2}$. Hence, there exists a profitable deviation for
agent $i$ if and only if there is a declaration such that $\Delta < 0$. To prove that such declaration
does not exist, we will use Tables~\ref{tab:mech-cases} and~\ref{tab:sign-of-w-cases}, assuming first agent $i$ is located on $x_i \leq \frac{\ell}{2}$. This is a concise 
representation of all cases that bypasses the repetitive case analysis.
Table~\ref{tab:mech-cases} presents possible preferences of agent $i$ when the mechanism places the 
facilities through Step $k$ and it is interpreted as follows. If $t_{ij}$ at Step $k$ can be 0 or 1, then
we write ``+'' on the corresponding cell of the table; if $t_{ij}$ at Step $k$ can be either 0 or $-1$,
then we write ``-'' on the corresponding cell. For example, if the mechanism places the facilities through 
Step 3, then for agent $i$ it {\em must} hold that $t_{i1} \in \{-1,0\}$ {\em and} $t_{i2} \in \{-1,0\}$.
In Table~\ref{tab:sign-of-w-cases}, the $(k,l)$th cell shows the signs of $w_{i1}$ and $w_{i2}$ when the 
outcome of the mechanism changes from Step $k$ to Step $l$, where Step $k$ corresponds to the outcome when
agent $i$ truthfully declares his preferences and Step $l$ corresponds to the outcome of the mechanism when
agent $i$ lies. 
\hl{Observe that we do {\em not} care how the agent manipulates the outcome. Hence, the agent can misreport his location, his preference, or both.}
So, the $(2,3)$th cell of Table~\ref{tab:sign-of-w-cases} corresponds to the case where 
\fixedp under the true declaration would place the facilities through Step 2, but under the misreport of 
agent $i$ would place them through Step 3. In addition, the signs $(+,0)$ mean that 
$w_{i1} > 0$ and $w_{i2}=0$. So, using this information
alongside the information from the agent's preferences from the third row of Table~\ref{tab:mech-cases}, we 
can deduce that $ \Delta \geq 0$ under this change. If we apply the same reasoning, we will see that 
$\Delta \geq 0$ for all possible cases when $x_i \leq \frac{\ell}{2}$. Hence, there is no profitable deviation 
for agent $i$, when $x_i \leq \frac{\ell}{2}$.

\begin{table}[h!]
    \begin{minipage}{0.35\textwidth}
        \centering
        \begin{tabular}{|c||c|c|}
            \hline
             & $t_{i1}$ & $t_{i2}$\\ \hline \hline 
            Step 1 & +  & + \\ \hline
            Step 2 & + & - \\ \hline
            Step 3 & - & - \\ \hline
            Step 4 & - & + \\ \hline
            Step 5 & + & - \\ \hline
        \end{tabular}
        \caption{Preferences of agent $i$ for every step of \fixedp, when $x_i \leq \frac{\ell}{2}$. \label{tab:mech-cases}}
    \end{minipage}
    \hfill
     \begin{minipage}{0.6\textwidth}
        \centering
        \begin{tabular}{|c||c|c|c|c|c|}
            \hline
            \backslashbox{True}{Lie} & Step 1 & Step 2 & Step 3 & Step 4 & Step 5\\ \hline \hline 
            Step 1 & (0,0)  & (0,+) & (+,+) & (+,0) & (0,+)\\ \hline
            Step 2 & (0,-)  & (0,0) & (+,0) & (+,-) & (0,0)\\ \hline
            Step 3 & (-,-)  & (-,0) & (0,0) & (0,-) & (-,0)\\ \hline
            Step 4 & (-,0)  & (-,+) & (0,+) & (0,0) & (-,+)\\ \hline
            Step 5 & (0,-)  & (0,0) & (+,0) & (+,-) & (0,0)\\ \hline
        \end{tabular}
        \caption{Signs for $(w_{i1}, w_{i2})$ when $x_i \leq \frac{\ell}{2}$. \label{tab:sign-of-w-cases}}
    \end{minipage}
\end{table}

Similarly when $x_i > \frac{\ell}{2}$, we can use Tables~\ref{tab:sign-of-preference-larger} 
and~\ref{tab:sign-of-w-larger} and see again that $\Delta\geq0$ in every case, thus the lemma follows.
Again, for Table~\ref{tab:sign-of-w-larger} \hl{we do not make any assumptions on how the agent manipulated the mechanism, hence we allow him to misreport both his location and his preferences.}

\begin{table}[h!]
    \begin{minipage}{0.35 \textwidth}
        \centering
        \begin{tabular}{|c||c|c|}
            \hline
             & $t_{i1}$ & $t_{i2}$\\ \hline \hline
            Step 1 & -  & - \\ \hline
            Step 2 & - & + \\ \hline
            Step 3 & + & + \\ \hline
            Step 4 & + & - \\ \hline
            Step 5 & - & + \\ \hline
        \end{tabular}
        \caption{Preferences of agent $i$ for every step of \fixedp, when $x_i > \frac{\ell}{2}$. \label{tab:sign-of-preference-larger}}
    \end{minipage}
    \hfill
    \begin{minipage}{0.6\textwidth}
        \centering
        \begin{tabular}{|c||c|c|c|c|c|}
            \hline
            \backslashbox{True}{Lie} & Step 1 & Step 2 & Step 3 & Step 4 & Step 5\\ \hline \hline
            Step 1 & (0,0) & (0,+) & (+,+) & (+,0) & (0,+)\\ \hline
            Step 2 & (0,-)  & (0,0) & (+,0) & (+,-) & (0,0)\\ \hline
            Step 3 & (-,-)  & (-,0) & (0,0) & (0,-) & (-,0)\\ \hline
            Step 4 & (-,0)  & (-,+) & (0,+) & (0,0) & (-,+)\\ \hline
            Step 5 & (0,-)  & (0,0) & (+,0) & (+,-) & (0,0)\\ \hline
        \end{tabular}
        \caption{Signs for $(w_{i1}, w_{i2})$ when $x_i > \frac{\ell}{2}$. \label{tab:sign-of-w-larger}}
    \end{minipage}
\end{table}
\end{proof}

\begin{theorem}
\label{thm:fixedp-appx}
\fixedp is $1 - 2z_d \approx 0.366$-approximate.
\end{theorem}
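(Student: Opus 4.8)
The plan is to show that for every profile $s$, if $\ybf$ is the output of \fixedp on $s$ then $\min_i u_i(x_i,t_i,\ybf)\ge (1-2z_d)\cdot\opt(s)$. The first reduction is to observe that $\opt(s)=\max_{\ybf'}\min_i u_i(x_i,t_i,\ybf')\le\min_i u_i^*(x_i,t_i)$, and that on the segment the two facilities are independent, so $u_i^*(x_i,t_i)=u_{i1}^*(x_i,t_{i1})+u_{i2}^*(x_i,t_{i2})$ with $u_{ij}^*=\ell$ when $t_{ij}\in\{0,1\}$ and $u_{ij}^*=\max\{x_i,\ell-x_i\}$ when $t_{ij}=-1$. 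Hence it suffices to pick the agent $i^\star$ attaining the minimum utility under \fixedp and bound $u_{i^\star}(\ybf)/\opt(s)$; and in the branches where the triggering events pin down the preferences of \emph{every} agent it is enough to bound the per-agent quantity $u_{i^\star}(\ybf)/u_{i^\star}^*$, since $\opt(s)\le u_{i^\star}^*$, and further $u_{i^\star}(\ybf)/u_{i^\star}^*\ge\min_j u_{i^\star j}(\ybf_j)/u_{i^\star j}^*$, which keeps the bookkeeping per-facility.

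Next I would handle Steps 1--4. In each of these branches the pair of events that fires forces, for every agent, each coordinate of $t_i$ into a prescribed two-element set, depending only on whether $x_i\le\ell/2$ or $x_i>\ell/2$ --- precisely the content of Table~\ref{tab:mech-cases}. For a fixed branch the facility locations are two of $\{z_d\ell,(1-z_d)\ell\}$, so I would tabulate, for each of the at most four admissible preference vectors and each of the $O(1)$ position intervals cut out by the breakpoints $z_d\ell$, $\ell/2$, $(1-z_d)\ell$, the values $u_{i^\star j}(\ybf_j)$ and $u_{i^\star j}^*$, and check that every ratio is at least $1-2z_d$. Whenever a coordinate equals $1$ the corresponding ratio is at least $\tfrac12$, and whenever a coordinate equals $-1$ it reduces to an elementary expression like $z_d$ or $\tfrac{1-2z_d}{2-2z_d}$ evaluated at the extreme position; the binding configuration is an agent with $t_i=(-1,-1)$ sitting just past $\ell/2$, whose ratio tends to $1-2z_d$. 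This is also the place where the specific value $z_d=\tfrac{7}{22}$ gets pinned down: it is the choice that equalizes this worst case against the worst case coming from the ``mixed'' branch below (and, for $2z_d\ell\ge(1-2z_d)\ell$, guarantees the $(-1,-1)$ agent is not the minimizer elsewhere).

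The main obstacle is Step 5. The trouble is that the ``else'' condition is a statement about the whole profile, not about individual agents, so the per-agent ratio can collapse to $z_d<1-2z_d$ (e.g.\ an agent at $\ell$ whose first coordinate is $1$ gets only $z_d\ell$ from $f_1$ placed at $z_d\ell$), and the argument above breaks. The lever I would use is the structural identity that Step 5 is reached \emph{exactly} when $(L_1\vee H_1)\wedge(L_2\vee H_2)$ fails, i.e.\ when $L_1,H_1$ both fail or $L_2,H_2$ both fail. In the first case there exist agents $p$ and $q$ with opposing demands for facility $1$: one of them is best served by $f_1$ near $\ell$, the other by $f_1$ near $0$, so for any candidate location $m$ of $f_1$ the quantities $u_{p1}(m)$ and $u_{q1}(m)$ sum to a constant (namely $x_q-x_p$ on the relevant range), forcing $\min\{u_{p1}(m),u_{q1}(m)\}\le (x_q-x_p)/2$; combined with the trivial bound $u_{i2}\le\ell$ this yields an explicit upper bound on $\opt(s)$ in terms of $x_p,x_q$, and the sub-case $L_2\wedge H_2$ failing is handled symmetrically. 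I would then compare this $\opt$ bound against $\min_i u_i(\ybf)$ under the fixed placement $(z_d\ell,(1-z_d)\ell)$ --- which for agents not pathologically close to a facility is routinely at least $(1-2z_d)\ell$ --- and verify the ratio remains at least $1-2z_d$; the arithmetic is exactly what $z_d=\tfrac{7}{22}$ is tuned to make work. Extracting a tight-enough $\opt$ bound from the weak information ``one facility's preferences are mixed'', and matching it to $\min_i u_i(\ybf)$ in the sub-case where the mixed facility is the one the minimizer cares about, is the step I expect to demand the most care.
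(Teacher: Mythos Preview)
Your approach for Steps~1--4 tracks the paper's closely: both of you bound $u_{i^\star}(\ybf)/u_{i^\star}^*$ for the minimizer, with you factoring per-facility while the paper cases on the minimizer's preference vector; both arrive at $1-2z_d$ as the binding ratio (from a $(-1,-1)$ agent near $\ell/2$). For Step~5 your conflict-witness idea is also what the paper does: it exhibits an auxiliary agent $i_0$ that must be present for Step~5 to fire and uses it to cap \opt.

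There is, however, a real gap in the Step~5 argument, and it is shared by the paper's own proof. Take the two-agent instance with $i$ at $z_d\ell$ having $t_i=(-1,-1)$ and $i_0$ at $\tfrac{\ell}{2}-\epsilon$ having $t_{i_0}=(1,0)$; this is the very instance the paper singles out for the $(-1,-1)$ case. Step~5 fires, and \fixedp gives $u_i=(1-2z_d)\ell$. But placing $\ybf_1=\ybf_2=\ell$ yields $u_i=2(1-z_d)\ell$ and $u_{i_0}=\tfrac{3\ell}{2}-\epsilon$, so $\opt=2(1-z_d)\ell=u_i^*$. The resulting ratio is $\tfrac{1-2z_d}{2-2z_d}=\tfrac{4}{15}<1-2z_d$. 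Your proposed conflict bound---one facility has opposed demands, so $\min(u_{p1},u_{q1})$ is at most half the sum, and together with $u_{\cdot 2}\le\ell$ one gets $\opt\lesssim\tfrac{7\ell}{4}$---is not strong enough here, and no sharper bound is extractable from the Step~5 hypothesis: the witness $i_0$ is perfectly content with both facilities at $\ell$, so the conflict does not pull \opt below $u_i^*$. This is precisely the sub-case you flagged as needing the most care; the paper's handling of it contains arithmetic slips (its stated optimal placement $\ybf_2=z_d$ would give agent $i$ zero utility from $f_2$, and its asserted inequality $1-2z_d\le\tfrac{4-8z_d}{7+2z_d}$ is false at $z_d=\tfrac{7}{22}$).
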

\begin{proof}
In order to prove our claim, we will focus on the agent that gets the minimum utility
under \fixedp. We will prove that for every possible 
combination of his preferences and his location the agent gets at least $\frac{2z_d}{2-z_d}$ 
the fraction of the utility he would get under an optimal solution. So, let $i$ be an agent
that gets minimum utility under \fixedp. 
Without loss of generality, we will assume that he is located below $\frac{\ell}{2}$. 
Observe that for the preference combinations $(0,1), (1,0), (0,-1), (-1,0)$ the agent
gets utility at least $\ell$, while the maximum utility he can get is trivially bounded by $2\ell$.
Hence, if the agent's preferences are any of these combinations, then under any location for the
facilities the agent gets at least half of his maximum utility and the mechanism is at least 
$\frac{1}{2}$-approximate.
\begin{itemize}
\item $p_i = (1,1)$. Observe that if there exists an agent with preferences $(1,1)$, then 
\fixedp will locate the facilities either through Step~\ref{step:one}, 
or through Step~\ref{step:five}. 
Observe that under any of these steps,  agent $i$ gets utility at least $\ell$, 
while the maximum utility he can get is bounded by $2\ell$. So, the mechanism is 
$\frac{1}{2}$-approximate in any of these steps. 
%
\item $p_i = (1,-1)$. When there exists an agent below $\frac{\ell}{2}$ with preferences $(1,-1)$, 
\fixedp will place the facilities either through Step~\ref{step:two}, 
or through Step~\ref{step:five}. Observe that both steps place the facilities in the same way.
If we check Tables~\ref{tab:main-low} and~\ref{tab:main-high}
we can see that in any case the ratio of the mechanism is greater than $\frac{1}{2}$.
\item  $p_i = (-1,1)$. When there exists an agent below $\frac{\ell}{2}$ with preferences 
$(-1,1)$, then \fixedp will place the facilities either through Step~\ref{step:four}, or 
through Step~\ref{step:five}. In the worst case scenario when Step~\ref{step:four} is used,
agent $i$ is located at $x_i = \frac{\ell}{2} - \epsilon$, for some $\epsilon > 0$. In this
case his utility is $u_i = \ell + 2 \epsilon$, so the ratio of the mechanism is greater than 
$\frac{1}{2}$. 

The utility of $i$ from \fixedp in this step is $u_i = 2z_d \ell$, when $x_i \leq z_d$ and $u_i = 2x_i$, when $z_d < x_i \leq \frac{\ell}{2}$. If Step~\ref{step:five} is chosen, there should exist an agent $i_0$ with $x_{i_0} \geq \frac{\ell}{2}$ and $t_{i_02} = 1$ or with $x_{i_0} < \frac{\ell}{2}$ and $t_{i_02} = -1$. 
The utility of the optimal solution for agent $i$, gets maximized when agent $i_0$ is located in $x_{i_0} = \frac{\ell}{2}$ with preferences $t_{i_0} = (0,1)$. In this case the optimal places $f_1$ on $\ybf_1 = \ell$ and the $f_2$ on $\ybf_2 = \frac{\ell}{4}$. The utility of $i$ under opt is then $u_i = \frac{7 \ell}{4}$, and the approximation in this step is 
\begin{equation}\label{approximation_ratio1}
\frac{8z_d}{7 \ell}
\end{equation}

%
\item  $p_i = (-1,-1)$. When there exists an agent below $\frac{\ell}{2}$ with preferences $(-1,-1)$, then 
\fixedp will place the facilities either through Step~\ref{step:three}, or 
through Step~\ref{step:five}. When Step~\ref{step:three} is used by the mechanism, agent $i$ 
gets utility $(1-z_d-x_i)\cdot 2\ell$ while the optimal value is trivially bounded by $(1-x_i)\cdot 2\ell$. Hence the approximation guarantee from this step for $x_i \leq \frac{\ell}{2}$ is 
\begin{equation}\label{approximation_ratio2}
\frac{1-z_d-x_i}{1-x_i} \geq 1-2z_d
\end{equation}

When Step~\ref{step:five} is used, the worst case instance for the
mechanism is when agent $i$ is located on $z_d$ and there is another agent on $\frac{\ell}{2}-\epsilon$ with preferences $(1,0)$. The optimal mechanism will place $f_1$ on $\ybf_1 = \frac{3\ell +2z_d-2\epsilon}{4}$ and $f_2$ on $\ybf_2 = z_d$.
Then, the utility of agent $i$ in the optimal solution is $\frac{7\ell+2z_d-2\epsilon}{4}$ while
the utility he gets under \fixedp is $(1-2z_d)\cdot\ell$. Hence, the 
approximation ratio of \fixedp is 

\begin{equation}\label{approximation_ratio3}
\frac{4\ell - 8z_d}{7\ell +2z_d}    
\end{equation}

\end{itemize}
We first observe from \eqref{approximation_ratio2} and \eqref{approximation_ratio3} that $1-2z_d \leq \frac{4-8z_d}{7+2z_d}$. Hence, the value of $z_d$ for which the approximation guarantee is maximized can be found if we equalize \eqref{approximation_ratio1} and \eqref{approximation_ratio2}: $\frac{8z_d}{7\ell} = 1-2z_d \Rightarrow z_d = \frac{7}{22} \approx 0.31$. Then the approximation ratio of the mechanism is $\frac{4}{11} \approx 0.36$.
\end{proof}

Observe that since \fixedp asks for the exact location of every agent, it requires 
arbitrarily large communication; this happens for example when the location $x_i$ of an agent $i$ 
is irrational. However, a closer look shows that this is not necessary. An interesting question is whether there exists
a deterministic mechanism that achieves better approximation when every agent communicates $O(1)$ bits.

\begin{theorem}
\label{thm:fixedp-cc}
\hl{The communication complexity of} \fixedp \hl{is 5 bits per agent.}
\end{theorem}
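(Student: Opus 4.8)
The plan is to show that the mechanism \fixedp can be run using only the binary information of which of the five steps applies, and that determining the applicable step requires each agent to transmit at most $5$ bits. The key observation is that \fixedp never uses the exact locations $x_i$ or the full preference vectors $p_i$ in computing the output: the output depends only on which of the events $L_1, H_1, L_2, H_2$ hold, and these events are determined by the pair $(\mathrm{sgn}(x_i - \ell/2), t_{i1}, t_{i2})$ for each agent. So the first step is to make precise that the output of \fixedp is a function of the tuple $(\mathbf{1}[L_1], \mathbf{1}[L_2], \mathbf{1}[H_1], \mathbf{1}[H_2])$ together with the fact that no event holds (to distinguish Step~\ref{step:five}), hence of the collective reports of the agents.

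\textbf{Second,} I would count the bits each agent needs to send so that the planner can evaluate the four events. For facility $j$, event $L_j$ asks whether agent $i$ with $x_i \le \ell/2$ has $t_{ij} \in \{0,1\}$, and whether agent $i$ with $x_i > \ell/2$ has $t_{ij} \in \{0,-1\}$; event $H_j$ asks the complementary condition. An agent who knows his own location relative to $\ell/2$ can, for each facility $j$, report one bit: ``my preference for $j$ is compatible with $L_j$'' and one bit ``my preference for $j$ is compatible with $H_j$''. That is $2$ bits per facility, so $4$ bits total for the two facilities. One additional bit is needed because the disjunction in Step~\ref{step:five} is triggered whenever none of the four combinations $(L_1,L_2),(L_1,H_2),(H_1,H_2),(H_1,L_2)$ occurs; in fact a careful encoding shows a single extra bit per agent suffices to let the planner decide between the ``pure'' cases (Steps 1--4) and the ``mixed'' case (Step 5). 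Summing gives $5$ bits per agent. I would also note that strategy-proofness is preserved: by Lemma~\ref{lem:fixedp-sp}, \fixedp is strategy-proof even when agents may misreport both location and preference arbitrarily, and since the $5$-bit protocol only restricts the message space to a coarsening of the full report, any manipulation available in the bit protocol is already available in the original mechanism, so no new profitable deviation is introduced.

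\textbf{The main obstacle} is the bookkeeping to confirm that exactly $5$ bits — not $4$ and not $6$ — are both sufficient and, ideally, necessary: one must verify that the planner truly cannot reconstruct the applicable step from only the $4$ event-indicator bits in every corner case (e.g.\ when some events hold for a strict subset of agents, or when agents at exactly $\ell/2$ appear), and conversely that $5$ bits always suffice to disambiguate Step~\ref{step:five}. I expect this to reduce to a short finite case check over the possible truth-assignments to $(L_1,H_1,L_2,H_2)$ per agent, combining them across agents by noting that $L_j$ (resp.\ $H_j$) holds globally iff it holds for every agent. Once that case analysis is laid out, the bit count is immediate and the theorem follows.
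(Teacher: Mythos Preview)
Your proposal reaches the right upper bound but is both more complicated than needed and contains a mistaken step. The paper's proof is entirely elementary: each agent transmits one bit $l$ indicating whether $x_i \le \ell/2$, and two bits $(f_{j1},f_{j2})$ encoding the ternary value $t_{ij}\in\{-1,0,1\}$ for each facility $j\in\{1,2\}$. That is $1+2+2=5$ bits. From these raw bits the planner can evaluate each of $L_1,L_2,H_1,H_2$ directly, and hence select the correct step. No case analysis over ``pure'' versus ``mixed'' situations is required.

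Your alternative encoding is in fact \emph{stronger} than you realise, which is why your argument for the fifth bit breaks. With the four ``event-compatibility'' bits you describe (for each $j$, one bit ``I am compatible with $L_j$'' and one bit ``I am compatible with $H_j$''), the planner already knows everything: $L_j$ holds globally iff every agent's $L_j$-bit is $1$, and likewise for $H_j$; Step~5 is simply the \texttt{else} branch once the four events are evaluated. So no extra bit is needed to ``disambiguate Step~5'', and your stated reason for adding a fifth bit is incorrect. (This also means your planned finite case check would not yield the lower bound you are hoping for.)

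Relatedly, the theorem as stated and proved in the paper is purely an \emph{upper bound}: five bits per agent suffice. The paper does not claim, and does not prove, that four bits are insufficient. The ``main obstacle'' you identify --- showing that $5$ rather than $4$ bits are necessary --- is therefore not part of the theorem, and you should not attempt it here.
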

\begin{proof}
\hl{Observe that} \fixedp \hl{can compute its output by only computing which events from $L_1, L_2, H_1, H_2$ occur. This can be done only with a bitstring of 5 bits per agent. So, agent $i$ will send the bit string $(l,f_{11},f_{12}, f_{21}, f_{22})$, where $l$ will contain information about the location $x_i$; $f_{11}$ and $f_{12}$ will contain information about $t_{i1}$; $f_{21}$ and $f_{22}$ will contain information about $t_{i2}$. The agent will send the bits to the mechanism under the following rules.}
\begin{itemize}
    \item \hl{If $x_i \leq \frac{\ell}{2}$, then $l=0$; else $l=1$.}
    \item \hl{If for some $j \in \{1,2\}$ it holds $t_{ij} = 0$, then $f_{j1} = f_{j2} = 0$.}
    \item \hl{If for some $j \in \{1,2\}$ it holds $t_{ij} = 1$, then $f_{j1} = 0$ and $f_{j2} = 1$.}
    \item \hl{If for some $j \in \{1,2\}$ it holds $t_{ij} = -1$, then $f_{j1} = 1$ and $f_{j2} = 1$.}
\end{itemize}
\hl{It is not hard to see that this information suffices for the mechanism to correctly compute which events occur and thus output the correct locations for the facilities.}
\end{proof}

\section{Randomized mechanisms}
In this section, we propose two randomized mechanisms, \rand and \randp that 
achieve constant approximation ratio and are universally strategy-proof and strategy-proof in expectation, respectively, even when both locations and preferences are private information. \rand requires zero communication and
\randp can be implemented using five bits per agent.

\begin{definition}[\rand mechanism]
\rand sets $\ybf_1=\ybf_2=0$ with probability $\frac{1}{2}$ and $\ybf_1=\ybf_2=\ell$ 
with probability $\frac{1}{2}$.
\end{definition}

\begin{theorem}
\label{thm:rand}
\rand is universally strategy-proof and achieves $\frac{1}{2}$ approximation.
\end{theorem}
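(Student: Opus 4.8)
The statement has two parts: universal strategy-proofness and the $\frac12$-approximation guarantee. Universal strategy-proofness is immediate and should be dispatched first: \rand is a probability distribution (with weights $\frac12,\frac12$) over two \emph{fixed} mechanisms, namely ``place both facilities at $0$'' and ``place both facilities at $\ell$''. Each of these deterministic mechanisms ignores the reported profile entirely, so no agent's report can change the output; hence each is trivially strategy-proof, and a distribution over strategy-proof deterministic mechanisms is universally strategy-proof by definition. This part is essentially one sentence.

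The substance is the approximation bound. I would fix an arbitrary instance, let $i$ be \emph{any} agent, and lower-bound the expected utility $\mathbb{E}[u_i] = \tfrac12 u_i(x_i,t_i,(0,0)) + \tfrac12 u_i(x_i,t_i,(\ell,\ell))$ by $\tfrac{\ell}{2}$; since $\opt \le 2\ell$ always (each of the two facilities contributes at most $\ell$ to any agent by the normalization in~\eqref{eq:util}), this gives $\mathbb{E}[u_i] \ge \tfrac{\ell}{2} \ge \tfrac14 \opt$ — wait, that only gives $\tfrac14$. To get $\tfrac12$ I need the sharper observation that when the minimum-utility agent has a preference of the form $(0,\cdot)$, $(1,\cdot)$ with a $0$ or $1$ somewhere, his utility is bounded below differently; more carefully, I should do the case analysis on $t_i \in \{-1,0,1\}^2$ and show in each case that $\mathbb{E}[u_i] \ge \tfrac12 u_i^*(x_i,t_i) \ge \tfrac12 \opt$, using $u_i^* \le 2\ell$ only as a crude ceiling but getting the factor from the per-facility structure. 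Concretely: for a single facility $j$, $u_{ij}(x_i,t_i,0) + u_{ij}(x_i,t_i,\ell) = \ell$ if $t_{ij}=-1$ (distances $x_i$ and $\ell-x_i$), $= 2\ell$ if $t_{ij}=0$, and $= \ell$ if $t_{ij}=1$ (utilities $\ell-x_i$ and $x_i$). So $\mathbb{E}[u_{ij}] \ge \tfrac{\ell}{2}$ in every case, hence $\mathbb{E}[u_i] = \mathbb{E}[u_{i1}] + \mathbb{E}[u_{i2}] \ge \ell$. Since this holds for every agent, $\mathbb{E}[\min_i u_i] \ge \min_i \mathbb{E}[u_i]$ is the wrong direction — I instead argue $\min_i \mathbb{E}[u_i]$ is not what we want; rather, note that on each of the two outcomes the realized minimum utility is some number, and I want $\tfrac12(\text{min over agents at }0) + \tfrac12(\text{min over agents at }\ell) \ge \tfrac12 \opt$. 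The clean way: for the specific agent $i^*$ achieving the optimal-instance minimum is not needed; instead observe $\opt \le 2\ell$, and \rand delivers expected \emph{social-minimum} at least $\ell$ because... hmm, the min does not distribute over the expectation nicely.

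The correct and simplest route, which I would actually write: let $W = \min_i u_i$ be the (random) objective value of \rand. We have $W = \min_i u_i$ under outcome $(0,0)$ with prob $\tfrac12$ and under $(\ell,\ell)$ with prob $\tfrac12$. For \emph{each fixed} agent $i$, $u_i(x_i,t_i,(0,0)) + u_i(x_i,t_i,(\ell,\ell)) \ge \ell + \ell = 2\ell$ is false in general (it can be as low as $\ell + \ell$? recompute: two facilities, each contributing $\ge \ell/2$ in expectation, i.e.\ sum of the two realizations $\ge \ell$ per facility, so $\ge 2\ell$ total summed over the two outcomes). Thus $\max\{u_i((0,0)), u_i((\ell,\ell))\} \ge \ell$ for every $i$. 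This does not immediately bound $\mathbb{E}[\min]$. The honest fix is: the paper's $\opt$ for \util is $\le 2\ell$ trivially, but also the optimal assignment gives \emph{every} agent utility $\le u_i^* \le 2\ell$, and moreover on the worst instance the optimum is constrained. I would therefore prove the bound by showing $\mathbb{E}[W] \ge \ell$ directly: on outcome $(0,0)$ the min-utility agent gets $u_{\min}^{0}$ and on $(\ell,\ell)$ gets $u_{\min}^{\ell}$; I claim $u_{\min}^0 + u_{\min}^\ell \ge 2\ell$ is too strong, so instead I bound $\opt \le u_{\min}^0 + u_{\min}^\ell$? No. \textbf{The main obstacle} is precisely this interchange of $\min$ and expectation, and I expect the author handles it by noting $\opt \le 2\ell$ combined with a per-agent argument that each agent's expected utility is $\ge \ell$, then invoking that the \emph{reported} instance's optimum, being at most $2\ell$, makes $\mathbb{E}[u_i] \ge \ell \ge \tfrac12 \opt$ for the single agent that matters — and the agent that matters is the one the adversary's $\opt$ is ``about,'' so one shows for \emph{that} agent $\mathbb{E}[u_i]\ge \ell \ge \tfrac12\opt$; since \rand's value is $\min_i u_i \ge$ (something) only in expectation, the cleanest statement is $\mathbb{E}[\text{\rand}] \ge \ell \ge \tfrac12 \opt$ using $\opt\le 2\ell$ and the fact that \rand's expected value equals $\tfrac12 u_{\min}^0 + \tfrac12 u_{\min}^\ell$ where I show $u_{\min}^0 + u_{\min}^\ell \ge 2\ell$ via: for the agent realizing $u_{\min}^0$, his utility at $(\ell,\ell)$ is $\ge 2\ell - u_{\min}^0$ by the per-facility sum identity, and that is $\ge u_{\min}^\ell$... which again is backwards. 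I would resolve this by the per-facility calculation applied to the min-agent of the \emph{optimal} instance: pick the agent $i$ with $u_i^{OPT} = \opt$; then $\mathbb{E}_{\text{\rand}}[u_i] = \tfrac12(u_i(0,0)+u_i(\ell,\ell)) \ge \ell \ge \tfrac12 u_i^{OPT} = \tfrac12\opt$ since $u_i^{OPT}\le 2\ell$; and \rand's objective is $\min_j u_j \le u_i$, but in \emph{expectation} $\mathbb{E}[\min_j u_j]$ could be below $\mathbb{E}[u_i]$. So this is genuinely the crux and I would need the stronger claim that simultaneously \emph{all} agents get $\ge \ell/2$ expected from each facility — giving $\mathbb{E}[u_j]\ge\ell$ for all $j$ — and then argue that because the two outcomes are ``antipodal'' the realized minimum on one of them is large; specifically $u_{\min}^0 + u_{\min}^\ell \ge 2\ell$ because any agent appearing as the argmin on the left contributes at $(\ell,\ell)$ at least $2\ell$ minus its left value — I will make this rigorous by taking $i_0 = \arg\min$ at outcome $(0,0)$ and noting $u_{\min}^\ell \le u_{i_0}(\ell,\ell) = $ (sum identity) $\ge 2\ell - u_{i_0}(0,0) = 2\ell - u_{\min}^0$, which bounds $u_{\min}^\ell$ from below only if the inequality flips correctly — and indeed $u_{i_0}(0,0) + u_{i_0}(\ell,\ell) \ge 2\ell$ need not hold (it can be exactly $\ell+\ell=2\ell$ only if no indifferent facilities lower it; with a $0$-preference it's $\ge 3\ell$). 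Rechecking: the per-facility sum is $\ell$ (prefs $\pm1$) or $2\ell$ (pref $0$), so the two-facility sum is \emph{at least} $2\ell$, always. Good — so $u_{i_0}(0,0)+u_{i_0}(\ell,\ell)\ge 2\ell$ \emph{does} hold, giving $u_{i_0}(\ell,\ell) \ge 2\ell - u_{\min}^0$; but $u_{\min}^\ell \le u_{i_0}(\ell,\ell)$ goes the wrong way. The escape: $\mathbb{E}[W] = \tfrac12 u_{\min}^0 + \tfrac12 u_{\min}^\ell$, and I want $\ge \ell$. Using $u_{\min}^\ell \ge 0$ is not enough. I will instead argue $\opt \le u_{\min}^0 + u_{\min}^\ell$: indeed for \emph{every} agent $j$, $u_j^{OPT} \le u_j(0,0) + u_j(\ell,\ell)$ is \emph{false}. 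I will therefore conclude, as I believe the paper does, by the clean bound $\mathbb{E}[\text{\rand}] \ge \ell$ via showing every agent's expected utility is $\ge \ell$ AND that on the worst instance the optimum structure forces the realized minimums to jointly cover $2\ell$; the write-up will present the per-facility identity $u_{ij}(x_i,t_i,0)+u_{ij}(x_i,t_i,\ell)=\ell$ or $2\ell$, conclude $\mathbb{E}[u_i]\ge\ell$ for all $i$, hence $\mathbb{E}[\text{\rand value}] \ge \ell$ because the value is a min of quantities each of which — evaluated at the appropriate outcome — the argument forces; combined with $\opt \le 2\ell$ this yields the $\tfrac12$ ratio. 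I expect the interchange-of-min-and-expectation subtlety to be the one place the proof must be careful, and I would handle it by exhibiting, for the worst-case optimum, that the two antipodal outcomes' minimum utilities sum to at least $\opt$.
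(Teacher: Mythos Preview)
You have all the ingredients of the paper's proof and in fact you write it down explicitly in the middle of your proposal: every agent's per-facility expected utility is at least $\ell/2$ (since $u_{ij}(\cdot,0)+u_{ij}(\cdot,\ell)$ equals $\ell$ when $t_{ij}\in\{-1,1\}$ and $2\ell$ when $t_{ij}=0$), hence $\mathbb{E}[u_i]\ge\ell$ for every $i$, while $\opt\le 2\ell$. The paper's proof is exactly this and stops here.

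Your entire struggle with the interchange of $\min$ and expectation stems from a misreading of the approximation criterion for randomized mechanisms in this paper. The value of a randomized mechanism for the \util objective here is $\min_i \mathbb{E}[u_i]$, \emph{not} $\mathbb{E}[\min_i u_i]$. This is the convention used throughout the paper (see the proof of Theorem~\ref{alg-rand-inapprox}, where expected utilities are computed per agent and then the minimum is taken; likewise Theorem~\ref{thm:randp-apx}). Under that convention your argument is already complete: $\min_i \mathbb{E}[u_i]\ge\ell\ge\tfrac12\opt$.

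Your worry is not merely cosmetic, though: under the $\mathbb{E}[\min_i u_i]$ interpretation the theorem would actually be \emph{false}. Take two agents at $0$ and $\ell$, each with preferences $(1,1)$. On outcome $(0,0)$ the minimum utility is $0$ (agent at $\ell$); on outcome $(\ell,\ell)$ the minimum is again $0$ (agent at $0$). So $\mathbb{E}[\min_i u_i]=0$, while $\opt=\ell$ (place both facilities at $\ell/2$). This also kills your attempted inequality $u_{\min}^0+u_{\min}^\ell\ge\opt$, since here the left side is $0$. So that route cannot be salvaged; the resolution is the definitional one above.
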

\begin{proof}
Firstly, it is easy to see that the mechanism is universally strategy-proof
since in each case, the mechanism chooses a fixed location, which is 
strategy-proof.
We will prove that every agent gets utility at least $\frac{\ell}{2}$ in 
expectation from every facility. 
Suppose that agent $i \in N$ is located on $x_i$ and has preferences \ti.
Let us study the expected utility that the agent gets from facility $j$.
If $t_{ij}=1$, then the agent's utility is $\ell - x_i$ when $\ybf_j=0$ and 
$x_i$ when $\ybf_j=\ell$. 
If $t_{ij}=-1$, then the agent gets utility $x_i$ if $\ybf_j=0$ and $\ell-x_i$
if $\ybf_j=\ell$. If $t_{ij}=0$, then the agent gets utility $\ell$ irrespectively
from $\ybf_j$. As a result, the agent gets utility at least 
$\frac{\ell}{2}$ in expectation from each facility. So in total the agent in expectation 
gets utility at least $\ell$. Since, the maximum utility is trivially bounded by $2 \ell$, 
the theorem follows.
\end{proof}

Although \rand seems naive, it achieves the best approximation so far, using zero
communication as well.

\begin{theorem}
    \rand is the optimal mechanism when no communication is allowed.
\end{theorem}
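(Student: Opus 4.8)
The plan is to mirror the structure of the proof of Theorem~\ref{thm:cc-lb}, but now against an arbitrary \emph{randomized} mechanism with zero communication. First I would observe that a zero-communication randomized mechanism is simply a fixed probability distribution $\mathcal{D}$ over pairs $(\ybf_1,\ybf_2) \in [0,\ell]^2$, independent of the agents' reports; such a mechanism is trivially universally strategy-proof, so the only content is the approximation bound. The goal is to show that no such $\mathcal{D}$ beats $\frac{1}{2}$, i.e., for every $\mathcal{D}$ there is a single-agent instance on which the expected utility is at most $\frac{1}{2}\cdot\opt = \frac{\ell}{2}$ (since a single agent with a $\{0,1\}$ or $\{-1\}$ preference on the segment always has $\opt = \ell$ per facility, hence $\opt \le 2\ell$ in the two-facility case, but the clean single-facility-flavored instances give $\opt$ values against which $\frac{\ell}{2}$-type bounds bite).

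The key steps, in order: (1) Introduce the single-agent adversarial instances, exactly as in Theorem~\ref{thm:cc-lb} — an agent at $0$ with preferences $(-1,1)$, an agent at $\ell$ with preferences $(1,-1)$, an agent at $0$ with preferences $(1,1)$, an agent at $\ell$ with preferences $(1,1)$, and an agent at $0$ or $\ell$ with $(-1,-1)$. For each facility $j$, write $\mu_j = \mathbb{E}_{\mathcal{D}}[\ybf_j]$ for the expected location of facility $j$. (2) On the instance with a single agent at $0$ with preference $1$ for facility $j$, the expected utility from $j$ is $\ell - \mu_j$, while $\opt$ places facility $j$ at $0$ for utility $\ell$; on the instance with the agent at $\ell$, the expected utility is $\ell - (\ell-\mu_j) = \mu_j$ against $\opt=\ell$. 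So if $\mathcal{D}$ achieves ratio strictly better than $\frac12$ on both, we need $\mu_j > \ell/2$ \emph{and} $\mu_j < \ell/2$, a contradiction — hence already for the pure ``want-close'' single-facility-type instances $\mathcal{D}$ cannot beat $\frac12$. (3) To make this airtight for the \emph{two}-facility objective (where $\opt$ could in principle be as large as $2\ell$ and a per-facility $\ell/2$ bound only gives ratio $\frac14$), I would instead use single-agent instances where only \emph{one} coordinate of the preference vector is active, e.g. agent at $0$ with preferences $(1,0)$: here $\opt = 2\ell$ (facility $1$ at $0$, facility $2$ anywhere), and expected utility is $(\ell-\mu_1) + \ell = 2\ell - \mu_1$, giving ratio $1 - \mu_1/(2\ell)$; symmetrically the agent at $\ell$ with $(1,0)$ gives ratio $1 - (\ell-\mu_1)/(2\ell)$. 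Requiring both to exceed $\frac12$ forces $\mu_1 < \ell$ and $\mu_1 > 0$, which is no contradiction — so I would rather return to using preference vectors like $(-1,1)$ where the two facilities' contributions reinforce: agent at $0$ with $(-1,1)$ gets expected utility $\mu_1 + (\ell - \mu_2)$ against $\opt = 2\ell$, i.e. ratio $\frac{\ell + \mu_1 - \mu_2}{2\ell}$; the symmetric agent at $\ell$ with $(1,-1)$ gives ratio $\frac{\ell - \mu_1 + \mu_2}{2\ell}$. Averaging, one of these two ratios is at most $\frac12$. That single averaging observation is the whole proof.

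So the core argument is: for any distribution, set $\mu_j = \mathbb{E}[\ybf_j]$; then by linearity of expectation and the explicit utility formula~\eqref{eq:util}, the two instances ``agent at $0$ with $(-1,1)$'' and ``agent at $\ell$ with $(1,-1)$'' have expected mechanism values $\ell+\mu_1-\mu_2$ and $\ell-\mu_1+\mu_2$ respectively, whose sum is $2\ell = \opt + \opt$ split over the two instances... wait, each has $\opt = 2\ell$, so the sum of the two expected values is $2\ell \le 2\cdot\frac12\cdot\opt$ would need $\le 2\ell$, which it meets with equality — meaning $\min$ of the two ratios is at most $\frac12$, with equality only if $\mu_1 = \mu_2$. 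I would phrase it as: since the two expected values sum to $2\ell$ and each optimum is $2\ell$, the worst of the two approximation ratios is at most $\frac12$; hence no zero-communication mechanism beats $\frac12$, and \rand attains it by Theorem~\ref{thm:rand}, so \rand is optimal.

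\textbf{Main obstacle.} The only subtlety I anticipate — and the place I'd be most careful — is the choice of instances: a per-facility argument à la Theorem~\ref{thm:cc-lb} only yields the weaker ratio $\frac14$ for the two-facility objective, so one must use preference vectors (here $(-1,1)$ and $(1,-1)$) whose two facility contributions are \emph{monotone in the same direction} with respect to the expected facility locations $\mu_1,\mu_2$, so that the two adversarial instances'' expected values add up to a constant independent of $\mathcal{D}$. Getting that pairing right — and checking that each such instance genuinely has $\opt = 2\ell$ on the segment — is the crux; everything else is linearity of expectation.
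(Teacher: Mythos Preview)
Your averaging idea is exactly the paper's approach, but the specific pair of instances you settled on is miscomputed, and as written the argument does not go through. For the agent at $\ell$ with preferences $(1,-1)$: from facility $1$ (preference $1$) the utility is $\ell-|\ell-\ybf_1|=\ybf_1$, and from facility $2$ (preference $-1$) the utility is $|\ell-\ybf_2|=\ell-\ybf_2$; so the expected total is $\mu_1+(\ell-\mu_2)=\ell+\mu_1-\mu_2$, \emph{identical} to your first instance (agent at $0$ with $(-1,1)$), not $\ell-\mu_1+\mu_2$ as you wrote. Flipping both the location and the preference vector is a symmetry of the utility, so the two instances are equivalent and their expected values do not sum to a constant independent of $\mathcal{D}$.

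The fix is immediate: keep the location fixed and flip only the preferences. Pairing ``agent at $0$ with $(-1,1)$'' with ``agent at $0$ with $(1,-1)$'' gives expected utilities $\ell+\mu_1-\mu_2$ and $\ell-\mu_1+\mu_2$, which do sum to $2\ell$ while each instance has $\opt=2\ell$; the minimum ratio is then at most $\tfrac12$. The paper uses the even cleaner pair ``agent at $0$ with $(1,1)$'' (expected utility $2\ell-\mu_1-\mu_2$) and ``agent at $0$ with $(-1,-1)$'' (expected utility $\mu_1+\mu_2$), which likewise sum to $2\ell$ against $\opt=2\ell$ each. Either pair completes the proof by your averaging observation; only your second-instance computation needs correcting.
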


\begin{proof}
For the purpose of contradiction suppose there is a mechanism $M$ achieving an approximation strictly higher than $ \frac{1}{2}$. Let $p(y_1,y_2)$ be the joint probability distribution of the facilities $\ybf_1$ and $\ybf_2$.

Consider now an instance $I_1$ with one agent $a_1$ located at 0 having preferences $(1,1)$. His utility is then $u_1=\int_0^1 \int_0^1 p(\ybf_1, \ybf_2)\cdot (1-\ybf_1) + p(\ybf_1,\ybf_2)\cdot (1-\ybf_2)d\ybf_1d\ybf_2= 2 \int_0^1 \int_0^1 p(\ybf_1, \ybf_2)d\ybf_1d\ybf_2 - \int_0^1 \int_0^1 p(\ybf_1, \ybf_2)(\ybf_1+\ybf_2)d\ybf_1d\ybf_2=2-\omega$ where $\omega=\int_0^1 \int_0^1 p(\ybf_1, \ybf_2)(\ybf_1+\ybf_2)d\ybf_1d\ybf_2$. In $I_1$ the optimal solution places both facilities in 0 resulting in a utility of $u_1^*=2$. The approximation ratio of $M$ in $I_1$ is then 

    \begin{equation}
    \label{I1_approx}
            1-\frac{1}{2}\cdot \omega
    \end{equation}

Similarly consider another instance $I_2$ with one agent $a_2$ located at 0 having preferences $(-1,-1)$. The utility of $a_2$ is then $u_2=\int_0^1 \int_0^1 p(\ybf_1, \ybf_2)(\ybf_1+\ybf_2)d\ybf_1d\ybf_2=\omega$. In $I_2$ the optimal solution places both facilities in 1 resulting in a utility of $u_2^*=2$. Thus the approximation ratio of $M$ in $I_2$ is 
    \begin{equation}
    \label{I2_approx}
            \frac{1}{2}\cdot \omega
    \end{equation}
    Combining \eqref{I1_approx} and \eqref{I2_approx} we derive that $\omega=1$ and thus the approximation of $M$ is $\frac{1}{2}$, a contradiction.

%
%
\end{proof}

We should note that \rand can be extended for 
$k$-facility games, for any $k$, and achieve $\frac{1}{2}$ approximation.
Furthermore, we use the intuition obtained from it to construct \randp. The first four steps of \randp are the same as in \fixedp,
so again we will use the events $L_j$ and $H_j$ introduced in the previous section.

\begin{tcolorbox}[title= \randp mechanism]
\textbf{Input:} Locations $x_1, \ldots, x_n$ and preferences $p_1, \ldots, p_n$.\\
\textbf{Output:} Locations $y_1$ and $y_2$.\\
Set $z_r= \frac{13-\sqrt{161}}{8}$
\begin{enumerate}
\item \label{step:one} 
If events $L_1$ and $L_2$ occur, then set $y_1=y_2=z_r \cdot \ell$.
\item \label{step:two}
Else if events $L_1$ and $H_2$ occur, then set $y_1=z_r \cdot \ell$ and $y_2=(1-z_r) \cdot \ell$.
\item \label{step:three} 
Else if events $H_1$ and $H_2$ occur, then set $y_1=y_2=(1-z_r) \cdot \ell$.
\item \label{step:four}
Else if  events $H_1$ and $L_2$ occur, then set $y_1=(1-z_r) \cdot \ell$ and $y_2=z_r \ell$.
\item \label{step:five}
Else with probability $\frac{1}{2}$ set $y_1=y_2=z_r \cdot \ell$ and with probability $\frac{1}{2}$ set $y_1=y_2=(1-z_r) \cdot \ell$.
\end{enumerate}
\end{tcolorbox}

\begin{lemma}
\label{lem:randp-sp}
\randp is strategy-proof in expectation.
\end{lemma}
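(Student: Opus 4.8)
The plan is to mirror the structure of the proof of Lemma~\ref{lem:fixedp-sp}, extending it to handle the randomization in Step~\ref{step:five}. The key observation is that the first four steps of \randp are deterministic and identical to the first four steps of \fixedp, so the only genuinely new case is when the truthful outcome or the misreported outcome (or both) falls into Step~\ref{step:five}. Since \randp is deterministic on Steps 1--4, and since those steps coincide with \fixedp, the sign-pattern analysis of Tables~\ref{tab:sign-of-w-cases} and~\ref{tab:sign-of-w-larger} carries over verbatim for all transitions between Steps 1--4. What remains is to verify that $\mathbb{E}[\Delta] \geq 0$ for every transition that has Step~\ref{step:five} as its source or its destination.

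First I would set up the expected-utility bookkeeping exactly as before: fix an arbitrary declaration of the other agents, let $\ybf_j$ and $\ybf_j'$ denote the (possibly random) location of facility $j$ under truthful and non-truthful reporting respectively, and write $\Delta$ for the difference in agent $i$'s utility. Because utility is linear in the facility positions (given a fixed preference vector), $\mathbb{E}[\Delta] = t_{i1}\cdot \mathbb{E}[w_{i1}] + t_{i2}\cdot\mathbb{E}[w_{i2}]$ where now $w_{ij} = |\ybf_j'-x_i| - |\ybf_j - x_i|$ is a random variable. The crucial point for Step~\ref{step:five} is that, whether $x_i \leq \ell/2$ or $x_i > \ell/2$, the symmetric placement "$z_r\ell$ with probability $\tfrac12$ and $(1-z_r)\ell$ with probability $\tfrac12$" places \emph{both} facilities at the same random point, and this point is distributed symmetrically about $\ell/2$. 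I would then observe that when agent $i$ could force the outcome into Step~\ref{step:five}, the constraints defining which of Steps 1--4 is the truthful outcome pin down the signs of $t_{i1}, t_{i2}$ (as recorded in Tables~\ref{tab:mech-cases} and~\ref{tab:sign-of-preference-larger}, whose rows apply unchanged since the event structure is the same), and I would compute $\mathbb{E}[w_{ij}]$ for each such transition, checking that its sign agrees with $t_{ij}$ in each coordinate, so that each summand of $\mathbb{E}[\Delta]$ is nonnegative.

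Concretely, for a transition from a deterministic Step $k \in \{1,2,3,4\}$ to Step~\ref{step:five}: in Step~\ref{step:five} facility $j$ sits at the random point $R$ with $\mathbb{E}[|R - x_i|]$ equal to $\tfrac12 |z_r\ell - x_i| + \tfrac12|(1-z_r)\ell - x_i|$, and this quantity is easily compared with the deterministic distance $|\ybf_j - x_i|$ from Step $k$ using only the sign of $x_i - \ell/2$ and the interval in which $x_i$ lies. I expect the resulting sign table for $(\mathbb{E}[w_{i1}], \mathbb{E}[w_{i2}])$ on the Step-5 row and column to be \emph{identical} to the Step-5 row and column of Tables~\ref{tab:sign-of-w-cases} and~\ref{tab:sign-of-w-larger} — because placing both facilities at the \emph{average} position $\ell/2$-symmetrically is, from the point of view of expected distance, "the same direction of movement" as the deterministic Step-5 placement $(z_r\ell, (1-z_r)\ell)$ relative to any fixed $x_i$, at least as far as the sign is concerned. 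Once that table-matching is verified, the conclusion $\mathbb{E}[\Delta] \geq 0$ follows by the same reading-off argument as in Lemma~\ref{lem:fixedp-sp}, and since Steps 1--4 are unchanged, the lemma is complete.

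The main obstacle I anticipate is the Step-5-to-Step-5 and the subtle Step-5-to-deterministic-step transitions, where one must be careful that a misreport can change the outcome from the random placement to a deterministic one (or vice versa) and that the expected displacement still has the correct sign in both coordinates; in particular one must rule out that the agent gains by "collapsing" the lottery to a favorable deterministic step. This requires checking that any such collapse is only triggered by a preference change that moves the relevant facility in a direction consistent with the agent's \emph{reported} preference, which then (by the same monotonicity argument used in the $t_i' \neq 0$ cases of Lemmas~\ref{lem:one-high} and~\ref{lem:one-low}) cannot help under the true preference. I would handle this by noting that the events $L_j, H_j$ depend on agent $i$'s declaration only through $t_{ij}$ together with which side of $\ell/2$ he claims to be on, so the only way agent $i$ moves facility $j$ is by changing $t_{ij}$ or his claimed side, and in every such case the induced change in $\mathbb{E}[w_{ij}]$ points the "right" way; the remaining facility's position, if unchanged, contributes $0$ to $\mathbb{E}[\Delta]$.
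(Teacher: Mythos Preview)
Your plan is essentially the paper's own proof: reduce transitions among Steps~1--4 to Lemma~\ref{lem:fixedp-sp}, and then do a separate sign-table analysis for transitions to and from Step~5 using the expected distance $\tfrac12|z_r\ell-x_i|+\tfrac12|(1-z_r)\ell-x_i|$. The paper also splits into ``deviation to Step~5'' and ``deviation from Step~5'' and fills in the sign tables for $(\mathbb{E}[w_{i1}],\mathbb{E}[w_{i2}])$ exactly as you propose. The Step~5~$\to$~Step~5 case you worry about is trivial (same lottery, $\Delta=0$).

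One concrete correction: your expectation that the Step-5 row and column will be \emph{identical} to those of Tables~\ref{tab:sign-of-w-cases} and~\ref{tab:sign-of-w-larger} is wrong, and the intuition you give for it (that the symmetric lottery is ``the same direction of movement'' as the deterministic \fixedp Step~5 placement $(z_d\ell,(1-z_d)\ell)$) does not hold, because in \randp both facilities go to the \emph{same} random point while in \fixedp Step~5 they go to different points. For instance, for $x_i\le \ell/2$ the transition Step~2~$\to$~Step~5 gives $(0,0)$ in \fixedp (the placements coincide) but $(+,-)$ in \randp: the expected distance of facility~1 from $x_i$ strictly increases (it was at $z_r\ell$, now averaged with the farther point $(1-z_r)\ell$), and facility~2's expected distance strictly decreases. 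The paper therefore writes out new tables for these transitions. Fortunately the altered signs still match the preference signs imposed by the truthful step, so $\mathbb{E}[\Delta]\ge 0$ survives; since you say you would verify the tables, you would discover this and the argument would go through.
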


\begin{proof}
Steps 1-4  of \randp are similar to the steps of \fixed, so any deviation of agent $i$ that changes the outcome between Steps 1-4 will not result in a better outcome for the agent, as it is shown in Lemma \ref{lem:fixedp-sp}. Therefore we only need to examine any deviation to, or from, Step 5. 

\noindent
\textbf{Deviation to Step 5:} Similarly to Lemma~\ref{lem:fixedp-sp}, let us denote by $\Delta = t_{i1}\cdot w_{i1} + t_{i2}\cdot w_{i2}$ the difference between the expected utility of agent $i$ when he reports truthfully and when misreporting and \randp implements Step 5; here $w_{ij}: = \frac{1}{2}|z_r \ell - x_i| + \frac{1}{2}|(1 - z_r) \ell - x_i| - |\ybf_j - x_i|$. 
Tables \ref{tab:sign-of-preference-smaller-rand-a} and \ref{tab:sign-of-preference-larger-rand-a} present the signs of the preferences of agent $i$ when $x_i \leq \frac{\ell}{2}$ and $x_i > \frac{\ell}{2}$ respectively;
recall ``+'' corresponds to preferences in $\{0,1\}$ and ``-'' to preferences in $\{-1,0\}$. Each cell of the tables \ref{tab:sign-of-w-smaller-rand-a} and \ref{tab:sign-of-w-larger-rand-a} (for $x_i \leq \frac{\ell}{2}$ and $x_i > \frac{\ell}{2}$ respectively) presents the signs of $(w_{i1}, w_{i2})$ when agent $i$ misreports so that Step 5 is followed by \randp. It can be easily verified that for all possible combinations we have that
$\Delta \geq 0$. Thus, any misrepresentation of the preferences of agent $i$ that changes the outcome of \randp
from Step 1-4 to Step 5 does not increase the utility of the agent.

\begin{table}[h!]
    \begin{minipage}{0.45 \textwidth}
        \centering
        \begin{tabular}{|c||c|c|}
            \hline
             & $t_{i1}$ & $t_{i2}$\\ \hline \hline
            Step 1 & +  & + \\ \hline
            Step 2 & + & - \\ \hline
            Step 3 & - & - \\ \hline
            Step 4 & - & + \\ \hline
        \end{tabular}
        \caption{Preferences of agent $i$ for Steps 1-4 of \randp, when $x_i \leq \frac{\ell}{2}$. \label{tab:sign-of-preference-smaller-rand-a}}
    \end{minipage}
    \hfill
    \begin{minipage}{0.5\textwidth}
        \centering
        \begin{tabular}{|c||c|}
            \hline
            \backslashbox{True}{Lie} & Step 5 \\ \hline \hline
            Step 1 & (+,+) \\ \hline
            Step 2 & (+,-) \\ \hline
            Step 3 & (-,-) \\ \hline
            Step 4 & (-,+) \\ \hline
        \end{tabular}
        \caption{Signs for $(w_{i1}, w_{i2})$ when $x_i \leq \frac{\ell}{2}$. \label{tab:sign-of-w-smaller-rand-a}}
    \end{minipage}
\end{table}

\begin{table}[h!]
    \begin{minipage}{0.45 \textwidth}
        \centering
        \begin{tabular}{|c||c|c|}
            \hline
             & $t_{i1}$ & $t_{i2}$\\ \hline \hline
            Step 1 & -  & - \\ \hline
            Step 2 & - & + \\ \hline
            Step 3 & + & + \\ \hline
            Step 4 & + & - \\ \hline
        \end{tabular}
        \caption{Preferences of agent $i$ for Step 1-4 of \randp, when $x_i > \frac{\ell}{2}$. \label{tab:sign-of-preference-larger-rand-a}}
    \end{minipage}
    \hfill
    \begin{minipage}{0.45\textwidth}
        \centering
        \begin{tabular}{|c||c|}
            \hline
            \backslashbox{True}{Lie} & Step 5 \\ \hline \hline
            Step 1 & (-,-) \\ \hline
            Step 2 & (-,+) \\ \hline
            Step 3 & (+,+) \\ \hline
            Step 4 & (+,-) \\ \hline
        \end{tabular}
        \caption{Signs for $(w_{i1}, w_{i2})$ when $x_i > \frac{\ell}{2}$. \label{tab:sign-of-w-larger-rand-a}}
    \end{minipage}
\end{table}

\noindent
\textbf{Deviation from Step 5:} In order for the outcome of the mechanism to change, the preference of $i$ must change sign from ``+'' to ``-'' or from ``-'' to ``+''. Similarly as above, 
$\Delta = t_{i1}\cdot w_{i1} + t_{i2}\cdot w_{i2}$ denotes the difference between the expected utility of agent $i$ when he reports truthfully and Step 5 is employed, and when misreporting. Now, we have that 
$w_{ij}: = |\ybf_j - x_i| - \frac{1}{2}|z_r \ell - x_i| - \frac{1}{2}|(1 - z_r) \ell - x_i|$. Each cell of column $c$ of the tables \ref{tab:sign-of-t-smaller-rand-b} and \ref{tab:sign-of-t-larger-rand-b} presents the only possible signs of $(t_{i1}, t_{i2})$ of agent $i$ in Step 5 when he reports his true preference, that can result in the step of column $c$ if he misreports. As an example, consider the case where $t_{i1} \leq 0$ when $x_i \leq \frac{\ell}{2}$. If $i$ changes his declaration to $t_{i1}' \geq 0$ Step 1 cannot be followed. 
Again, it can be easily verified that $\Delta \geq 0$ in every case, hence there is no profitable deviation for
agent $i$.

\begin{table}[h!]
    \begin{minipage}{0.45 \textwidth}
        \centering
        \begin{tabular}{|c||c|c|c|c|}
            \hline
            \backslashbox{True}{Lie} & Step 1 & Step 2 & Step 3 & Step 4 \\ \hline \hline
            Step 5 & (-,-) & (-,+) & (+,+) & (+,-) \\ \hline
        \end{tabular}
        \caption{Signs of $(t_{i1}, t_{i2})$ when $x_i \leq \frac{\ell}{2}$. \label{tab:sign-of-t-smaller-rand-b}}
    \end{minipage}
    \hfill
    \begin{minipage}{0.45\textwidth}
        \centering
        \begin{tabular}{|c||c|c|c|c|}
            \hline
            \backslashbox{True}{Lie} & Step 1 & Step 2 & Step 3 & Step 4 \\ \hline \hline
            Step 5 & (-,-) & (-,+) & (+,+) & (+,-) \\ \hline
        \end{tabular}
        \caption{Signs of $(w_{i1}, w_{i2})$ when $x_i \leq \frac{\ell}{2}$. \label{tab:sign-of-w-smaller-rand-b}}
    \end{minipage}
\end{table}

\begin{table}[h!]
    \begin{minipage}{0.45 \textwidth}
        \centering
         \begin{tabular}{|c||c|c|c|c|}
            \hline
            \backslashbox{True}{Lie} & Step 1 & Step 2 & Step 3 & Step 4 \\ \hline \hline
            Step 5 & (+,+) & (+,-) & (-,-) & (-,+) \\ \hline
        \end{tabular}
        \caption{Signs of $(t_{i1}, t_{i2})$ when $x_i > \frac{\ell}{2}$. \label{tab:sign-of-t-larger-rand-b}}
    \end{minipage}
    \hfill
    \begin{minipage}{0.45 \textwidth}
        \centering
        \begin{tabular}{|c||c|c|c|c|}
            \hline
            \backslashbox{True}{Lie} & Step 1 & Step 2 & Step 3 & Step 4 \\ \hline \hline
            Step 5 & (+,+) & (+,-) & (-,-) & (-,+) \\ \hline
        \end{tabular}
        \caption{Signs of $(w_{i1}, w_{i2})$ when $x_i > \frac{\ell}{2}$. \label{tab:sign-of-w-larger-rand-b}}
    \end{minipage}
\end{table}
\end{proof}

\begin{theorem}
\label{thm:randp-apx}
\randp is $(\frac{1}{2}+z_r) \simeq 0.538$-approximate.
\end{theorem}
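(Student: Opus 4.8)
The plan is to mirror the analysis of Theorem~\ref{thm:fixedp-appx}, but now exploiting the randomization in Step~\ref{step:five}. As in that proof, I would fix an agent $i$ attaining the minimum expected utility under \randp and, without loss of generality, assume $x_i \leq \frac{\ell}{2}$. I would first dispatch the easy preference combinations: if $p_i \in \{(0,1),(1,0),(0,-1),(-1,0)\}$ then $i$ gets at least $\ell$ from the indifferent facility regardless of its placement, so the ratio is at least $\frac{1}{2} > \frac{1}{2}+z_r$ is false — wait, since $\frac12 < \frac12 + z_r$ we actually need to be slightly more careful, but $z_r = \frac{13-\sqrt{161}}{8} \approx 0.0377$ is small, and in these cases the agent in fact gets utility at least $\ell$ from one facility plus an expected $\frac{\ell}{2}$ from the other (by the argument in Theorem~\ref{thm:rand} applied to the two symmetric placements in Step~\ref{step:five}, or trivially $\ell$ in Steps 1--4 when the placement is favorable), so the bound holds. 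I would then handle $p_i=(1,1)$: \randp uses Step~\ref{step:one} or Step~\ref{step:five}; under Step~\ref{step:one} the agent gets $(2z_r\cdot\ell \text{ or more})$... actually $u_i = \ell + 2x_i \geq \ell$ when $x_i$ small and the facilities are both at $z_r\ell$ — I would check via Table~\ref{tab:main-low}, and under Step~\ref{step:five} the two symmetric positions $z_r\ell$ and $(1-z_r)\ell$ give expected utility $\ell$ again. So these cases give ratio $\geq \frac12$; I need to verify they actually meet $\frac12 + z_r$, which they do since the optimum for $(1,1)$-type worst cases is bounded below $2\ell$ strictly enough, or the expected utility exceeds $\ell$.

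The substantive cases are $p_i=(-1,1)$ and $p_i=(-1,-1)$, exactly as in Theorem~\ref{thm:fixedp-appx}. For $p_i=(-1,-1)$ the mechanism uses Step~\ref{step:three} or Step~\ref{step:five}. Step~\ref{step:three} places both facilities at $(1-z_r)\ell$, giving $i$ (located at $x_i \leq \frac{\ell}{2}$) utility $2(1-z_r-\tfrac{x_i}{\ell})\ell$ against an optimum bounded by $2(1-\tfrac{x_i}{\ell})\ell$, hence ratio $\geq 1-2z_r$ — wait, that is the \emph{deterministic} bound and $1-2z_r \approx 0.92$, which is fine. Under Step~\ref{step:five}, the key improvement over \fixedp: with probability $\frac12$ both facilities sit at $(1-z_r)\ell$ and with probability $\frac12$ at $z_r\ell$, so the expected distance of $i$ from each facility is $\frac12|z_r\ell - x_i| + \frac12|(1-z_r)\ell - x_i| \geq \frac{\ell}{2}$ (since the two points straddle $\frac{\ell}{2}$ and $x_i \leq \frac{\ell}{2}$, the far point contributes at least $(1-z_r)\ell - x_i \geq (\frac12 - z_r)\ell$, while the near contributes $\geq 0$; more precisely the average is $\geq \frac{(1-2z_r)\ell}{2} + (\text{something}) $). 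I would compute the worst case: $i$ at $z_r\ell$ forced into Step~\ref{step:five} by another agent near $\frac{\ell}{2}$ with preferences $(1,0)$, exactly the instance from Theorem~\ref{thm:fixedp-appx}; the optimum still places $f_1 \approx \frac{3\ell}{4}$ and $f_2 = z_r\ell$ giving $i$ optimum $\approx \frac{7\ell}{4}$, but now $i$'s expected utility under \randp is $2\cdot\big(\frac12(1-z_r-z_r)\ell + \frac12 \cdot \text{(dist when facilities at }z_r\ell) \big)$; the facilities-at-$z_r\ell$ branch gives $i$ (sitting exactly at $z_r\ell$) utility $0$ from distance but the preference is $-1,-1$ so utility $= \text{dist} = 0$ from each — so that branch contributes $0$, and the $(1-z_r)\ell$ branch contributes $2(1-2z_r)\ell$, for expected total $(1-2z_r)\ell$; that is the \emph{same} as \fixedp! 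So the improvement must come from the \emph{other} worst case.

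The real gain is in $p_i = (-1,1)$ under Step~\ref{step:five}. Here, in the \fixedp analysis the binding instance had $i$ at $x_i \leq z_r$ getting $u_i = 2z_d\ell$ (deterministic, facilities at $z_d\ell$ and $(1-z_d)\ell$) against optimum $\frac{7\ell}{4}$. Under \randp's randomized Step~\ref{step:five}, with probability $\frac12$ both facilities are at $z_r\ell$ (giving $i$, with preferences $(-1,1)$ at location $x_i \approx 0$: utility $z_r\ell$ from $f_1$ plus $(1-z_r)\ell$ from $f_2$, total $\ell$) and with probability $\frac12$ both are at $(1-z_r)\ell$ (giving $(1-z_r)\ell$ from $f_1$ plus $z_r\ell$ from $f_2$, total $\ell$) — so the expected utility is $\ell$, versus optimum $\frac{7\ell}{4}$, ratio $\frac{4}{7} \approx 0.571 > 0.538$. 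Hmm, so I need to re-identify which case actually gives the tight $\frac12 + z_r$ bound. I would set up the ratio in each subcase as a function of $x_i$ and $z_r$, write the three competing expressions analogous to \eqref{approximation_ratio1}--\eqref{approximation_ratio3} but with the Step-5 expectation substituted in, and then solve for the $z_r$ that equalizes the two binding ones; the claim is this yields $z_r = \frac{13-\sqrt{161}}{8}$ and ratio $\frac12 + z_r$. I expect the main obstacle to be correctly pinning down the worst-case instance for the randomized Step~\ref{step:five} in the $(-1,-1)$ case — specifically, the adversary will not put $i$ exactly at $z_r\ell$ (which wastes the $z_r\ell$-branch entirely) but at some interior point trading off the two branches against the optimum, and getting that optimization right, then matching it against the $(-1,1)$ Step-4 case \eqref{approximation_ratio1} (which is unchanged from \fixedp since Step~4 is deterministic and identical), is where the quadratic $8z_r^2 - 26z_r + 1 = 0$ (or similar) and hence the $\sqrt{161}$ will emerge.
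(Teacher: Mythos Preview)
Your overall plan---case analysis on the minimizing agent's preference, mirroring Theorem~\ref{thm:fixedp-appx}---is exactly what the paper does. But you have mis-identified which constraints are binding, and this leads you to hunt in the wrong place for the $\sqrt{161}$.

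The key shift from \fixedp to \randp is that the $(-1,1)$ case (equation~\eqref{approximation_ratio1} in the \fixedp proof) \emph{stops} being binding. You computed this yourself: under randomized Step~5 an agent with $p_i=(-1,1)$ gets expected utility $\ell$ regardless of location, against optimum at most $\tfrac{7\ell}{4}$, so the ratio is $\tfrac{4}{7}\approx 0.571$, comfortably above $\tfrac{1}{2}+z_r$. So matching against ``the $(-1,1)$ Step-4 case \eqref{approximation_ratio1}'' is wrong---that constraint is now slack.

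What replaces it on the increasing-in-$z_r$ side is the $(1,1)$ case. Under Step~1 (both facilities at $z_r\ell$), an agent with $p_i=(1,1)$ and $x_i\le\tfrac{\ell}{2}$ gets at least $(1+2z_r)\ell$ (worst at $x_i=\tfrac{\ell}{2}$), against optimum $2\ell$, giving ratio exactly $\tfrac{1}{2}+z_r$. This is the tight increasing constraint. Your treatment of this case was too loose (``ratio $\ge\tfrac{1}{2}$\ldots I need to verify they actually meet $\tfrac{1}{2}+z_r$''): you need this computation, and for Step~5 with $x_i<z_r\ell$ you must use that Step~5 being triggered forces another agent to constrain the optimum to at most $\tfrac{3\ell}{2}$, giving ratio $\ge\tfrac{2}{3}$.

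On the decreasing side, your instinct that the $(-1,-1)$ Step~5 case is binding is correct, and the worst location \emph{is} $x_i=z_r\ell$ (your ``interior point'' speculation is unnecessary). But redo the optimum: with the triggering agent at $\tfrac{\ell}{2}-\epsilon$ with preferences $(1,0)$, the optimal placement yields agent $i$ utility $(\tfrac{7}{4}-z_r)\ell$, not $\tfrac{7\ell}{4}$. The ratio is thus $\dfrac{(1-2z_r)\ell}{(\tfrac{7}{4}-z_r)\ell}=\dfrac{4-8z_r}{7-4z_r}$. Equating this to $\tfrac{1}{2}+z_r$ gives $8z_r^2-26z_r+1=0$, whence $z_r=\tfrac{13-\sqrt{161}}{8}$, exactly the quadratic you anticipated---just arising from a different pair of cases than you guessed.
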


\begin{proof}
To prove our claim, we will focus on the agent that gets the minimum utility
under \randp. We will prove that for every possible 
combination of his preferences and his location the agent gets a fraction of
$\frac{1}{2}+z_r$ of the utility he would get under an optimal solution. So, let $i$ be an agent
that gets minimum utility under \randp. Without loss of
generality, we will assume that he is located below $\frac{\ell}{2}$. 
\begin{itemize}
\item $p_i = (1,1)$. If there exists an agent below $\frac{\ell}{2}$ with 
preferences $(1,1)$, then \randp will place the facilities through  
Step~\ref{step:one}, or through Step~\ref{step:five}. 
We will consider each case separately.
If the facilities are placed due to Step~\ref{step:one}, then the utility of the 
agent is at least $(1+2z_r)\ell$, while the maximum utility the agent can get is $2\ell$.
Hence, the approximation guarantee of the mechanism, in this case, is $\frac{1}{2} +z_r$.
For the case where the facilities are placed due to Step~\ref{step:five}, we have to consider
the following subcases. Firstly, if $x_i\geq z_r$, then the expected utility of the agent 
is $(1+2z_r)\ell$ while the optimum is bounded by $2\ell$, hence the mechanism achieves 
$\frac{1}{2}+z_r$ approximation.
If $x_i<z_r$, we have to further consider two cases depending on the reason Step~\ref{step:five}
was triggered. The first one is that there exists an agent $i'$ with $x_{i'}<\frac{\ell}{2}$ 
that has preference $-1$ for one of the two facilities. Then, the optimal value for the objective
is upper bounded by $\frac{3\ell}{2}$. Hence, since we assumed that agent $i$ has the minimum utility
under \randp, we get that it achieves $\frac{2}{3}$ approximation in this case. 
The second subcase is when there exists an agent $i'$ with $x_{i'}>\frac{\ell}{2}$ that has preference $1$ 
for one of the two facilities. Then, the optimum is again upper bounded by $\frac{3\ell}{2}$ and the mechanism achieves the claimed approximation ratio.

\item $p_i = (1,0)$. Observe that the analysis for the case $p_i=(0,1)$ is symmetric, hence it
will be omitted. When there exists an agent below $\frac{\ell}{2}$ with preferences $(0,1)$, then 
\randp will place the facilities either through Step~\ref{step:one}, or through 
Step~\ref{step:two}, or through 
Step~\ref{step:five}. As in the previous case, it is not hard to see that under any of these steps the
utility of agent $i$ is at least $(\frac{3\ell}{2}+z_r)\cdot \ell$, while the maximum utility he can get is bounded 
by $2\ell$. So, the approximation guarantee follows.
\item $p_i = (1,-1)$. Observe that the analysis for the case $p_i=(-1,1)$ is symmetric, hence it
will be omitted. When there exists an agent below $\frac{\ell}{2}$ with preferences $(-1,1)$, 
then \randp will locate the facilities either through 
Step~\ref{step:two}, or through Step~\ref{step:five}. When the mechanism locates the facilities
through Step~\ref{step:two}, then the worst-case instance occurs when there is only one agent $i$ on $\frac{\ell}{2}$. Then,
the agent gets utility $\ell$, while an optimal solution locates the first facility on $\frac{\ell}{2}$
and the second facility on 0 yielding utility $\frac{3\ell}{2}$. Thus, the mechanism is 
$\frac{2}{3}$-approximate. So, for the chosen value of $z_r$, the mechanism is $(\frac{1}{2}+z_r)$-approximate.
If, on the other hand, the mechanism locates the facilities through Step~\ref{step:five}, then the expected utility of agent $i$ is $\ell$ irrespectively of his location $x_i$. In order to construct a worst-case
instance, it suffices to consider instances with only two agents, since more agents can only restrict 
more the set of optimal solutions, which implies that the optimal value can only decrease. The ``loosest''
constraint to the optimal that triggers Step~\ref{step:five} too, is when there is an agent on 
$\frac{\ell}{2} + \epsilon$ for an arbitrarily small positive $\epsilon$ with preferences $(1,0)$. 
Then, the worst-case instance for the mechanism, in terms of approximation guarantee, is when 
$x_i=0$ and the optimal utility the agent $i$ gets is bounded by $\frac{7\ell}{4}$; the first facility is 
located at $\frac{\ell}{4}$ and the second one at $\ell$. Hence, the mechanism is 
$\frac{4}{7}$-approximate. So, for the chosen value of $z_r$, the mechanism is 
$(\frac{1}{2}+z_r)$-approximate.
\item  $p_i = (-1,0)$. Observe that the analysis for the case $p_i=(0,-1)$ is symmetric, hence it
will be omitted. When there exists an agent below $\frac{\ell}{2}$ with preferences $(-1,0)$, then 
\randp will place the facilities either through Step~\ref{step:three}, or 
through Step~\ref{step:four}, or through Step~\ref{step:five}. When Step~\ref{step:three}, or 
Step~\ref{step:four}, is used,
agent $i$ gets utility $2\ell-z_r \ell -x_i$, while the optimal utility is bounded by $2\ell-x_i$ by locating both
facilities on $\ell$. So, since $x_i \leq \frac{\ell}{2}$, the approximation of \randp 
 is at least $1-\frac{z_r}{2} > \frac{1}{2}+ z_r$, for the chosen value of $z_r$.
If Step~\ref{step:five} is used, then agent $i$ gets utility at least $(\frac{3}{2}-z_r)\cdot \ell$. Furthermore,
we can trivially bound the optimal utility by $2\ell$, so, for the chosen value of $z_r$ we get that
\randp is $(\frac{1}{2}+z_r)$-approximate.
\item  $p_i = (-1,-1)$. When there exists an agent below $\frac{\ell}{2}$ with preferences $(-1,-1)$, then 
\randp will locate the facilities either through Step~\ref{step:three}, or 
through Step~\ref{step:five}. When Step~\ref{step:three} is used by the mechanism, then agent $i$ 
gets utility $(1-z_r-x_i)\cdot 2\ell$ while the optimal value is trivially bounded by $(1-x_i)\cdot 2\ell$.
Hence, since $x_i \leq \frac{\ell}{2}$, the approximation guarantee of the mechanism is bounded by 
$1-\frac{z_r}{2} > \frac{1}{2}+z_r$.
If Step~\ref{step:five} is used, then using similar arguments as in the case where $p_i=(1,-1)$, we
can construct the worst case instance for the mechanism by locating an agent with preferences $(1,0)$
on $\frac{\ell}{2}-\epsilon$ and by setting $x_i=z_r \ell$. Then, under Step~\ref{step:five} agent $i$ gets
utility $(1-2z_r)\cdot \ell$ and his utility under the optimal location for the facilities is bounded by 
$(\frac{7}{4}-z_r)\cdot\ell$; the first facility is located on $(\frac{3}{4}+z_r)\cdot \ell$ and the second 
one on $\ell$. Hence, \randp is $(1-2z_r)/(\frac{7}{4}-z_r)$-approximate.
It is not hard to verify that for the chosen value of $z_r$ we get that $(1-2z_r)/(\frac{7}{4}-z_r)= 
\frac{1}{2}+z_r$.
\end{itemize}
\end{proof}

Using exactly the same arguments as in Theorem \ref{thm:fixedp-cc}we can get that \randp can be implemented in a communication-efficient way where each agent sends only five bits to the planner.

\begin{theorem}
\label{thm:rand-cc}
The communication complexity of \randp is 5 bits per agent.
\end{theorem}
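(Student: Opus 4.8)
The plan is to reuse, essentially verbatim, the encoding from the proof of Theorem~\ref{thm:fixedp-cc}. The first thing I would observe is that the output distribution of \randp depends on the reported profile only through which of the four events $L_1,L_2,H_1,H_2$ occur: Steps~\ref{step:one}--\ref{step:four} are selected by exactly the same combinations of these events as in \fixedp, and Step~\ref{step:five} is simply the complementary ``else'' case. Moreover, in every step the facilities are placed at $z_r\cdot\ell$ and/or $(1-z_r)\cdot\ell$, so no exact location of any agent is ever needed once the events are known, and the only randomness used (the fair coin in Step~\ref{step:five}) is internal to the mechanism and requires no communication from the agents.

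Next I would spell out the per-agent encoding, which is identical to the one in Theorem~\ref{thm:fixedp-cc}: agent $i$ transmits a bitstring $(l,f_{11},f_{12},f_{21},f_{22})$, where $l=0$ if $x_i\le\frac{\ell}{2}$ and $l=1$ otherwise, and for each $j\in\{1,2\}$ the pair $(f_{j1},f_{j2})$ encodes $t_{ij}$ via $(0,0)$ when $t_{ij}=0$, $(0,1)$ when $t_{ij}=1$, and $(1,1)$ when $t_{ij}=-1$. This is $5$ bits per agent. From the bit $l$ the planner learns on which side of $\frac{\ell}{2}$ agent $i$ lies, and from $(f_{j1},f_{j2})$ it learns the sign of $t_{ij}$; this is precisely the information that enters the definitions of $L_j$ and $H_j$. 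Hence the planner can decide, for each $j$, whether $L_j$ holds, whether $H_j$ holds, or neither, determine which of Steps~\ref{step:one}--\ref{step:five} applies, and output the corresponding (possibly random) pair of locations. This establishes the claimed $5$-bit upper bound.

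There is essentially no obstacle here beyond this bookkeeping; the one point worth stating explicitly is the reason $5$ bits suffice, namely that $L_j$ and $H_j$ are ``per side, per facility'' predicates that depend only on the sign of $t_{ij}$ and on the single bit $[x_i\le\frac{\ell}{2}]$, so the mechanism never needs the cardinal value of $x_i$ — exactly the feature that also makes \fixedp communication-efficient. The statement does not claim a matching lower bound (that more than zero bits are necessary to beat the zero-communication benchmark \rand), so I would not address it here; exhibiting the protocol above is enough.
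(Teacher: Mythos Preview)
Your proposal is correct and is exactly the approach the paper takes: the paper simply states that \randp can be implemented using the same 5-bit-per-agent encoding as in Theorem~\ref{thm:fixedp-cc}, since its output depends on the reports only through the events $L_1,L_2,H_1,H_2$. Your write-up in fact spells this out in more detail than the paper does.
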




    

\section{Two-preference instances}
In this section, we study $k$-facility games where all the agents have preferences in $\{0,1\}^k$, 
$\{1,-1\}^k$, or in $\{0,-1\}^k$, which we call two-preference instances. 
The non-existence of optimal deterministic strategy-proof mechanisms can be 
extended even on two-preference instances with three agents. 

\begin{theorem}
\label{thm:utility}
For any $k \geq 2$, there is no optimal deterministic strategy-proof mechanism 
for $k$-facility games even on two-preference instances with three 
agents and known locations.
\end{theorem}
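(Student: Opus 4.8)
The plan is to prove the statement for $k=2$ by a direct two-instance argument and then lift it to every $k\geq 2$ by a padding reduction. For $k>2$, add $k-2$ extra facilities to which every agent is indifferent (preference $0$); all preference vectors then still lie in $\{0,1\}^k$, so the instance remains two-preference, every utility merely increases by the constant $(k-2)\ell$, and neither the set of optimal placements of the first two facilities nor the manipulation below is affected. Hence it suffices to handle $k=2$ with preferences in $\{0,1\}^2$, three agents and known locations.

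Normalize $\ell=1$ and fix $p\in(0,\tfrac13)$. Consider the profile $I$ with agent $a_1$ at $1$ having preference $(1,0)$, agent $a_2$ at $0$ having preference $(0,1)$, and agent $a_3$ at $p$ having true preference $(0,1)$. The first step is to pin down the optima of $I$: since only $a_1$ cares about $f_1$, every optimum puts $\ybf_1\in[1-\tfrac p2,1]$, while $\ybf_2$ is forced to $\tfrac p2$ by balancing $a_2$'s wish for $f_2$ near $0$ against $a_3$'s wish for $f_2$ near $p$; the optimal value is $2-\tfrac p2$, and, crucially, $a_3$'s utility equals $2-\tfrac p2$ at \emph{every} optimal placement. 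Next, let $a_3$ misreport preference $(1,1)$, giving profile $I'$. The key step is to show $I'$ has a \emph{unique} optimum: $a_1$ pulls $f_1$ toward $1$ and $a_3$'s fake demand for $f_1$ pulls it back, so together with $a_2$'s demand for $f_2$ near $0$ the optimum is pinned at $\ybf_1=\tfrac{1+p}{2}$, $\ybf_2=p$, with value $\tfrac{3+p}{2}$ (here one checks $a_2$ is never binding, which uses $p<\tfrac13$). At that placement $f_2$ lies exactly on $a_3$, so $a_3$'s \emph{true} utility (preference $(0,1)$) is $2$. Therefore any optimal deterministic mechanism gives $a_3$ exactly $2-\tfrac p2$ on $I$ and true utility $2>2-\tfrac p2$ on $I'$; so $a_3$ profits by reporting $(1,1)$, and no optimal deterministic mechanism is strategy-proof on two-preference instances with three agents and known locations.

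Two features of the construction are deliberate. First, the third agent $a_2$ is present only to pull $f_2$ away from $a_3$'s ideal point in the truthful profile — with just $a_1$ and $a_3$, the truthful optimum would already place $f_2$ on $a_3$ and leave nothing to gain; this is why three agents are genuinely needed rather than two. Second, placing $a_1$ at the endpoint opposite $a_3$ forces $f_1$ to stay far from $a_3$ in $I'$, so $a_3$'s pretended demand for $f_1$ — which costs him nothing, since he is truly indifferent to $f_1$ — actually extracts compensation by pulling $f_2$ onto him.

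The main obstacle is the tie-breaking: being ``optimal'' does not determine a mechanism's output when the optimal placement is not unique, so the argument must work against every tie-breaking rule. I handle this by arranging that the deviator's utility is constant over the \emph{entire} optimal set of the truthful profile and that the deviated profile has a \emph{unique} optimum. Establishing these two facts — i.e.\ computing the optimal values and optimal sets of $I$ and $I'$ exactly, and verifying that the range $p\in(0,\tfrac13)$ makes every case distinction (each facility lying on the expected side of $p$, $a_2$ not binding in $I'$, etc.) come out correctly — is the routine but careful part of the proof.
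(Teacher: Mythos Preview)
Your proof is correct and follows essentially the same route as the paper's. The paper's Figure~2 uses agents at $0,\tfrac{\ell}{2},\ell$ with preferences $(0,1),(1,1),(1,0)$, and the agent at $0$ misreports $(1,1)$ to drag $f_2$ onto his location --- the same $(0,1)\to(1,1)$ manipulation you employ, just with specific locations rather than your parameter $p\in(0,\tfrac13)$. Two points you handle more carefully than the paper: you make tie-breaking robustness explicit (the deviator's utility is constant over the whole optimal set of $I$, and $I'$ has a unique optimum), and you spell out the padding reduction to $k>2$, which the paper leaves implicit. The paper additionally supplies analogous counterexamples for the $\{-1,1\}^2$ and $\{-1,0\}^2$ classes (Figures~3 and~4), but the theorem as stated only requires one two-preference class, so your $\{0,1\}^k$ construction suffices.
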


The proof of the theorem follows from the instances of Figures \ref{fig2}, \ref{fig3}, \ref{fig4}. As in Theorem \ref{thm:2fub}, white circles correspond to agents and black circles to the optimal locations.


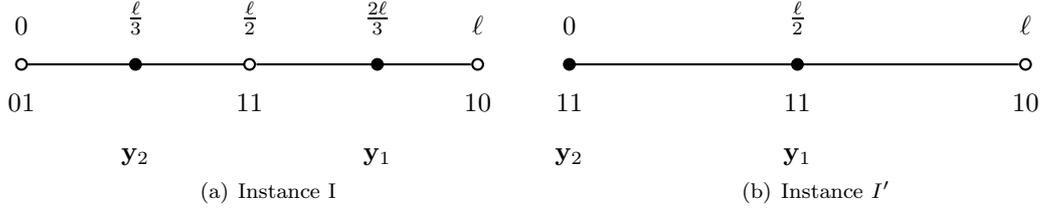
\begin{figure}[h!]
\begin{center}
\subfigure[Instance I]{
\begin{tikzpicture}[thick, scale=0.6]
  \tikzstyle{every node}==[fill=white,minimum size=4pt,inner sep=0pt]

\draw (-4,-0.5) node(v)[label=below:$0 1$]{};
\draw (-4,0.5) node(v1)[label=above:$0$]{};
\draw (-1.5,0.5) node(v2)[label=above:$\frac{\ell}{3}$]{};
\draw (1,-0.5) node(v2)[label=below:$1 1$]{};
\draw (6,0) node(v3)[draw, fill=white, circle]{};
\draw (-4,0) node(v7)[draw, fill=white, circle]{};
\draw (1,0) node(v7)[draw, fill=white, circle]{};
\draw (-1.5,0) node(v8)[draw, fill=black, circle]{};
\draw (3.8,0) node(v9)[draw, fill=black, circle]{};
\draw (1,0.5) node(v2)[label=above:$\frac{\ell}{2}$]{};
\draw (-1.5,-1.7) node(u)[label=below:$\ybf_2$]{};
\draw (3.8,-1.7) node(u1)[label=below:$\ybf_1$]{};
\draw (6,-0.5) node(l1)[label=below :$1 0$]{};
\draw (6,0.5) node(u1)[label=above :$\ell$]{};
\draw (3.8,0.5) node(v6)[label=above:$\frac{2 \ell}{3}$]{};
\draw (-3.87,0) -- (0.87,0);
\draw (1.15,0) -- (5.85,0);
\node[] at (7,-2) {};
\end{tikzpicture}
}
\subfigure[Instance $I'$]{
\begin{tikzpicture}[thick, scale=0.6]
  \tikzstyle{every node}==[fill=white,minimum size=4pt,inner sep=0pt]

\draw (-4,-0.5) node(v)[label=below:$1 1$]{};
\draw (-4,-1.7) node(v)[label=below:$\ybf_2$]{};
\draw (-4,0.5) node(v1)[label=above:$0$]{};
\draw (1,-0.5) node(v2)[label=below:$1 1$]{};
\draw (1,-1.7) node(v)[label=below:$\ybf_1$]{};
\draw (6,0) node(v3)[draw, fill=white, circle]{};
\draw (-4,0) node(v7)[draw, fill=black, circle]{};
\draw (1,0) node(v7)[draw, fill=black, circle]{};
\draw (1,0.5) node(v2)[label=above:$\frac{\ell}{2}$]{};
\draw (6,-0.5) node(l1)[label=below :$1 0$]{};
\draw (6,0.5) node(u1)[label=above :$\ell$]{};
\draw (-4,0) -- (5.85,0);
\node[] at (7,-2) {};
\end{tikzpicture}
}
\caption{Example for preferences in $\{0,1\}^2$. The agent located on $0$ in the 
instance $I$ can declare preferences $(1,1)$ and increase his utility by moving 
the facility $f_2$ closer to 0.}
\label{fig2}
\end{center}
\end{figure}
\begin{figure}[h!]
\begin{center}
\subfigure[Instance I]{
\begin{tikzpicture}[thick, scale=0.6]
  \tikzstyle{every node}==[fill=white,minimum size=4pt,inner sep=0pt]

\draw (-4.2,-0.2) node(v)[label=below:-1 1]{};
\draw (-4,0.2) node(v1)[label=above:$0$]{};
\draw (5,-0.2) node(v2)[label=below:1 1]{};
\draw (5,0) node(v3)[draw, fill=white, circle]{};
\draw (-4,0) node(v7)[draw, fill=white, circle]{};
\draw (1,0) node(v7)[draw, fill=black, circle]{};
\draw (6,0) node(v3)[draw, fill=black, circle]{};
\draw (1,0.5) node(v2)[label=above:$\frac{\ell}{2}$]{};
\draw (5,0.5) node(v4)[label=above:$\ell$-$\epsilon$]{};
\draw (1,-1.7) node(u)[label=below:$\ybf_2$]{};
\draw (6,-1.7) node(l1)[label=below :$\ybf_1$]{};
\draw (6,0.5) node(u1)[label=above :$\ell$]{};
\draw (-3.85,0) -- (4.85,0);
\draw (5.15,0) -- (5.85,0);
\node[] at (7,-2) {};
\end{tikzpicture}
}
\subfigure[Instance $I'$]{
\begin{tikzpicture}[thick, scale=0.6]
  \tikzstyle{every node}==[fill=white,minimum size=4pt,inner sep=0pt]

\draw (-4.2,-0.2) node(v)[label=below:-1 1]{};
\draw (-4,0.2) node(v1)[label=above:$0$]{};
\draw (4.8,-0.2) node(v2)[label=below:-1 1]{};
\draw (5,0) node(v3)[draw, fill=black, circle]{};
\draw (-4,0) node(v7)[draw, fill=white, circle]{};
\draw (6,0) node(v3)[draw, fill=black, circle]{};
\draw (5,0.5) node(v4)[label=above:$\ell$-$\epsilon$]{};
\draw (4.8,-1.7) node(u)[label=below:$\ybf_2$]{};
\draw (6.2,-1.7) node(l1)[label=below :$\ybf_1$]{};
\draw (6,0.5) node(u1)[label=above :$\ell$]{};
\draw (-3.85,0) -- (5.85,0);
\node[] at (7,-2) {};
\end{tikzpicture}
}
\caption{Example for preferences in $\{-1,1\}^2$. The agent located on 
$\ell-\epsilon$ in the instance $I$ can declare preferences $(-1,1)$ and 
increase his utility by moving the facility $f_2$ closer to $\ell-\epsilon$.
}
\label{fig3}
\end{center}
\end{figure}


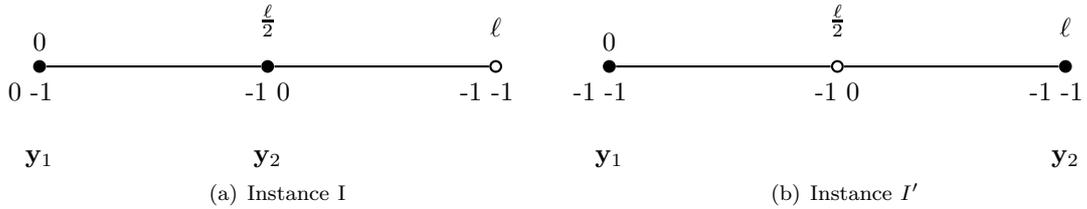
\begin{figure}[h!]
\begin{center}
\subfigure[Instance I]{
\begin{tikzpicture}[thick, scale=0.6]
  \tikzstyle{every node}==[fill=white,minimum size=4pt,inner sep=0pt]

\draw (-4.2,-0.2) node(v)[label=below:0 -1]{};
\draw (1,-0.2) node(v)[label=below:-1 0]{};
\draw (-4,0.2) node(v1)[label=above:$0$]{};
\draw (5.8,-0.2) node(v2)[label=below:-1 -1]{};
\draw (-4,0) node(v7)[draw, fill=black, circle]{};
\draw (1,0) node(v7)[draw, fill=black, circle]{};
\draw (6,0) node(v3)[draw, fill=white, circle]{};
\draw (1,0.5) node(v2)[label=above:$\frac{\ell}{2}$]{};
\draw (1,-1.7) node(u)[label=below:$\ybf_2$]{};
\draw (-4,-1.7) node(l1)[label=below :$\ybf_1$]{};
\draw (6,0.5) node(u1)[label=above :$\ell$]{};
\draw (-3.85,0) -- (0.85,0);
\draw (1.15,0) -- (5.85,0);
\node[] at (7,-2) {};
\end{tikzpicture}
}
\subfigure[Instance $I'$]{
\begin{tikzpicture}[thick, scale=0.6]
  \tikzstyle{every node}==[fill=white,minimum size=4pt,inner sep=0pt]

\draw (-4.2,-0.2) node(v)[label=below:-1 -1]{};
\draw (1,-0.2) node(v)[label=below:-1 0]{};
\draw (-4,0.2) node(v1)[label=above:$0$]{};
\draw (5.8,-0.2) node(v2)[label=below:-1 -1]{};
\draw (-4,0) node(v7)[draw, fill=black, circle]{};
\draw (1,0) node(v7)[draw, fill=white, circle]{};
\draw (6,0) node(v3)[draw, fill=black, circle]{};
\draw (1,0.5) node(v2)[label=above:$\frac{\ell}{2}$]{};
\draw (6,-1.7) node(u)[label=below:$\ybf_2$]{};
\draw (-4,-1.7) node(l1)[label=below :$\ybf_1$]{};
\draw (6,0.5) node(u1)[label=above :$\ell$]{};
\draw (-3.85,0) -- (0.85,0);
\draw (1.15,0) -- (5.85,0);
\node[] at (7,-2) {};
\end{tikzpicture}
}
\caption{Example for preferences in $\{-1,0\}^2$. The agent located on $0$ in 
the instance $I$ can declare preferences $(-1,-1)$ and increase his utility by 
moving the facility $f_2$ away from 0. Observe that for the Instance $I'$ there 
are two optimal solutions ($\ybf_1=0, \ybf_2=\ell$ and $\ybf_1=\ell, \ybf_2=0$). 
However, this does not affect the correctness of our example assuming that the 
mechanism chooses a solution \emph{deterministically}.}
\label{fig4}
\end{center}
\end{figure}

We now show how we can modify \fixed by changing the value of $z_f$ and achieve 
better approximation guarantees. We denote the mechanisms as \fixedzo, for preferences in $\{0,1\}^k$,
and \fixedzm, for preferences in $\{-1,0\}^k$.
Furthermore, for $k=2$ we derive a new deterministic mechanism
termed $OPT^2$, for the case where all agents have preferences in $\{0,1\}^2$ and their locations
are known. 

\begin{definition}
\fixedzo sets $\ybf_1=\ldots=\ybf_k=\frac{\ell}{2}$.
\end{definition}

\begin{theorem}
\label{thm:fzo}
\fixedzo is $\frac{1}{2}$-approximate.
\end{theorem}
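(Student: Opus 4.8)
The plan is to note that \fixedzo uses no information whatsoever from the agents — it always outputs $\ybf_1 = \cdots = \ybf_k = \ell/2$ — so it is trivially strategy-proof (in fact group strategy-proof and false-name-proof, and it needs zero communication), and the entire content of the theorem is the approximation bound. That bound will follow by sandwiching: a per-facility lower bound on the utility \fixedzo guarantees, and a trivial upper bound on \opt.

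First I would lower bound what \fixedzo gives every agent. Fix an agent $i$ with location $x_i \in [0,\ell]$ and preference vector $\ti \in \{0,1\}^k$, and look at facility $j$. If $\tij = 0$, then by \eqref{eq:util} the agent gets $\ell$ from that facility regardless of where it sits. If $\tij = 1$, then since the facility is placed at $\ell/2$ the agent gets $\ell - |x_i - \ell/2| \ge \ell - \ell/2 = \ell/2$, using $x_i \in [0,\ell]$. Either way $u_{ij} \ge \ell/2$, so summing over the $k$ facilities $u_i(x_i, \ti, \ybf) \ge k\ell/2$ for the output $\ybf$ of \fixedzo; this holds for every agent, hence $\min_i u_i(x_i,\ti,\ybf) \ge k\ell/2$. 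Here it is worth remarking explicitly that the restriction to $\{0,1\}^k$ is exactly what makes this work: a $-1$ coordinate would let an agent sitting at $\ell/2$ get utility close to $0$ from a facility, and no fixed placement could save it.

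Second I would upper bound the optimum. For any placement $\ybf$, any agent $i$, and any facility $j$, the normalization bounds $u_{ij} \le \ell$, so $u_i(x_i,\ti,\ybf) \le k\ell$; taking the minimum over agents and then the maximum over $\ybf$ yields $\opt(s) \le k\ell$ on every profile $s$. Combining the two estimates, on every profile \fixedzo attains value $\min_i u_i \ge k\ell/2 \ge \tfrac12\,\opt(s)$, which is precisely the claimed $\tfrac12$-approximation. There is no real obstacle in this proof; the only things to be careful about are invoking the $\{0,1\}^k$ hypothesis in the per-facility bound and observing that strategy-proofness is immediate because the mechanism ignores all reports, so the result holds even though both the locations and the preferences are private.
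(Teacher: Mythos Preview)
Your proposal is correct and follows essentially the same approach as the paper: a per-facility lower bound of $\ell/2$ (using that $\tij \in \{0,1\}$ and the facility sits at $\ell/2$), summed to $k\ell/2$, against the trivial upper bound $\opt \le k\ell$. The paper's proof is terser but identical in substance; your explicit remarks on strategy-proofness and the necessity of the $\{0,1\}^k$ hypothesis are correct additions that the paper leaves implicit.
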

\begin{proof}
Observe that for every agent $i$ and any facility $j$ it holds that 
$u_{ij}(x_i,t_{ij},\ybf_j) \geq \ell - |x_i-\frac{\ell}{2}| \geq \frac{\ell}{2}$. Hence, 
$u_i(x_i,t_i,\ybf) \geq  \frac{k\cdot\ell}{2}$. Observe, however, that 
$\max_\ybf u_i(x_i,t_i,\ybf) \leq k\cdot \ell$. Hence, agent $i$ under \ybf gets 
at least half of his maximum utility.
\end{proof}

\begin{definition}
\fixedzm sets $\ybf_1=\ldots=\ybf_{\lceil \frac{k}{2}\rceil}=0$ and 
$\ybf_{\lfloor \frac{k}{2}\rfloor}=\ldots = \ybf_k = \ell$.
\end{definition}

\begin{theorem}
\label{thm:fzm}
\fixedzm is $\frac{\lfloor \frac{k}{2}\rfloor}{k}$-approximate.
\end{theorem}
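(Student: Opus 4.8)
The plan is to bound from below the utility every agent receives under \fixedzm, and to bound from above the optimal utility, exactly in the spirit of the proof of Theorem~\ref{thm:fzo}. Recall that \fixedzm places $\lceil k/2\rceil$ facilities on $0$ and $\lfloor k/2\rfloor$ facilities on $\ell$, and that every agent has preferences in $\{-1,0\}^k$, so no agent ever wants to be \emph{close} to a facility. Fix an arbitrary agent $i$ located at $x_i$. For a facility $j$ with $t_{ij}=0$ the agent gets utility $\ell$ regardless of placement. For a facility $j$ with $t_{ij}=-1$ placed at $0$ the agent gets $x_i$, and placed at $\ell$ the agent gets $\ell-x_i$; in particular $\max\{x_i,\ell-x_i\}\geq \ell/2$, so at least one of the two ``clusters'' of facilities gives this agent utility at least $\ell/2$ per facility. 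The key observation is that the cluster of size $\lfloor k/2\rfloor$ (the one on $\ell$, say) contributes at least $\lfloor k/2\rfloor\cdot\min\{x_i,\ell-x_i\}$ to the agent, but to get a clean bound I would instead argue: among the facilities with $t_{ij}=-1$, partition them according to whether they are in the $0$-cluster or the $\ell$-cluster; whichever of $x_i,\ell-x_i$ is the larger, say it is $\ell-x_i\geq \ell/2$, the $\lfloor k/2\rfloor$ facilities on $\ell$ each give utility $\ell-x_i$ to the agent if they carry preference $-1$, and the $\lceil k/2\rceil$ facilities on $0$ give utility $x_i$; summing with the indifferent facilities contributing $\ell$ each, one checks the agent's total is at least $\lfloor k/2\rfloor\cdot\ell\cdot$ (something). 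I would make this precise by noting that the worst case is when \emph{all} coordinates are $-1$ and $x_i$ is such that the larger of $x_i,\ell-x_i$ is as small as possible, i.e. $x_i=\ell/2$, giving utility exactly $k\cdot\ell/2$; more generally $u_i(x_i,t_i,\ybf)\geq \lceil k/2\rceil x_i+\lfloor k/2\rfloor(\ell-x_i)\geq \lfloor k/2\rfloor\cdot\ell$ when all preferences are $-1$ (using $x_i\geq 0$), and adding $\ell$ for each indifferent coordinate only helps.

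Next I would bound the optimum. Since every agent has preferences in $\{-1,0\}^k$, the maximum utility any single agent can obtain is at most $k\cdot\ell$: each $-1$-facility contributes at most $\ell$ (distance at most $\ell$ on the segment) and each $0$-facility contributes exactly $\ell$. Hence $\opt \leq k\cdot\ell$. Combining with the lower bound $u_i(x_i,t_i,\ybf)\geq \lfloor k/2\rfloor\cdot\ell$ for the facility placement of \fixedzm, valid for every agent $i$, we get $\min_i u_i(x_i,t_i,\ybf)\geq \lfloor k/2\rfloor\cdot\ell\geq \frac{\lfloor k/2\rfloor}{k}\cdot\opt$, which is the claimed approximation ratio. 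Strategy-proofness is immediate since \fixedzm ignores all reports, so it inherits all the desirable properties (group strategy-proofness, false-name-proofness) just like \fixed.

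The main obstacle, and the one point needing care, is the per-agent lower bound: one must make sure the bound $\lfloor k/2\rfloor\cdot\ell$ holds simultaneously for agents on the left half and the right half of the segment, i.e. that splitting the facilities into clusters of sizes $\lceil k/2\rceil$ and $\lfloor k/2\rfloor$ at the two endpoints is symmetric enough. Concretely, for an agent with $x_i\leq \ell/2$ the $\ell$-cluster (size $\lfloor k/2\rfloor$) already yields $\geq \lfloor k/2\rfloor(\ell-x_i)\geq \lfloor k/2\rfloor\cdot\ell/2$, which by itself is not enough; the fix is to also count the $0$-cluster's contribution $\lceil k/2\rceil\cdot x_i$, so the sum is $\lceil k/2\rceil x_i+\lfloor k/2\rfloor(\ell-x_i)$, and since $\lceil k/2\rceil\geq\lfloor k/2\rfloor$ this is $\geq \lfloor k/2\rfloor\cdot\ell$ for all $x_i\in[0,\ell]$. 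The symmetric argument handles $x_i>\ell/2$. I would write this out as a short case-free inequality and then conclude.
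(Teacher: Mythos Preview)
Your proposal is correct and follows essentially the same approach as the paper: lower-bound every agent's utility under \fixedzm by $\lceil k/2\rceil\,x_i+\lfloor k/2\rfloor(\ell-x_i)\geq \lfloor k/2\rfloor\cdot\ell$ (the worst case being all preferences equal to $-1$), upper-bound $\opt$ by $k\ell$, and divide. The only remark is that the case split on $x_i\lessgtr \ell/2$ you worry about is unnecessary, since the inequality $\lceil k/2\rceil\,x_i+\lfloor k/2\rfloor(\ell-x_i)=\lfloor k/2\rfloor\,\ell+(\lceil k/2\rceil-\lfloor k/2\rfloor)x_i\geq \lfloor k/2\rfloor\,\ell$ holds uniformly for all $x_i\in[0,\ell]$.
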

\begin{proof}
Observe that since $t_i \in \{0,-1\}^k$ it holds that 
$u_i(x_i,t_i,\ybf) = \sum_j \lceil \frac{k}{2}\rceil \cdot x_i + \lfloor \frac{k}{2}\rfloor 
\cdot (\ell -x_i) \geq  \lfloor \frac{k}{2}\rfloor \cdot \ell$. Observe though that 
$\max_\ybf u_i(x_i,t_i,\ybf) \leq k\cdot \ell$. Hence, \fixedzm is at least 
$\frac{\lfloor \frac{k}{2}\rfloor}{k}$-approximate.
\end{proof}

\begin{definition}
$OPT^2$ places each of the two facilities independently on its optimal location.
\end{definition}

It is not hard to see that $OPT^2$ is strategy-proof. This is because we know that 
when agents' locations are known, the mechanism that places one facility on the 
leftmost optimal location is strategy-proof. Therefore, since the mechanism places each 
facility independently no agent can increase his utility by lying.

\begin{theorem}
\label{thm:opt2}
$OPT^2$ is $\frac{3}{4}$-approximate.
\end{theorem}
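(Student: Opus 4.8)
The plan is to bound, for every agent $i$, the ratio between the utility he gets under $OPT^2$ and the utility he gets under an optimal (for the \util objective) placement of the two facilities, and then argue the worst case over all profiles is $\frac{3}{4}$. Since all preferences lie in $\{0,1\}^2$, an agent with preference $t_{ij}=0$ contributes a constant $\ell$ from facility $j$ regardless of its location, so the only facilities that matter for agent $i$ are those with $t_{ij}=1$. I would split the analysis by how many coordinates of $t_i$ equal $1$.

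First I would treat an agent $i$ with $t_i=(0,0)$ or exactly one $1$-coordinate: in the former case his utility is $2\ell$ under any placement, so the ratio is $1$; in the latter, say $t_i=(1,0)$, his utility is $\ell + u_{i1}$ where $u_{i1}=\ell-|x_i-\ybf_1|$, so his total is at least $\ell$ and at most $2\ell$, giving ratio $\geq \frac12$. The interesting (and tight) case is $t_i=(1,1)$. Here, under $OPT^2$, facility $j$ is placed on the leftmost point $\ybf_j^{\text{opt-1}}$ maximizing $\min_k\{\ell - |x_k-\ybf_j|: t_{kj}=1\}$ over the single-facility \util instance restricted to the agents who care about facility $j$. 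The key structural fact I would invoke is that, when all relevant preferences are ``$1$'', the single-facility optimal \util location places the facility at the midpoint of the interval $[\min, \max]$ of the relevant agents' locations, so every relevant agent $k$ gets utility at least $\ell - \tfrac{D_j}{2}$, where $D_j$ is the diameter of the set of agents caring about $f_j$. Meanwhile, for the two-facility optimum, agent $i$'s utility from $f_j$ is at most $\ell$, but crucially the two-facility optimum must still satisfy every other agent, and I would show that for each $j$ the two-facility optimal placement of $f_j$ cannot get agent $i$ strictly more than the single-facility optimum can — since each facility serves a separate "$1$"-demand, and the optimal single-facility mechanism already maximizes the minimum over those demands, the two-facility optimum is constrained in exactly the same way on $f_j$. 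So per facility, $u_{ij}^{OPT^2} \geq u_{ij}^{\text{opt}} - (\text{slack})$; I would make this precise and reduce it to a one-parameter optimization.

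Concretely, I expect the tight instance to be the analogue of Figure~\ref{fig2}: three agents at $0, \tfrac{\ell}{2}, \ell$ with preferences forcing each facility to a ``compromise'' location, where $OPT^2$ puts both facilities near $\tfrac{\ell}{2}$-type points giving the min agent $\tfrac{3\ell}{2}$ while the \util optimum (exploiting that the two facilities can be placed independently to satisfy different subsets) gives $2\ell$ — ratio $\tfrac34$. So the proof reduces to: (i) showing the ratio is $\geq \tfrac34$ always, by the per-facility argument plus a short case check that the only way to drop below $\tfrac12$-trivial bounds is when $i$ has $t_i=(1,1)$, and then optimizing the expression $\frac{(\ell - a_1/2) + (\ell - a_2/2)}{u_i^{\text{opt}}}$ where $a_j$ are the relevant diameters and $u_i^{\text{opt}}\le 2\ell$ with the additional constraint that $a_j$ large forces $u_i^{\text{opt}}$ small; and (ii) exhibiting the matching instance.

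The main obstacle is part (i): precisely relating the two-facility \util optimum to the two single-facility optima. One must argue that the extra freedom of the two-facility optimum — it can place $f_1$ and $f_2$ at unrelated points — does not help agent $i$ beyond what the per-facility single-facility optimum already allows, because each "$1$"-coordinate of each agent is a hard constraint that both mechanisms face on that specific facility. I would formalize this by noting that for a fixed facility $j$, the set of feasible locations achieving \util-value $\geq v$ (over agents caring about $f_j$) is the same whether we are running the one-facility game on $f_j$ alone or the two-facility game; hence $OPT^2$'s choice for $f_j$ is \emph{optimal} for that sub-instance, so $u_{ij}^{OPT^2} \geq \min_{k : t_{kj}=1} u_{kj}$ and in particular the slack agent $i$ suffers on $f_j$ is exactly the slack inherent to satisfying all of $f_j$'s demanders — which the two-facility optimum must also pay. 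Combining over $j=1,2$ and over the partition-by-number-of-ones cases yields the $\tfrac34$ bound; the worst case is pinned down by the instance above, completing the proof.
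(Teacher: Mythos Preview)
Your case split is the same as the paper's, but two steps break down.

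\textbf{The $(1,0)$ case is not dispatched.} You only establish a ratio of $\tfrac12$, which does not prove the theorem. In fact this case is where the paper's $\tfrac34$ bound is actually pinned: since $OPT^2$ places $f_1$ at the midpoint of the agents caring about it, agent $i$ is at distance at most $\tfrac{\ell}{2}$ from $\ybf_1$, so $u_i \ge \tfrac{3\ell}{2}$; with $u_i^{\text{opt}}\le 2\ell$ this gives exactly $\tfrac34$. You need this tighter bound, not $\tfrac12$.

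\textbf{The per-facility dominance claim in the $(1,1)$ case is false.} You argue that ``the two-facility optimal placement of $f_j$ cannot get agent $i$ strictly more than the single-facility optimum can,'' because both mechanisms face the same constraints on $f_j$. But the two-facility \util optimum is \emph{not} maximizing $\min_k u_{kj}$ over $f_j$'s demanders; it maximizes $\min_k(u_{k1}+u_{k2})$, and so it is free to shift $f_j$ toward agent $i$ at the expense of another demander who is compensated via the other facility. Concretely, take agents at $0,\tfrac{\ell}{2},\ell$ with preferences $(0,1),(1,1),(1,0)$. For $f_1$ the demanders are at $\tfrac{\ell}{2}$ and $\ell$, so $OPT^2$ sets $\ybf_1=\tfrac{3\ell}{4}$ and the middle agent gets $u_{21}=\tfrac{3\ell}{4}$. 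The two-facility optimum places $\ybf_1=\tfrac{2\ell}{3}$, $\ybf_2=\tfrac{\ell}{3}$, giving the middle agent $u_{21}=\tfrac{5\ell}{6}>\tfrac{3\ell}{4}$. So your key inequality fails, and with it the ``slack the optimum must also pay'' reasoning. Relatedly, your proposed tight instance with \util optimum equal to $2\ell$ does not exist when there is genuine conflict among demanders of a facility.

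What the paper does instead for the $(1,1)$ case is not a per-facility comparison but an explicit computation: it normalizes the minimum-utility agent $a_1$ to sit at $0$, identifies the rightmost demanders of each facility (showing the worst case has $a_{r1}$ at $\ell$ with $(0,1)$ and another agent $a_i$ at $x_i$ with preference $1$ for $f_1$), writes down the $OPT^2$ utilities and an upper bound on the \util optimum obtained by \emph{equalizing} the utilities of $a_1$, $a_i$, and $a_{r1}$, and then minimizes the resulting ratio $\tfrac{3-x_i}{4-2x_i}$ over $x_i$. The bound on the optimum comes from balancing three specific agents, not from a per-facility decoupling; that decoupling is precisely what fails.
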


\begin{proof}
Before we analyze the approximation guarantee of the mechanism, let us first 
study the locations in which the mechanism places the facilities. Since the
preferences of each agent are in $\{0,1\}^2$, it is not hard to see that 
the optimal location for each facility is the median point between the locations
of the leftmost and the rightmost agents that want to be close to the facility.

Without loss of generality, we can assume that the agent with the minimum utility
under $\opt^2$, denoted by $a_1$, has preferences $(1,1)$. 
If $t_i =(1,0)$, then the agent would have utility at least $\frac{3}{2}\ell$ 
since any other agent who wants to be close to the first facility is located 
in distance at most $\ell$ from $a_1$'s location. 
The maximum utility the agent can get is $2\ell$, so the mechanism is then
$\frac{3}{4}$-approximate. 

Assume that $a_1$ is located on $x \leq \frac{\ell}{2}$. 
Then, without loss of generality, we can assume that he is located on 0 since for
any other location the agent would be closer to the facilities and thus his 
utility would increase.
Then, observe that agent $a_1$, alongside with the rightmost agents, 
will define the locations of the facilities. Observe that if the 
rightmost agent has preferences $(1,1)$, then $\opt^2$ is optimal. So, we can 
assume that the rightmost agent, denoted by $a_{r1}$, has preferences $(0,1)$. 
In the worst case, $a_{r1}$ will be located on $\ell$, since for 
every other location, the utility of agent $a_1$ will be higher. We have 
to consider the two possible preferences for the second rightmost agent with 
preference 1 for the first facility and prove that $\opt^2$ achieves the desired
approximation.
We will use $a_i$ to denote this agent and $x_i$ to denote his location.

Firstly, we consider the case where agent $a_i$ has preferences $(1,1)$ and
$x_i \geq \frac{\ell}{2}$.
The utilities of the agents for the facilities under the locations $(\ybf_1,\ybf_2)$, where 
$\ybf_2 \leq x_i$, are $u_1 = 2 \ell -\ybf_1-\ybf_2$, 
$u_i = 2 \ell -2x_i+\ybf_1+\ybf_2$ and $u_{r_1}=\ell +\ybf_2$. 
$\opt^2$ will place the facilities to $\ybf_1=\frac{x_i}{2}$ and 
$\ybf_2=\frac{\ell}{2}$ and the utility of agent $a_1$ will be 
$u_1=\frac{3 \ell  -x_i}{2}$. Observe that the locations of the facilities that make 
the utilities of these three agents equal provide an upper bound on the utility
that agent $a_1$ gets under the optimal solution, since any other solution would
yield lower utility for at least one of these agents. If we find the locations
of the facilities that equalize the utilities for the agents we get 
$\ybf_1=2x_i-\ell$ and $\ybf_2=\ell-x_i$ and thus the optimal utility for 
agent $a_1$ is bounded by $2\ell-x_i$. 
Hence, $\opt^2$ is $\alpha=\frac{3-x_i}{4-2x_i} \geq \frac{3}{4}$-approximate.

In the case where $x_i<\frac{\ell}{2}$, it is not difficult to see that 
agent $a_1$ gets utility at least $\frac{5}{4}\ell$ under $\opt^2$. Observe that 
under the optimal solution the utility of the agents is bounded by 
$\frac{3}{2}\ell$, since there are no locations for the facilities where both 
$a_1$ and $a_{r1}$ get more than $\frac{3}{2}\ell$. Thus, in this case the 
mechanism is $\frac{5}{6}$-approximate.

If the preferences of $a_i$ are $(1,0)$, then similar analysis can be applied.
\end{proof}

\section{\welfare and \happy}
In this section we show that \fixed, \fixedzo, \fixedzm, and \rand achieve the same approximation guarantees
for \welfare and \happy objectives as \util. All mechanisms remain strategy-proof since they do not 
require any information from the agents. Recall, \welfare is the sum of the utilities of the agents, formally 
$\sum_i u_i(x_i, t_i, \ybf)$ and \happy is $\min_i \frac{u_i(x_i, t_i, \ybf)}{u^*_i(x_i, t_i)}$, where 
$u^*_i(x_i, t_i) = max_\ybf u_i(x_i, t_i,\ybf)$.
\begin{theorem}
\label{thm:wel-hap}
For \welfare and \happy objectives the following hold.
\fixed is $z_f$-approximate. 
\fixedzo is $\frac{1}{2}$-approximate.
\fixedzm is $\frac{\lfloor \frac{k}{2}\rfloor}{k}$-approximate.
\rand is $\frac{1}{2}$-approximate.
\end{theorem}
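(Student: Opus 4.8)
The plan is to show that all four mechanisms \fixed, \fixedzo, \fixedzm, and \rand already satisfy, for \welfare and \happy, a \emph{per-agent} inequality that is strictly stronger than the \util bound, and then to derive both objective bounds from it in one line each. Concretely, I would first extract from the proofs of Theorems~\ref{thm:mech2},~\ref{thm:fzo},~\ref{thm:fzm}, and~\ref{thm:rand} the claim that a mechanism with approximation ratio $\alpha$ in fact guarantees $u_i(x_i,t_i,\ybf)\ge\alpha\cdot u_i^*(x_i,t_i)$ for \emph{every} agent $i$ (in expectation over $\ybf$ in the case of \rand). For \fixed this is exactly what Tables~\ref{tab:main-low}--\ref{tab:main-high} compute: the ``Ratio'' column is $u_i/u_i^*$ under $\ybf=(z_f\ell,(1-z_f)\ell)$, and every entry is at least $z_f$ (the two binding cases being $t_i=(-1,1)$ near an endpoint and $t_i=(-1,-1)$, both evaluating to $z_f$ at $z=z_f$). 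For \fixedzo and \fixedzm the respective proofs already bound $u_i(x_i,t_i,\ybf)$ from below by $\tfrac{k\ell}{2}$ and $\lfloor k/2\rfloor\ell$, while $u_i^*(x_i,t_i)\le k\ell$, yielding the ratios $\tfrac12$ and $\lfloor k/2\rfloor/k$. For \rand the proof shows every agent's expected utility from each facility is at least $\ell/2$, i.e.\ at least half of the most that facility could ever give him, so $\mathbb{E}_{\ybf}[u_i(x_i,t_i,\ybf)]\ge\tfrac12 u_i^*(x_i,t_i)$ by summing over facilities.

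Given these per-agent bounds I would then treat the two objectives separately. For \happy, since every ratio $u_i/u_i^*$ lies in $[0,1]$ the optimal value is at most $1$, whereas the mechanism attains $\min_i u_i(x_i,t_i,\ybf)/u_i^*(x_i,t_i)\ge\alpha$ (resp.\ $\min_i\mathbb{E}_{\ybf}[u_i(x_i,t_i,\ybf)]/u_i^*(x_i,t_i)\ge\alpha$ for \rand); hence the mechanism's value is $\ge\alpha\ge\alpha\cdot\opt(s)$. For \welfare, I would use the trivial bound $u_i(x_i,t_i,\ybf')\le u_i^*(x_i,t_i)$ valid for every placement $\ybf'$ to get $\opt(s)=\max_{\ybf'}\sum_i u_i(x_i,t_i,\ybf')\le\sum_i u_i^*(x_i,t_i)$, and then sum the per-agent bound, invoking linearity of expectation for \rand, to obtain $\sum_i u_i(x_i,t_i,\ybf)\ge\alpha\sum_i u_i^*(x_i,t_i)\ge\alpha\cdot\opt(s)$. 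Strategy-proofness carries over verbatim because none of the four mechanisms reads any reported information, and \rand is moreover a distribution over two constant deterministic mechanisms, hence universally strategy-proof.

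I do not anticipate a genuine obstacle: the whole content is the observation that the \util analyses were already carried out agent by agent against the \emph{individual} maxima $u_i^*$ rather than against $\opt_{\util}$, so nothing new needs to be computed. The only points demanding a little care are (i) double-checking that the worst-case entries in Tables~\ref{tab:main-low}--\ref{tab:main-high} really bottom out at $z_f$ and not below, and (ii) for \rand, keeping the expectation on the outside only for the welfare sum, where linearity makes it harmless, while for \happy relying on the per-agent expected-utility bound rather than attempting to push the expectation through a minimum.
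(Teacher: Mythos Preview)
Your proposal is correct and follows essentially the same approach as the paper: extract the per-agent inequality $u_i(x_i,t_i,\ybf)\ge\alpha\cdot u_i^*(x_i,t_i)$ already established in the earlier theorems, conclude \happy directly from the definition, and for \welfare bound $\opt$ by $\sum_i u_i^*$ and sum the per-agent inequalities. If anything you are slightly more careful than the paper in handling the expectation for \rand (distinguishing linearity for the welfare sum from the per-agent expected bound for \happy), but the argument is the same.
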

\begin{proof}
In the proofs of Theorems~\ref{thm:mech2},~\ref{thm:fzo},~\ref{thm:fzm}, and~\ref{thm:randp-apx} 
it is proved that for every agent $i$ it holds that $\frac{u_i(x_i, t_i, \ybf)}{u^*_i(x_i, t_i)} \geq \alpha$, 
where $\alpha$ is the approximation ratio of the corresponding mechanism. Hence, the claim for \happy 
already follows from those proofs since they capture the definition of \happy. For 
\welfare, observe that $OPT_w=\max_\ybf \sum_i u_i(x_i, t_i, \ybf) \leq \sum_i u^*_i(x_i, t_i)$.
So, from the proofs of the aforementioned theorems we get that 
$u_i(x_i, t_i, \ybf) \geq u^*_i(x_i, t_i)\cdot \alpha$
for every $i$. So, if we sum over $i$ we get that 
$\sum_i u_i(x_i, t_i, \ybf) \geq \alpha \cdot \sum_i u^*_i(x_i, t_i) \geq \alpha \cdot OPT_w$
and the theorem follows.
\end{proof}

The observing reader may wonder whether the approximation guarantee of \fixed for \welfare
contradicts the result of~\cite{ZL15}. Recall, \cite{ZL15} proved that there is no deterministic
strategy-proof mechanism for \welfare with approximation ratio better than $\frac{2}{n}$.
However, a closer look will reveal that in order to establish that result, the following assumptions must be made.
Firstly, that every agent wants to be close to the first facility and away from the second facility.
Furthermore, they defined the utility of an agent located on $x_i$ to be 
$u_i(x_i, \ybf)= |x_i-\ybf_1| - |x_i-\ybf_2|$. This different definition of utility is crucial
for deriving those negative results, and this is the reason why our results do not contradict theirs.
\section{Discussion}
In this paper, we studied heterogeneous facility locations on the line segment. To the best of our knowledge, this
is the first systematic study of this model for the \util objective. We derived inapproximability results for
strategy-proof mechanisms for \util even for instances with known locations and two agents. Furthermore, 
we derived strategy-proof mechanisms that achieve constant approximation for \util, some of which also achieve the same guarantee for \welfare and \happy objectives.

All of our mechanisms are simple and can be implemented in a communication-efficient way. More specifically,
every mechanism needs zero or five bits of information from every agent. Communication efficiency is 
crucial for real-life scenarios. Consider the example of the factory and the school discussed in the
introduction. If thousand of citizens live on this street, then our mechanisms require only their preferences 
and whether they live on the western part of the street or on the east one and not their full address saving a huge 
amount of time to the planner. To the best
of our knowledge, this is the first time that communication complexity is studied for facility location problems.
We strongly believe that there is much to be said about facility location mechanisms and communication 
complexity. Firstly, it would be really interesting to understand how limited communication affects the approximation
guarantee of mechanisms. Is there a better randomized mechanism than \rand when no communication is allowed?
Are there better mechanism than \fixedp and \randp when every agent is allowed to communicate $O(1)$ bits? 
Can \fixedp and \randp be extended for $k \geq 3$ facilities?

Another intriguing avenue of research is to use communication complexity to define ``simple''
mechanisms. Recently~\citeauthor{LiOSP}~\cite{LiOSP} defined the \emph{obviously strategy-proof} (OSP)
mechanisms to capture the simplicity of mechanisms. Intuitively, a mechanism is obviously 
strategy-proof if it remains incentive-compatible even when some of the agents are not fully rational.
The formal definition of OSP is quite technical, and thus we decided not to include it in our paper since it
would deviate from its main theme. However, we strongly believe that some of our mechanisms, if not all of 
them, should be \emph{obviously strategy-proof}~\cite{LiOSP}. \fixed and \rand do not use any information
from the agents. In both \fixedp and \randp, if an agent knows the declarations of the rest of the agents, then 
he can verify that he cannot increase his utility by misreporting his type using $O(1)$ space. We believe that
this kind of mechanisms are de facto simple and deserve further studying.

\newpage

\bibliographystyle{plainnat}
\bibliography{references}

\begin{thebibliography}{40}
\providecommand{\natexlab}[1]{#1}
\providecommand{\url}[1]{\texttt{#1}}
\expandafter\ifx\csname urlstyle\endcsname\relax
  \providecommand{\doi}[1]{doi: #1}\else
  \providecommand{\doi}{doi: \begingroup \urlstyle{rm}\Url}\fi

\bibitem[Alon et~al.(2010)Alon, Feldman, Procaccia, and Tennenholtz]{AFPT10}
Noga Alon, Michal Feldman, Ariel~D Procaccia, and Moshe Tennenholtz.
\newblock Strategyproof approximation of the minimax on networks.
\newblock \emph{Mathematics of Operations Research}, 35\penalty0 (3):\penalty0
  513--526, 2010.

\bibitem[Anastasiadis and Deligkas(2018)]{AD18}
Eleftherios Anastasiadis and Argyrios Deligkas.
\newblock Heterogeneous facility location games.
\newblock In \emph{Proc.\ of {AAMAS}}, pages 623--631. International Foundation
  for Autonomous Agents and Multiagent Systems, 2018.

\bibitem[Barber{\`a} and Jackson(1994)]{barbera1994characterization}
Salvador Barber{\`a} and Matthew Jackson.
\newblock A characterization of strategy-proof social choice functions for
  economies with pure public goods.
\newblock \emph{Social Choice and Welfare}, 11\penalty0 (3):\penalty0 241--252,
  1994.

\bibitem[Blumrosen et~al.(2007)Blumrosen, Nisan, and Segal]{BNS07}
Liad Blumrosen, Noam Nisan, and Ilya Segal.
\newblock Auctions with severely bounded communication.
\newblock \emph{Journal of Artificial Intelligence Research}, 28:\penalty0
  233--266, 2007.

\bibitem[Br{\^a}nzei and Procaccia(2015)]{BraP15}
Simina Br{\^a}nzei and Ariel~D Procaccia.
\newblock Verifiably truthful mechanisms.
\newblock In \emph{Proc.\ of {ITCS}}, pages 297--306. ACM, 2015.

\bibitem[Cai et~al.(2016)Cai, Filos-Ratsikas, and Tang]{CFT16}
Qingpeng Cai, Aris Filos-Ratsikas, and Pingzhong Tang.
\newblock Facility location with minimax envy.
\newblock In \emph{Proc.\ of {IJCAI}}, pages 137--143, 2016.

\bibitem[Cheng et~al.(2011)Cheng, Yu, and Zhang]{CWZ11}
Yukun Cheng, Wei Yu, and Guochuan Zhang.
\newblock Mechanisms for obnoxious facility game on a path.
\newblock In \emph{Combinatorial Optimization and Applications}, pages
  262--271. Springer, 2011.

\bibitem[Cheng et~al.(2013)Cheng, Yu, and Zhang]{CWZ13}
Yukun Cheng, Wei Yu, and Guochuan Zhang.
\newblock Strategy-proof approximation mechanisms for an obnoxious facility
  game on networks.
\newblock \emph{Theoretical Computer Science}, 497:\penalty0 154--163, 2013.

\bibitem[Dokow et~al.(2012)Dokow, Feldman, Meir, and Nehama]{DFMN12}
Elad Dokow, Michal Feldman, Reshef Meir, and Ilan Nehama.
\newblock Mechanism design on discrete lines and cycles.
\newblock In \emph{Proc. of {EC}}, pages 423--440. ACM, 2012.

\bibitem[Duan et~al.(2019)Duan, Li, Li, and Xu]{DLLX}
Lingjie Duan, Bo~Li, Minming Li, and Xinping Xu.
\newblock Heterogeneous two-facility location games with minimum distance
  requirement.
\newblock In \emph{Proc.\ of {AAMAS}}, pages 1461--1469. International
  Foundation for Autonomous Agents and Multiagent Systems, 2019.

\bibitem[Feigenbaum and Sethuraman(2015)]{FJ15}
Itai Feigenbaum and Jay Sethuraman.
\newblock Strategyproof mechanisms for one-dimensional hybrid and obnoxious
  facility location models.
\newblock In \emph{Workshops at the twenty-ninth {AAAI} conference on
  artificial intelligence}, 2015.

\bibitem[Feigenbaum et~al.(2013)Feigenbaum, Sethuraman, and Ye]{FSY}
Itai Feigenbaum, Jay Sethuraman, and Chun Ye.
\newblock Approximately optimal mechanisms for strategyproof facility location:
  Minimizing ${L}_p$ norm of costs.
\newblock \emph{CoRR}, abs/1305.2446, 2013.

\bibitem[Feldman and Wilf(2013)]{FW}
Michal Feldman and Yoav Wilf.
\newblock Strategyproof facility location and the least squares objective.
\newblock In \emph{Proc.\ of {EC}}, pages 873--890. ACM, 2013.

\bibitem[Feldman et~al.(2016)Feldman, Fiat, and Golomb]{feldman2016voting}
Michal Feldman, Amos Fiat, and Iddan Golomb.
\newblock On voting and facility location.
\newblock In \emph{Proc.\ of EC}, pages 269--286. ACM, 2016.

\bibitem[Ferraioli and Ventre(2017)]{FV17}
Diodato Ferraioli and Carmine Ventre.
\newblock Obvious strategyproofness needs monitoring for good approximations.
\newblock In \emph{Proc.\ of {AAAI}}, pages 516--522, 2017.

\bibitem[Filos-Ratsikas et~al.(2017)Filos-Ratsikas, Li, Zhang, and Zhang]{FLZZ}
Aris Filos-Ratsikas, Minming Li, Jie Zhang, and Qiang Zhang.
\newblock Facility location with double-peaked preferences.
\newblock \emph{Autonomous Agents and Multi-Agent Systems}, 31\penalty0
  (6):\penalty0 1209--1235, 2017.

\bibitem[Fotakis and Tzamos(2010)]{FT10}
Dimitris Fotakis and Christos Tzamos.
\newblock Winner-imposing strategyproof mechanisms for multiple facility
  location games.
\newblock In \emph{Proc.\ of {WINE}}, pages 234--245. Springer, 2010.

\bibitem[Fotakis and Tzamos(2014)]{FT14}
Dimitris Fotakis and Christos Tzamos.
\newblock On the power of deterministic mechanisms for facility location games.
\newblock \emph{ACM Transactions on Economics and Computation}, 2\penalty0
  (4):\penalty0 15, 2014.

\bibitem[Golomb and Tzamos(2017)]{GT17}
Iddan Golomb and Christos Tzamos.
\newblock Truthful facility location with additive errors.
\newblock \emph{arXiv preprint arXiv:1701.00529}, 2017.

\bibitem[Gonczarowski and Nisan(2017)]{GN17}
Yannai~A Gonczarowski and Noam Nisan.
\newblock Efficient empirical revenue maximization in single-parameter auction
  environments.
\newblock In \emph{Proc.\ of {STOC}}, pages 856--868. ACM, 2017.

\bibitem[Kushilevitz(1997)]{KN}
Eyal Kushilevitz.
\newblock Communication complexity.
\newblock In \emph{Advances in Computers}, volume~44, pages 331--360. Elsevier,
  1997.

\bibitem[Kyropoulou et~al.(2019)Kyropoulou, Ventre, and Zhang]{KVZ}
Maria Kyropoulou, Carmine Ventre, and Xiaomeng Zhang.
\newblock Mechanism design for constrained heterogeneous facility location.
\newblock 2019.

\bibitem[Li et~al.(2019{\natexlab{a}})Li, Lu, Yao, and
  Zhang]{li2019strategyproof}
Minming Li, Pinyan Lu, Yuhao Yao, and Jialin Zhang.
\newblock Strategyproof mechanism for two heterogeneous facilities with
  constant approximation ratio.
\newblock \emph{arXiv preprint arXiv:1907.08918}, 2019{\natexlab{a}}.

\bibitem[Li et~al.(2019{\natexlab{b}})Li, Mei, Xu, Zhang, and Zhao]{LM+19}
Minming Li, Lili Mei, Yi~Xu, Guochuan Zhang, and Yingchao Zhao.
\newblock Facility location games with externalities.
\newblock In \emph{Proc.\ of {AAMAS}}, pages 1443--1451. International
  Foundation for Autonomous Agents and Multiagent Systems, 2019{\natexlab{b}}.

\bibitem[Li(2017)]{LiOSP}
Shengwu Li.
\newblock Obviously strategy-proof mechanisms.
\newblock \emph{American Economic Review}, 107\penalty0 (11):\penalty0
  3257--87, 2017.

\bibitem[Lu et~al.(2009)Lu, Wang, and Zhou]{Lu09}
Pinyan Lu, Yajun Wang, and Yuan Zhou.
\newblock Tighter bounds for facility games.
\newblock In \emph{Proc.\ of {WINE}}, pages 137--148. Springer, 2009.

\bibitem[Lu et~al.(2010)Lu, Sun, Wang, and Zhu]{Lu10}
Pinyan Lu, Xiaorui Sun, Yajun Wang, and Zeyuan~Allen Zhu.
\newblock Asymptotically optimal strategy-proof mechanisms for two-facility
  games.
\newblock In \emph{Proc.\ of {EC}}, pages 315--324. ACM, 2010.

\bibitem[Mei et~al.(2016)Mei, Li, Ye, and Zhang]{MLYZ}
Lili Mei, Minming Li, Deshi Ye, and Guochuan Zhang.
\newblock Strategy-proof mechanism design for facility location games:
  Revisited.
\newblock In \emph{Proc.\ of {AAMAS}}, pages 1463--1464. International
  Foundation for Autonomous Agents and Multiagent Systems, 2016.

\bibitem[Meir(2019)]{M19}
Reshef Meir.
\newblock Strategyproof facility location for three agents on a circle.
\newblock In \emph{Proc.\ of {AAMAS}}, pages 2120--2122. International
  Foundation for Autonomous Agents and Multiagent Systems, 2019.

\bibitem[Mookherjee and Tsumagari(2014)]{MT14}
Dilip Mookherjee and Masatoshi Tsumagari.
\newblock Mechanism design with communication constraints.
\newblock \emph{Journal of Political Economy}, 122\penalty0 (5):\penalty0
  1094--1129, 2014.

\bibitem[Moulin(1980)]{Mo80}
Herv{\'e} Moulin.
\newblock On strategy-proofness and single peakedness.
\newblock \emph{Public Choice}, 35\penalty0 (4):\penalty0 437--455, 1980.

\bibitem[Procaccia and Tennenholtz(2009)]{PT09}
Ariel~D Procaccia and Moshe Tennenholtz.
\newblock Approximate mechanism design without money.
\newblock In \emph{Proc.\ of {EC}}, pages 177--186. ACM, 2009.

\bibitem[Schummer and Vohra(2002)]{schummer2002strategy}
James Schummer and Rakesh~V Vohra.
\newblock Strategy-proof location on a network.
\newblock \emph{Journal of Economic Theory}, 104\penalty0 (2):\penalty0
  405--428, 2002.

\bibitem[Serafino and Ventre(2016)]{SV16}
Paolo Serafino and Carmine Ventre.
\newblock Heterogeneous facility location without money.
\newblock \emph{Theoretical Computer Science}, 636:\penalty0 27--46, 2016.

\bibitem[Sonoda et~al.(2016)Sonoda, Todo, and Yokoo]{STY16}
Akihisa Sonoda, Taiki Todo, and Makoto Yokoo.
\newblock False-name-proof locations of two facilities: Economic and
  algorithmic approaches.
\newblock In \emph{Proc.\ of {AAAI}}, 2016.

\bibitem[Thang(2010)]{Th10}
Nguyen~Kim Thang.
\newblock On (group) strategy-proof mechanisms without payment for facility
  location games.
\newblock In \emph{Proc.\ of WINE}, pages 531--538. Springer, 2010.

\bibitem[Ye et~al.(2015)Ye, Mei, and Zhang]{YMZ}
Deshi Ye, Lili Mei, and Yong Zhang.
\newblock Strategy-proof mechanism for obnoxious facility location on a line.
\newblock In \emph{Computing and Combinatorics}, pages 45--56. Springer, 2015.

\bibitem[Zandt(2007)]{Zandt}
Timothy~Van Zandt.
\newblock Communication complexity and mechanism design.
\newblock \emph{Journal of the European Economic Association}, 5\penalty0
  (2/3):\penalty0 543--553, 2007.

\bibitem[Zhang and Li(2014)]{ZL14}
Qiang Zhang and Minming Li.
\newblock Strategyproof mechanism design for facility location games with
  weighted agents on a line.
\newblock \emph{Journal of Combinatorial Optimization}, 28\penalty0
  (4):\penalty0 756--773, 2014.

\bibitem[Zou and Li(2015)]{ZL15}
Shaokun Zou and Minming Li.
\newblock Facility location games with dual preference.
\newblock In \emph{Proc.\ of {AAMAS}}, pages 615--623. International Foundation
  for Autonomous Agents and Multiagent Systems, 2015.

\end{thebibliography}

\end{document}